\newif\ifcomments
\newtheorem{definition}{Definition}
\newtheorem{lemma}{Lemma}
\newtheorem{example}{Example}
\newenvironment{proof}{\textit{Proof.}}
\newcommand{\trlinebreak}{\linebreak}
\begin{document}

\title{Types for Location and Data Security \\ in Cloud Environments}

\author{
        {\rm Ivan Gazeau\footnote{LORIA, INRIA Nancy -
     Grand-Est,France}}          
        \and {\rm Tom Chothia\footnote{School of Computer Science, Univ. of Birmingham, UK}}
        \and {\rm Dominic Duggan\footnote{Department of Computer Science,
        Stevens Institute of Technology, USA.}}
}

\date{Stevens Technical Report CS-2017-1}

\maketitle

\begin{abstract}
Cloud service providers are often trusted to be genuine, the damage caused by being discovered to be attacking their own customers outweighs any benefits such attacks could reap. On the other hand, it is expected that some cloud service users may be actively malicious. In such an open system, each location may run code which has been developed independently of other locations (and which may be secret). In this paper, we present a typed language which ensures that the access restrictions put on data on a particular device will be observed by all other devices running typed code. Untyped, compromised devices can still interact with typed devices without being able to violate the policies, except in the case when a policy directly places trust in untyped locations. Importantly, our type system does not need a middleware layer or all users to register with a preexisting PKI, and it allows for devices to dynamically create new identities. The confidentiality property guaranteed by the language is defined for any 
kind of intruder: we consider labeled bisimilarity i.e. an attacker cannot distinguish two scenarios that differ by the change of a protected value. This shows our main result that, for a device that runs well typed code and only places trust in other well typed devices, programming errors cannot cause a data leakage.

\end{abstract}


\section{Introduction}

Organisations commonly trust their cloud providers not to be actively malicious 
but may still need to verify that the cloud service does not make mistakes and does store their data at particular locations.
For example, data protection laws prevent
the storage of certain forms of data outside the European Union.
In order to ensure compliance, a company policy may require that data from a user's device not be
synchronized with a cloud storage provider, unless that provider can certify that the data will not be stored in data centers outside the EU. 
Such checks against inappropriate storage of data can be costly and time consuming, 
sometimes leading to organisations not allowing their employees to use cloud services, 
even though a particular cloud service may be known to be compliant with data handling policies. 


This paper presents a language-based approach to dealing with these kinds of scenarios, of ensuring that data can be shared with cloud services while ensuring compliance with policies on data storage.
The approach is based on a type system that explicitly models trust between cloud services and mobile devices, based on a notion of principals represented at runtime by cryptographic keys.
%
In this language, principals are dynamically associated to (non-disjoint) sets of devices,
and the rights of devices to access data is based on sets of principals (delineating the principals that are allowed to access protected data). 
For instance, if two devices $A$ and $B$ can (individually) act for a principal $P$, while devices $B$ and $C$ can act for principal $Q$,
then the right $\{P,Q\}$ implicitly allows the devices $A$, $B$ and $C$ to access data guarded by this right.
We argue that this two-layer representation of access rights (data guarded by principal rights, and devices acting for principals) is convenient to 
allow principals to share a device, and to allow one principal to use several devices (laptop, mobile phone etc.)

Based on this type system, we present a language that includes most of the primitives necessary for secure imperative programming 
(multi-threading, references, secure channel establishment, and cryptographic ciphers).
A key feature of this language is the ability for new principals to dynamically join the network (in the sense of making network connections to cloud services and other devices) without having to register to any public key infrastructure (PKI) or to use a particular middleware layer\footnote{
  Although middleware for e-commerce (such as CORBA) was popular in the 1990s, that approach was discredited by experience, while SOAP-based approaches have been largely superseded by REST-based Web services, that deliberately eschew the notion of a middleware at least for Internet communication..
}. 

Our threat model assumes some ``honest''
collection of principals (e.g.~the employees of an enterprise) and some collection of devices acting for those principals (e.g.~devices provided to those employees). A device may act for several principals, in the sense that it may issue access requests on behalf of any of the principals that it acts for), while a principal may be associated with several devices.  We describe the devices acting for these principals as \emph{honest devices}, in the sense that they are certified according to the type system presented in this paper to be in conformance with data sharing policies.  
We refer to the corresponding principals for these devices as ``honest'' rather than ``trusted'' because trust management is an orthogonal issue for the scenarios that we consider.  
Our threat model also allows for ``dishonest'' or untyped devices, acting for outside principals, who should not be able to access the data.  These are the attacker devices.
Our security guarantee is that, as long as no honest device provides access to a dishonest principal, the dishonest devices will not be able to obtain any information from any honest devices, unless an honest device has explicitly given the attacker access.

This capability is critical for cloud services.
While it is reasonable to assume that there exists a PKI to certify the identities of cloud providers,
and that cloud providers are trusted by their users,
client devices and their corresponding principals
are unlikely to have certificates.  While an infrastructure for mutual authentication, based on client and server X509 certificates,
 might be provided as part of an enterprise data sharing network, there are difficult issues with extending this trust model to third party cloud service providers.  Furthermore, such a PKI does not provide the level of confidence in conformance with data-sharing protocols, that our approach provides for honest (well-typed) devices.
%
Instead of requiring such a global PKI, the approach described in this article represents principals by cryptographic public keys,
and these are stored on devices like any other values.  Our type system therefore uses a nominal form of dependent types, to reflect these runtime representatives of principal identity to the type system, where access rights on data are tracked through security labels attached to types.

Another contribution of the paper is a security guarantee suitable for our threat model.
%
As stated above, in an open network, we must allow for ``dishonest'' devices that are unchecked (untyped), and potentially malicious.
These devices are able to use any third-party cloud services, including services used by honest devices.
A secure data-sharing system should remain robust to such an intruder.  Our security guarantee is that
if data is protected with rights that only includes honest principals (i.e., that do not include any principal which is associated to an attacker device),
then an attacker cannot learn any information about that data.  In this work, we are focused on confidentiality of the data, and do not consider integrity (that data has not been tampered with by attackers).  There is a notion of integrity in the sense of trust management underlying our approach: Honest principals identify themselves by their public keys, and these keys are to state access restrictions on data, to specify when a device is allowed to ``act for'' a principal, and to validate that communication channels are with honest parties trusted to be type-checked and therefore conformant with data-sharing policies.  This is reflected in several aspects of the communication API, including the establishment of communication channels, generating new principals and transmitting those principals between devices.

A practical difficulty with expressing security guarantees in this setting is that, as principals can be created dynamically, it is not possible to statically check if a variable has a higher or lower security level than some other variable.  Consider the following example:
\begin{example}
 Assume three devices: an honest cloud service with a certified principal $p$, and two (mobile) devices, an honest device $A$ and a malicious device $B$.
 The server code consists of receiving two principal values $p_1$ and $p_2$ from the network before it creates
 two memory locations $x_1$ and $x_2$ where $x_1$ has rights $\{ p , p_1 \}$ while $x_2$ has right $\{ p, p_2\}$.
 A secure location is defined one which cannot be accessed by the attacker.
 If we assume that devices $A$ and $B$ send their principal values (public keys) to the server,
 then depending on which key is received first, either $x_1$ or $x_2$ will be considered secure (only accessible by the honest principal and the cloud provider), while the other
 is explicitly accessible by the attacker.  As usual with information flow control type systems, we propagate these access restrictions through the handling of data by the cloud provider and the devices, ensuring that data protected by the right $\{p,p_i\}$, where $p_i$ is the representative for the honest principal.
\end{example}
Therefore, our security property is a posteriori: once the system has created a memory cell corresponding to some variable,
if the rights associated to this variable at creation time did not include a principal controlled by the attacker, then 
there will never be any leak about the contents of this memory cell to the attacker.
We argue that such a property is suitable for cloud services to increase users' confidence that their data will not be leaked by the service due to a programming error.
Indeed, our system allows us to certify that once some principal creates data, only explicitly authorised principals can obtain information about that data,
by statically checking the code that processes that data.
Verifying the identity of the principals allowed to access the data, and deciding where to place trust among the principals in a distributed system, is an important consideration.  However it properly remains the responsibility of the application written in our language, and a concern which is independent of this type system.  Our approach serves to guarantee proper handling of data among honest principals, once an appropriate trust management system has established who is honest.

Our typed language is intended to be a low level language, without high-level notions such as objects and closures. 
It includes consider references, multi-threading and a realistic application programming interface (API) for distributed communication.
Communications can either be through a secure channel mechanism, implementable using a secure transport layer such as TLS for example,
or through public connections, in which case any device can connect.
The language includes primitives for asymmetric key encryption, since we represent by public keys.
``Possessing'' a secret key, in the sense that the private key of a public-private key is stored in its memory, allows a device to ``act for'' the principal that key represents.  Our approach is similar in philosophy to the Simple Public Key Infrastructure (SPKI), where principals and public keys are considered as synonymous, rather than linking principals to a separate notion of public key representatives.  However, we do not include notions such as delegation that are the central consideration of SPKI, since we explicitly avoid the consideration of trust management, leaving that to applications written using the API that we provide.  This also differentiates our approach from frameworks such as JIF and Fabric, that include delegation of authority to principals based on an assumed trust management infrastructure.  

Nevertheless, there is a notion at least tangentially relegated to delegation of trust in our framework: In order to allow a device to act for more than one principal, our semantics allows a principal to be created on one device and communicated to another device, where it is registered on the receiver device as one of the principals upon whose behalf that device can access data.  For example, a client of a cloud service provider may generate a proxy principal representing that client on the cloud service, and then upload that principal to the cloud service in order to access data that the client is storing on the cloud service.  This ability to share principals across devices is controlled by restrictions established when proxy principals are generated: Such a proxy (client) principal can only be registered on a device that acts for (cloud service) principals that are identified at the point of generation of the proxy.

The security analysis of the type system uses standard techniques from the applied-pi calculus
\cite{AbadiFournet2001}. 
This allows us to prove our correctness property as a non-interference property based on process equivalence, i.e., 
 two systems
differing by one value are indistinguishable by any party that is not allowed to access this value.
The standard pi-calculus includes message-passing with structured values, but does not include an explicit notion of memory (although it can obviously be modeled using processes as references).  Since our language combines message-passing communication and localized stateful memory, we use
the stateful applied pi-calculus \cite{Arapinis2014} as the starting point for our security analysis.
This calculus does not explicitly model location (i.e., the distinction between two processes on the same device and two processes on two distinct devices).
Since this distinction is critical for our security analysis, we add this notion in our calculus.  Nevertheless the proof techniques that we employ are heavily based on those developed for the stateful applied pi-calculus.

The security analysis that we perform expresses that data are secure if 
keys received from other devices are not associated to an attacker.
To formalise this conditional statement, we need more techniques than 
in a standard protocol where data are either secret or public, 
but their status does not depend on the execution.
In our verification in \autoref{sec:annotations}, we introduce an extended syntax that marks
which keys, variables and channels are secure in the current trace.
We then prove that when a new memory location is created with a secure key according to this marking, 
then the attacker cannot distinguish between two scenarios: 
one where the system reduces normally, and another one where the memory location is sometimes altered to another value.   This is the basis for our noninterference property for the security guarantee provided by this approach.  
\iftoggle{techreport}{
}{
The full details of the proofs of correctness for information flow control are provided in the complete version of the paper \cite{tech-report}.
}

In the next section we discuss related work. In \autoref{sec:lang} we present our language, type system and semantics. 
In \autoref{sec:example} we present an extended example 
and in \autoref{sec:result} present our result and outline the proof, 
then we conclude in \autoref{sec:concl}. 
\iftoggle{techreport}{
\autoref{app:other-rules} provides addtional type rules for the language considered in this paper.  \autoref{app:proofs-correctness} considers proofs of correctness omitted from \autoref{sec:result}.  \autoref{sec:ext-rules} considers addtional operational semantics rules for the ``open'' semantics that enables us to reason about interactions with untyped attackers.
}{
}

\section{Related Work}

\paragraph*{Implicit flow}
Implicit information flow properties involve the ability for an attacker to distinguish between two executions.
Previous work that has provided type systems to control implicit information flow \cite{CDLM,Masked08} considered high and low data, 
and this could be extended to a bigger lattice but not to the creation of new principals, 
as the security of a variable is defined statically. 
Zheng and Myers presented an information flow type system that includes dynamic checks on rights \cite{Zheng2005} which can be used, 
for instance, when opening a file.
The Jif Project \cite{JIF} adds security types to a subset of Java, leading to a powerful and expressive language. 
Unlike our work, this other work does not address how to enforce
principal identities and type information 
to be correctly communicated to other locations.

\paragraph*{Security properties on distributed system}
%
%
Work on type security for distributed systems can be distinguished according to the kind of security they aim to provide. 
Muller and Chong present a type system that includes a concept of place \cite{MullerC12} 
and their type system ensures that covert channels between ``places'' cannot leak information. 
Vaughan et al. look at types that can be used to provide evidence based audit  \cite{AURA,JMZ08}. 
Fournet et al. look at adding annotations with a security logic to enforce policies \cite{Fournet2008}. 
Liu and Myers \cite{Liu2014} look at a type system which ensures referential integrity in a distributed setting. 
This work uses a fix lattice of  policy levels, which does not change at runtime. 
The Fabric language \cite{Fabric} provides decentralised, type-enforced security guarantees using a powerful middleware layer for PKI, 
and Morgenstern et al. \cite{Morgenstern2010} extend Agda with security types. 
In contrast, our work allows programs to generate new principals at run-time 
and provides a security property that tracks implicit information flow, without requiring the support of a purpose built middleware layer or global PKI. 
Due to the fact that the attackers in our model can access services in the same way as honest principals, this security property is an adaptation of the bisimulation property which is a strong property introduced in the (s)pi-calculus by \cite{AbadiFournet2001}. 
Bisimilation can be checked for processes by tools like Proverif \cite{Blanchet2001} 
but these kind of tools do not scale up to large systems.


%


%

\paragraph*{Managements of new principals}

Bengtson et al. \cite{Bengtson2011} present a security type system which allows creation of dynamic principals 
in presence of an untyped attacker.  However, this type system  provides only assertion-based security properties of cryptographic protocols.
These  are weaker than non-interference properties as they are expressed on one process instead of comparing two processes.
 \cite{KDLM} considers a framework in which principals can be created at run time (without a global PKI) they prove type soundness rather than a non-interference result. Finally, the DSTAR program \cite{Zeldovich2008} achieves these two goals but is focused on network information 
and relies on local systems to actually analyse implicit flow, which leads to a more coarse system.

\paragraph*{Safety despite compromised principals}
Past work \cite{CD05} has looked at un-typed attackers in a security type system, however this work  only considers a static number of principals, fixed at run time.
Fournet, Gordan and Maffeis \cite{Fournet2007} develop a security type system for a version of the applied pi-calculus extended with locations and annotation. Their type system can enforce complex policies on these annotations, and they show that these policies hold as long as they do not depend on principals that have been  compromised. Unlike our work they assume that the principals are all known to each other and there is a direct mapping from each location to a single principal that controls it. Our work allows principals to be dynamically created, shared between locations and for locations to control multiple principal identities. We argue that this model is a better fit to cloud systems in which users can dynamically create many identities and use them with many services.

\section{Language: Semantics and Type System}
\label{sec:lang}
\begin{figure}
\[
\begin{array}{rcll}
P, Q & ::= &  \nu \activechans. D & \text{devices sharing channels $\activechans$}\\
D & ::= &   \dev{M}{C}  \mid  D & \text{ $C$ running with $M$}\\
& | & 0 & \text{no device}\\
M & ::= & \{ \}  & \text{a memory which maps \dots}\\
& | & \{ x \mapsto v \} \cup M  &\text{\dots a variable}\\
& | & \{ p \mapsto P \} \cup M &\text{\dots a principal name}\\
& | & \{ k' \mapsto \pubKey\}\cup M &\text{\dots a key name}\\
v  & ::= & i & \text{an integer}\\
    & | & \pubKeyOf{k} &  \text{the value of a public key }\\
    & | & \enc{v}{n}{\LR} & \text{a cipher of $v$ with seed $n$}\\
    & | & \enc{\pvalue}{n}{\LR} & \text{an encapsulated principal}\\
    & | & \{v_1, \dots v_n \} & \text{an array of values}\\
    & | & \NotAValue & \text{a special error value}\\
\pvalue & ::= & \prin{\pubKey}{\secKey}{\LR}\!\!\! & \text{a principal value}\\
  \LR & ::= & \{ \pubKey \} \cup \LR  & \text{a set of public keys}\\
 & | & \{~\}\\
 \activechans &::= & \{\} & \\
 &|& \activechans.c &  c :\text{established channel}
\end{array}
\]
\caption{The syntax of devices, values, principals and rights where $\secof{k} / \pubof{k}$ is a secret/public key pair. }
\label{fig:syntax}
\end{figure} 

\subsection{Syntax}
The syntax of our language is given in Figures \ref{fig:syntax}, \ref{fig:types}, \ref{fig:commands} and \ref{fig:expressions}. 
We let $x,y,z$ range over variable names $\VarNames$, $p,p_1,p_2,\ldots$ range over principal names $\PrinNames$, 
$k_1,k_2,\ldots$ range over public key names $\KeyNames$
and $c,c_1,c_2,\ldots$ range over channel names $\ChanNames$.
A system $\nu\activechans.D_1 \mid \ldots \mid D_n$
is a set of \emph{devices} that run in parallel and that communicates through channels of $\activechans$.

The list $\activechans$ records which channel names correspond to establish channels (globally bound).  
Channel also appear in connect and accept commands in the devices, and these are added to $\activechans$ once the channel is opened.
When there are no established channels, we omit the $\nu \{\}.$ prefix.
Note that to guarantee freshness of keys and nonce used in encryption, 
we might also provide global binders $\bar{k}$ and $\bar{n}$ in addition to $\activechans$.
However this guarantee is straightforward to provide, using
``freshness'' predicates, so for readability reasons we omit explicit
binders for generated keys and nonces. 

A device consists of a \emph{memory} $\mem$ and a \emph{command} $C$.
Memories associate variable names with \emph{values}, key names with keys and principal names with \emph{principals}.

Given a nonce $k$ from some assumed set of key nonces, 
we define $(\pubof{k},\secof{k})$ as the public private key pair generated from $k$, where $\pubof{.}$ and $\secof{.}$ are two constructors. 
A principal $\pvalue$ is a tuple $\prin{\secof{k}}{\pubof{k}}{\LR}$ which contains a key pair $(\secof{k},\pubof{k})$, 
together with a (possibly empty) set of public key values $\LR$. 
When a device has $\pvalue$ in its memory, it is allowed to act for $\pvalue$.
Devices that can act as a principal $\pvalue_0$, whose public key $\pubKeyOf{k_0}$ is one of the public keys in $\LR$, 
are allowed to add $\pvalue$ to their memory.

Each variable $x$ in $\VarNames$
represents a reference to a value
i.e. variables are mutable.
At declaration time, a reference is associated with some rights $R$, which cannot be revoked or changed. 
Making all variables mutable is convenient for our security analysis:
it allows us to define non interference as the property that a parallel process 
that alters the value of a high-level variable cannot be detected by an attacker. 
If variables were not mutable we would have to consider a much more complex security property and proof. 
In addition, to avoid to consider scope of variables, we assume that a command never declares twice the same name
for variables, channels, keys and principals.

Types (\autoref{fig:types}) for these variables consists of a pure-type $S$ 
which indicates the base type for the value of the variable (e.g. integer, public key, cipher, etc.)
and a \emph{label} (or \emph{right}) $R$ 
which indicates the principals who are allowed to access to the variable.
A label can be either $\bot$ i.e. the variable is public 
or a set which contains public key names: $k \in \KeyNames$ and $\publicKey{p}$ 
where $p \in \PrinNames$.

Key names are declared by a command $\letk{k}{x} C$.
This command copies the value of the reference $x$ into $k$ which represents
a public key (not a reference to a public key).
The type system \eqref{type:let} ensures that $x$  in this command is  an unrestricted public key: $x : \pubKeyType_\bot$.


Channel types are declared when they are established, we have two kinds of channel: public and secure channels.
Their types have syntax $\Chan(S_{R_1})_{R_2}$ where $S_{R_1}$ is the type of values that are past over the channel 
and $R_2$ expresses 
which principals are allowed to know the existence of $c$ and when communication on this channel takes place.

\begin{figure}
\[
\begin{array}{rcll}
S & ::= & \pubKeyType & \text{a public key} \\
& | & \privKeyType & \text{an encapsulated private key}\\
& | & Int & \text{integers} \\
& | & \cipherType{S} & \text{encrypted data of type }S\\
& | & \arrayType{S} & \text{an array of type }S \\
R & ::= & RS  & \text{a set of rights}\\
  & | & \bot & \text{no restriction}\\
RS & ::= & \{~ \} & \text{the empty set }\\
 & | & \{ K \} \cup RS & \text{$K$ added to a set of rights} \\
K & ::= & k_1,k_2,\ldots,  & \text{ public key names }\\
  & | & \{ \publicKey{p} \} & \text{the public key of a principal}\\

\end{array}
\]
\caption{The types syntax}
 \label{fig:types}
\end{figure}




\begin{figure}

\[
\begin{array}{rcl}
C & ::= & \cond{e_1}{e_2}{C_1}{C_2}
\\& | & \new{x}{S_R}{e};~C  
\\& | & \assign{x}{e};~C 
\\& | & \assign{x[e_1]}{e_2};~C 
\\& | & \letk{k}{x}~C 
\\& | & \paral{C_1}{C_2} 
\\& | &\nop 
\\& | &\bang{C} 
\\& | & \connectPub{c}{\chanType{S_{\bot}}{\bot}};~C 
\\& | & \acceptPub{c}{\chanType{S_{\bot}}{\bot}};~C
\\& | & \connectCCert{c}{\chanType{S_R}{R'}}{k}{p};~C
\\& | & \acceptCCert{c}{\chanType{S_R}{R'}}{k}{p};~C
\\& | & \outputChan{e}{c};~C
\\& | & \inputChanII{c}{x};~C
\\& | &\sync\{G\};C_2 
\\& | & \newPrin{p}{RS}; C 
\\& | & \decrypt{p}{e}{x}{S_{RS}}{C_1}{C_2} 
\\& | & \register{p_1}{e}{p_2}{C_1}{C_2} 
\end{array}
\]
\caption{The syntax of the commands.}
\label{fig:commands}
\end{figure}

\subsection{Semantics}
The semantics of the system is defined as a small-step semantics for commands
and as a big-step semantics for expressions.
Devices run concurrently with synchronized communication between them.
Inside each device, all parallel threads run concurrently and communicate 
through the shared memory of the device (since memory is mutable).
The main reduction rules are presented in \autoref{fig:commands-semantics}
for commands and in \autoref{fig:semantics-rules} for main expressions.
When a command that declares a new variable is reduced, the name is 
replaced by a fresh name that represents a pointer to the location in the memory where the value has been stored.
The evaluation of expressions has the form $\evaluates{\mem}{e}{v}$: 
the evaluation of $e$ with memory $\mem$ returns the value $v$.
We note that this is different from some other calculi, in which variables are not references, and are replaced with a value when declared. Our correctness statement below depends on the use of references and, since we have a memory mechanism, we prefer to store the key names and principal names in memory and instead of applying a substitution,
the names are evaluated when a command reduces 
(cf the \eqref{red:rights} rules). 


Principals are generated using the $\newPrin{p}{RS};C$ command, 
where $RS$ are the keys to use to protect the principal (and therefore cannot $\bot$).
%
%
The rule for this command \eqref{sem:newPrin} generates a fresh key pair $(\pubof{k},\secof{k})$
and stores the principal $\prinv{k}{\mathit{LR}}$ 
at a new location $p'$ in the memory,
where $\evaluates{\mem}{RS}{\mathit{LR}}$.
To bootstrap the creation of these principals, they can be declared with $RS=\{\}$; 
such principal identities can only be used on a single device, they cannot be sent over channels. 
Additionally some devices may start off with the public keys of some trusted parties, 
i.e., the same assumption as TLS. 
This too lends itself well to cloud systems, in which web browsers come with a number of trusted certificates.

Communication between devices uses Java like channels: a channel is first established
then it can be used to input and output values.
The channel establishment is done by substituting the channel names in both devices by a unique fresh channel name  
added to $\activechans$ (we assume that initial channel names and active channel names 
range over distinct sub-domains of $\ChanNames$ to avoid collision).
Note that channels do not name the sending and receiving device as these may be spoofed by an attacker,
however, to get a more tuneable system, it would be a simple extension to add a port number which would restrict possible connections.
For secure channels \eqref{red:open-for}, in a similar way to TLS with client certificates, 
both devices must provide the public key $\pubof{k}$ of who they want to connect to. They must also provide the principal (which includes a private key) to identify themselves to the other party.
%
To set up a secure channel, the client and the server also have to 
ensure that they are considering the same rights for the channel.
For that they have to exchange 
the value of their channel right $\mem(\Chan(S_{R_1})_{R_2})$ 
and make sure that it corresponds to the distant right value $\mem'(\Chan(S_{R'_1})_{R'_2})$.
Indeed, even if type-checking is static inside a device, type-checking 
has to be dynamic between distinct devices since programs are type-checked on each device and not globally.

\begin{figure}
\[
\begin{array}{rcll}
e & ::= & x,y,\dots & \text{variable names}\\
    & | &\publicKey{p} & \text{the public key of $p$}\\ 
    & | & \release{p} & \text{pack a principal}\\
    & | & \encE{e}{RS} & \text{encrypt some data}\\
    & | & e_1 \oplus e_2  & \text{where $\oplus$ is $+,-,\times,\dots$}\\
    & | & \{e_1, \dots e_n \} & \text{an array of expressions}\\
    & | & x[e] & \text{an element of an array} \\
    & | & i & \text{an integer } i \in \mathbb{Z}\\
\end{array}
\]
\caption{The syntax of the expressions }
\label{fig:expressions}
\end{figure}
\newcommand{\front}{\nu \activechans. D ~|~ }
\newcommand{\frontc}{\nu \activechans.c. D ~|~ }
\newcommand{\athread}[1]{\paral{C'}{#1}}
\newcommand{\athreadbis}[1]{\paral{C''}{#1}}

\begin{figure}

\begin{gather}
\inferrule{\text{fresh}(\pubKeyOf{k},\secKeyOf{k}) \quad \text{fresh}(p') \\ M(RS)=\LR}{
\front \dev{M}{\athread{\newPrin{p}{RS};C}}
\\ \rightarrow \front \dev{M \cup \{p' \mapsto \prinv{k}{\LR}\} }{\athread{C\sub{p'}{p}}} }\tag{newPrin\_S}\label{sem:newPrin}
\\
 \inferrule{
 M(e) = v \\
 \fresh(y)
 }{
 \front \dev{M}{\athread{\new{x}{S_R}{e}; ~C}}
\\ \rightarrow \front \dev{M \cup \{y \mapsto v\}}{\athread{C\{y / x \}}}
 }\tag{new\_S} \label{red:new_var}
\\ 
 \inferrule{
M(e) = v_2
 }{
\front \dev{M\cup\{x \mapsto v_1\}}{\athread{\assign{x}{e}; ~C}}
\\ \rightarrow  \front \dev{M \cup \{x \mapsto v_2\}}{\athread{C}}
 }\tag{assign\_S} \label{red:assign}
\\
\inferrule{
\fresh(k') \\ M(e)=\pubKey
}{ \front \dev{M}{\athread{\letk{k}{e} ~C}}
\\ \rightarrow \front \dev{M \cup \{k' \mapsto \pubKey\}}{\athread{C\{k' / k \}}}
}\tag{deref\_S}\label{red:let}
\\
 \inferrule{
M(p) = \prin{k^+}{k^-}{\LR_2} \\
M(e) = \enc{v}{n}{\LR} \\
k^+ \in \LR\\
M(RS)\subseteq \LR \\
\text{fresh}(y)
 }{ 
 \front\! \dev{M\!}{\!\athread{\!\decrypt{p}{\!e}{x}{S_{RS}}{C_1}{C_2}} }
\\  \rightarrow 
 \front \dev{M \cup \{y \mapsto v\}}{\athread{C_1\sub{y}{x}}}
} \tag{dec\_true\_S}\label{red:decT}
\\
\inferrule{  \fresh(c)
}{
 \front \dev{M_1}{\athread{\connectPub{c_1}{\chanType{S_\bot}{\bot}}; C_1}}
\\ ~|~ \dev{M_2}{\athreadbis{\acceptPub{c_2}{\chanType{S_\bot}{\bot}}; C_2}}
\\ \rightarrow \frontc\!
    \dev{M_1}{\athread{C_1 \{c/c_1\}}}
 ~|~ \dev{M_2}{\athreadbis{C_2 \{c/c_2\}}}
 } \tag{open\_public\_S}\label{red:open-pub} 
\\
\inferrule{ \fresh(c) \\
M_1(p_s) = \prin{\secof{k_1}}{\pubof{k_1}}{\LR_s} \\
M_2(p_c) = \prin{\secof{k_2}}{\pubof{k_2}}{\LR_c} \\
M_1(R_1) =  M_2(R_2) \\
M_1(R'_1) =  M_2(R'_2) \\
M_1(k_c) = \pubof{k_2} \\
M_2(k_s) = \pubof{k_1} \\
 }{ 
   \front \\ \dev{M_1}{\athread{\acceptCCert{c_1}{\chanType{S_{R_1}}{R_2}}{k_c}{p_s}; C_1}}
\\   | \dev{M_2}{\athreadbis{\connectCCert{c_2}{\chanType{S_{R'_1}}{R'_2}}{k_s}{p_c}; C_2}}
\\ \rightarrow  
   \frontc \!\dev{M_1}{\athread{C_1 \{c/c_1\}}}
 | \dev{M_2}{\athreadbis{C_2 \{c/c_2\}}}
}
 \tag{open\_priv\_S}  \label{red:open-for}
\\
\inferrule{
 M_1(e) = v  \\
 \text{fresh}(y)
 }{ 
 \front \dev{M_1}{\athread{\outputChan{e}{c};C_1}}
\\ \mid \dev{M_2}{\athreadbis{\inputChanII{c}{x};C_2}}
\\ \rightarrow
 \front \dev{M_1}{\athread{C_1}}
\\ \mid  \dev{M_2 \cup \{ y \mapsto v\}}{\athreadbis{C_2 \{y / x\}}}
 }\tag{i/o\_S} \label{red:i/o}
\\
\inferrule{
M(e) = \enc{\prin{\pubKey_1}{\secKey_1}{\LR_1}}{n}{\LR_1} \\
M(p_2) = \prin{\pubKey_2}{\secKey_2}{\LR_2} \\
     k^+_2 \in \LR_1 \\
     \text{fresh}(p_3)
 }{
 \front \dev{M }{\athread{\register{p_2}{e}{p_1}{C_1}{C_2}}}
 \\ \rightarrow \front \\ \dev{M\cup \{ p_3 \mapsto \prin{\pubKey_1}{\secKey_1}{\LR_1} \}}{\athread{C_1\sub{p_3}{p_1}}}
}\tag{register\_true\_S}\label{red:registerT}
\\
\inferrule{
M(e) = \enc{\prin{\pubKey_1}{\secKey_1}{\LR_1}}{n}{\LR_1} \\
M(p_2) = \prin{\pubKey_2}{\secKey_2}{\LR_2} \\
     k^+_2 \not\in \LR_1
 }{
 \front \dev{M }{\athread{\register{p_2}{e}{p_1}{C_1}{C_2}}}
 \\ \rightarrow \front \dev{M}{\athread{C_2}}
}\tag{register\_false\_S}\label{red:registerF}
\end{gather}
\caption{Main command rules}\label{fig:commands-semantics}
\end{figure}
\begin{figure}
 \begin{gather}
 \inferrule{
 M(p) =\prin{\pubKey}{\secKey}{\LR} \\
 \LR \neq \{\} \\
 \text{fresh}(n)
 }{ 
 M(\release{p}) = \enc{\prin{\pubKey}{\secKey}{\LR}}{n}{\LR}
} \tag{release\_E}\label{rede:release}
\\
 \inferrule{
 \text{fresh}(n) \\
 M(e) = v \\
 M(RS) = \LR
 }{ 
 M(\encE{e}{RS}) = \enc{v}{n}{\LR}
} \tag{enc\_E}\label{red:enc}
\\
  \inferrule{
 M(e_1) = v_1 \\
 M(e_2) = v_2 \\
 v_1 \in \mathbb{Z} \\
 v_2 \in \mathbb{Z} 
 }{ 
 M(e_1+e_2) = v_1 + v_2
} \tag{+\_E}\label{red:sum}
\\
  \inferrule{
 M(e_1) = v_1 \\
 M(e_2) = v_2 \\
 v_1 \notin \mathbb{Z} \vee
 v_2 \notin \mathbb{Z} 
 }{ 
 M(e_1 + e_2) = \NotAValue
} \tag{error\_+\_E}\label{red:sum_err}
 \\
\inferrule{
\forall i, 1 \leq i \leq n, \quad \loc{p_i}{\prinv{k_i}{\LR_i}} \in M
\\ \forall i, n+1 \leq i \leq n+m, \quad \loc{k'_i}{\pubKeyOf{k_i}} \in M
}{ M(\{ \publicKey{p_1},\ldots,\publicKey{p_n}, k'_{n+1},\ldots,k'_{n+m} \})
\\ = \{\pubKeyOf{k_1},\ldots,\pubKeyOf{k_{n+m}}\}
}
\tag{rights\_RS}\label{red:rights}
\end{gather}

\caption{Main semantic rules for expressions and rights}
\label{fig:semantics-rules}
\end{figure}

\begin{example}[Principal set up]
Assume that a cloud service $C$ has a public key $\pubof{c}$, which is known to Alice and Bob,
and that Alice wants to share some private data with Bob using this cloud service.
Alice can do this using the code shown in \autoref{fig:Alice}.
Alice starts by generating a key pair $(\secof{a},\pubof{a})$.
As neither Alice nor Bob have certificates, Bob just sends his key publicly to Alice over a public channel.
Alice receives it, and creates a new variable to be shared with Bob and the cloud service.
She then opens a secure channel with the cloud
that is typed to allow data of type $\{\pubof{a},\pubof{b},\pubof{c}\}$, the $\bot$ right on this channel indicates that, while the data on the channel must be kept confidential, the knowledge that some value has been sent is not.
%
%
This fragment of code does not authenticate Bob, this could be done using another protocol, or offline, but we will show that if the device that sent her this key, and the cloud server, both run well typed code, then she is guaranteed that the secret will only be shared by the device sending the key, the cloud server and herself.
%
She knows that no leak can come from, for instance, bad code design on the cloud device.
\begin{figure}
 \[
 \dev{\loc{k_c}{\pubof{c}}}{
  \begin{array}{l}
    \newPrin{A}{\{\}};\\
  \acceptPub{\mathit{c}}{\chanTypePub{\pubKeyType_{\bot}}};\\
 \inputChanII{c}{x_b};\\
 \letk{k_b}{x_b} \\
 \new{secret}{ \Int_{\{k_c,\publicKey{A},k_b\}}}{42};\\
 \connectPub{upload}{\chanTypePub{\Int_{\{k_c,\publicKey{A},k_b\}}}}\\
 ~~~~~\TO k_c \AS \publicKey{A};\\
 \outputChan{secret}{upload};\\
  \end{array}
   }
 \]
\caption{Alice sharing data with Bob using a cloud service. N.B. this code does not authenticate Bob.}\label{fig:Alice}
\end{figure}

\end{example}

The $\encE{e}{RS}$ expression, governed by the \eqref{red:enc} rule, encrypts the evaluation of $e$ for each of the public keys $\pubof{k_i}$ 
named in $RS$, i.e., 
anyone that has a single private key corresponding to any $\pubof{k_i}$ can decrypt it, 
the set of all $\pubof{k_i}$ is also included in the encryption. 
We use randomised encryption to avoid leakage that would occur otherwise when
 the same value is encrypted twice, and we model this by including a fresh nonce in the encryption.
The $\decrypt{p}{e}{x}{S_{RS}}{C_1}{C_2}$ command reduces successfully  \eqref{red:decT}
when $e$ evaluates to a ciphertext $\enc{v}{n}{RS}$ that can be opened by the secret key of $p$
and that the $\LR$, which is packed into the encryption, is a subset of the evaluation of $RS$. 

The $\mathbf{release}(p)$ expression reduces by encrypting the principal $p$ for each of the set of public keys representing the principals that can access it.
It is the only way to produce a value which contains a secret key and therefore to send private keys through a channel.
The $\mathbf{register}$ command behaves as $\mathbf{decrypt}$ except that it deals with encrypted principals instead of encrypted values.

All other semantics rules are standard except that instead of returning run-time error (division by $0$, illegal offset index etc.)
 expression returns a special value $\NotAValue$.
This feature is critical to guarantee the security of our system.
Indeed, we allow a device to evaluate expressions with secure variables and then to do an output on a public channel.
This scenario is safe only if we can ensure that no expression will block any thread.
Note that the attacker can also send values with some type through a channel of another type,
consequently run time type errors can also occur.

Finally, the command $\mathbf{synchronised}\{C\}$ executes a command $C$ with no communication or interleaving of other processes. This is useful to avoid race conditions.

\newcommand{\wfr}[1]{\GammaPrin; \GammaKey \vdash #1}
\begin{figure}
\begin{gather}
\inferrule{
\wfr{RS}\\ \pc = \bot \\  \pc;\GammaPrin \cup \{ p\}; \GammaKey; \GammaChan;  \Gamma  \vdash C }{  
\pc;\GammaPrin ;\GammaKey; \GammaChan;  \Gamma \vdash \newPrin{p}{RS};C }\tag{newPrin\_T}\label{type:newPrin}
\\
\inferrule{
\wfr{R_1}\\
   \typeVar{e : S_{R_2}} \\
  \rightsComp{R_1}{ \pc \cap R_2}\\
  \exists p \in \GammaPrin .\publicKey{p} \in R_1 \\
  \pc;\GammaPrin;\GammaKey; \GammaChan;  \Gamma \cup \{ x : S_{R_1}\} \vdash C 
}{
\pc;\GammaPrin;\GammaKey; \GammaChan;  \Gamma \vdash \new{x}{S_{R_1}}{e}; C } \tag{new\_T}\label{type:new}
\\
\inferrule{
   \typeVar{x : S_{R_1}} \quad 
   \typeVar{e : S_{R_2}} \\
   \rightsComp{ R_1 }{ \pc \cap R_2 } \quad
  \pc;\GammaPrin; \GammaKey; \GammaChan;  \Gamma \vdash C 
}{
\pc;\GammaPrin;\GammaKey ; \Gamma \vdash  \assign{x}{e}; ~C } \tag{assign\_T}\label{type:assign}
\\
\inferrule{
	\pc = \bot \\
   \typeVar{x : \pubKeyType_{\bot}} \\
   \pc;\GammaPrin; \GammaKey \cup \{ k \}; \GammaChan;  \Gamma \vdash C 
}{
\pc;\GammaPrin; \GammaKey; \GammaChan;  \Gamma \vdash \letk{k}{x} C
}\tag{deref\_T}\label{type:let}
\\
\inferrule{ 
   \typeVar{e_1 : S_{R_1}} \\
   \typeVar{e_2 : S_{R_2}} \\
  \pc \cap R_1 \cap R_2; \GammaPrin; \GammaKey; \GammaChan;\Gamma \vdash C_1 \\
  \pc \cap R_1 \cap R_2; \GammaPrin; \GammaKey; \GammaChan;\Gamma \vdash C_2 
}{ \pc; \GammaPrin; \GammaKey; \GammaChan;  \Gamma \vdash \cond{e_1}{e_2}{C_1}{ C_2}} \tag{if\_T}\label{type:if}
\\
\inferrule{
\wfr{RS_1} \\
 \publicKey{p} \in RS_1 \\
 p \in \GammaPrin \\
\typeVar{e : \cipherType{S}_{R_2}} \\
\rightsComp{RS_1}{(R_2\cap \pc)} \\
  \pc \cap R_2;\GammaPrin; \GammaKey; \GammaChan;  \Gamma \cup \{ x : S_{RS_1}\} \vdash C_1  \\ 
 \pc \cap R_2;\GammaPrin; \GammaKey; \GammaChan;  \Gamma \vdash C_2 
}{
\pc;\GammaPrin; \GammaKey; \GammaChan;  \Gamma \vdash \\ \decrypt{p}{e}{x}{S_{RS_1}}{C_1}{C_2} 
} \tag{dec\_T}\label{type:dec}
\\
\inferrule{
 \pc = \bot \\
  \pc;\GammaPrin; \GammaKey; \GammaChan \cup \{ c \!:\!\!\chanType{S_\bot}{\bot}\}; \Gamma \vdash C 
}{
\pc;\GammaPrin; \GammaKey; \GammaChan;  \Gamma \vdash \connectPub{c}{\chanType{S_\bot}{\bot}}; ~C }   
\tag{connect\_1\_T} \label{type:connectPub}
\\ 
\inferrule{ 
\wfr{R_1}\\ \wfr{R_2}\\
p \in \GammaPrin \\
k \in \GammaKey \\
\rightsComp{  \{\publicKey{p},k\} }{ R_1 \subseteq R_2 \subseteq \pc }   \\
 R_2;\GammaPrin; \GammaKey; \GammaChan \cup \{ c : \chanType{S_{R_1}}{R_2}\};  \Gamma \vdash C 
}{\pc;\GammaPrin; \GammaKey; \GammaChan;  \Gamma \vdash \connectCCert{c}{\chanType{S_{R_1}}{R_2}}{k}{p} ;~ C }
\tag{connect\_2\_T}\label{type:open-cert-client}
\\ 
\inferrule{
   \typeVar{e : S_{R_3}}  \\ 
   c : \chanType{S_{R_1}}{R_2} \in \GammaChan \\
   \pc = R_2 \\
   \rightsComp{R_1}{R_3} \\
  \pc;\GammaPrin; \GammaKey; \GammaChan;  \Gamma \vdash C 
}{
\pc;\GammaPrin; \GammaKey; \GammaChan;  \Gamma \vdash \outputChan{e}{c}; ~C } 
\tag{output\_T}\label{type:output}
\\
\inferrule{
c : \chanType{S_{R_1}}{R_2} \in \GammaChan \\
 R_2 = \pc \\
\pc; \GammaPrin; \GammaKey; \GammaChan;  \Gamma \cup \{ x : S_{R_1}\}  \vdash C 
}{
\pc;\GammaPrin; \GammaKey; \GammaChan;  \Gamma \vdash \inputChanII{c}{x}; ~C } 
\tag{input\_T}\label{type:input} 
\\ 
\inferrule{
         p_2 \in \GammaPrin \\
	\typeVar{e \!:\! \privKeyType_{\bot}} \\
	\pc;\GammaPrin \cup \{ p_1 \}; \GammaKey; \GammaChan;  \Gamma \vdash C_1 \\
	\pc;\GammaPrin; \GammaKey; \GammaChan;  \Gamma \vdash C_2 \\
	\pc = \bot
}{
    \pc;\GammaPrin; \GammaKey; \GammaChan;  \Gamma \vdash\register{p_2}{e}{p_1}{C_1}{C_2} 
} 
\tag{register\_T} \label{type:register} 
\end{gather}
\caption{Typing rules for main commands}
\label{fig:types-rules-cmd}
\end{figure}

\begin{figure}
\begin{gather}
\inferrule{
\wfr{RS}\\
\typeVar{e : S_R} \\
 \rightsComp{ RS }{ R }
}{
\typeVar{ \encE{e}{RS}:\cipherType{S}_{\bot}}} 
\tag{enc\_T}\label{type:enc} 
\\
\inferrule{ p \in \GammaPrin
}{
\typeVar{\release{p} : \privKeyType_\bot}}
\tag{release\_T} \label{type:release} 
\\[2mm]
\inferrule{
\forall i, 1 \leq i \leq n, \quad p_i \in \GammaPrin
\\ \forall i, 1 \leq i \leq m, \quad  k_i \in \GammaKey
}{
\GammaPrin; \GammaKey \vdash \{ \publicKey{p_1},\ldots,\publicKey{p_n}, k_{1},\ldots,k_{m} \}
}
\tag{rights\_T}\label{type:rights}
\end{gather}
\caption{Types rules for non standard expressions and rights}
\label{fig:types-rules-expr}
\end{figure}

\subsection{Types}

The type judgment for expressions takes the form $\GammaPrin ;\GammaKey; \Gamma \vdash e : S_R$
and the type judgment for commands takes the form $\pc; \GammaPrin ;\GammaKey; \GammaChan; \Gamma \vdash C$
where $\Gamma$ is a mapping from variable names to types $S_R$,
$\GammaChan$ from channel names to channel type $\Chan(S_{R_1})_{R_2}$, 
$\GammaPrin$ is a set of principal names,
$\GammaKey$ is a set of public key names,
and where $\pc$ is a right of form $R$ called the \emph{program-counter}. The program counter allows to analyse programs for indirect secure information flow~\cite{Denning1977}.
\autoref{fig:types-rules-cmd} defines the main type rules for commands and \autoref{fig:types-rules-expr} defines the rules for expressions and rights.


In many typing judgments, we use a condition $\rightsComp{R_1}{R_2}$ that states that $R_1$ is more confidential than $R_2$.
The predicate $\rightsComp{R_1}{R_2}$ holds either when $R_2 = \bot$ or when $R_1$ is a syntactical subset of $R_2$
(no aliasing). For instance, we have $\rightsComp{\{k_2,\pub(p)\}}{\{k_1,\pub(p),k_2\}}$ 
and ${\{k_1,k_2\}}\varsubsetneq {\{k_2\}}$ even if $k_1$ and $k_2$ map to the same key in memory.  
We also use $R_1 \cap R_2$ to define the syntactic intersection of the sets $R_1$ and $R_2$ 
(which is $R_2$ if $R_1=\bot$). 

\paragraph*{Types rules for new principals and variables} The typing rule for principal declaration \eqref{type:newPrin} only allows the program counter to be bottom.
 This restriction avoids the situation in which a variable with a right including this principal
might be less confidential than the principal itself.

The new variable declaration \eqref{type:new} checks that the rights $R_1$ to access the new variable $x$
are more restrictive than (a subset of) the rights $R_2$ of the expression being assigned
and of the program counter $\pc$. 
We also ensure that one of the principal in $R_1$ belongs to $\GammaPrin$.
The type rule for assignment \eqref{type:assign} ensures that high security values cannot be assigned to lower security variables.

While the semantics for new principals \eqref{sem:newPrin} stores the rights set $\LR$ dynamically
(as $\LR$ is only used when the principal is sent to another device),
the semantics for managing variables \eqref{red:new_var} does not consider them: 
their confidentiality is entirely provided by the type system.

\begin{example} We consider the following piece of code in which two new principals are declared, 
and both Alice and Bob may know the value of $y$, but only Alice may know the value of $x$:
\[
\begin{array}{ll}
\langle \{\} \blacktriangleright & \newPrin{Alice}{\{\}};\\
& \newPrin{Bob}{\{\}};\\
& \new{x}{\mathit{Int}_{\{\publicKey{Alice}\}}}{5};\\
& \new{y}{\mathit{Int}_{\{\publicKey{Alice},\publicKey{Bob}\}}}{7};\\
&\assign{x}{y};\\
&\mathbf{if}(x\boldsymbol{=}1)~\mathbf{then}~\assign{y}{1};  
\\
\rangle
\end{array}
\]

Here the assignment of $y$ to $x$ should be allowed because $x$ is protected by rights more confidential than $y$.
However, in the last line, the value of $y$ (which Bob can read) 
leaks some information about the value of $x$ (which Bob should not be able to read). 
Therefore, this is an unsafe command and it cannot be typed.
\end{example}

\paragraph*{Types rules for encryption} 
The type rule for encrypting values \eqref{type:enc} verifies 
that the encrypted value is less confidential than the set of keys used for encryption, 
it then types the ciphertext as a public value, i.e., 
encryption removes the type restrictions on a value while ensuring that the encryption provides at least as much protection. 
We note that if the encrypting key depends on non-public data, 
then the \progc would not be public, which would ensured that the ciphertext was not stored in a public variable. 
Hence the use of restricted keys will not leak information.

The corresponding decryption rule \eqref{type:dec}
verifies that the principal $p$ used to decrypt the cipher is valid ($p \in \GammaPrin$)
and is consistent with the rights of the decrypted value.
As the knowledge of which keys has been used to encrypt is protected with the rights $R_2$
of the cipher, the success of the decryption also depends on $R_2$.
Therefore, the \progc has to be at least as high as $R_2$ when typing  the continuation.
Finally, as with an assignment, the rule enforces 
that the created variable does not have a type that is more confidential than the \progc.

\paragraph*{Types rules for public channels} Typing rules for public channels ensures that these are only of type public and, when they are used, the program counter is $\bot$.

\begin{example}
The following system illustrates the use of public channels and encryption:
\[\!\!
\begin{array}{ll}
&\langle \{ Alice \mapsto \prin{k^+_a}{k^-_a}{ \{ \}},bobPub \mapsto k^+_b \}\} \blacktriangleright \\
&~~~~\new{x}{\Int_{\{Alice,bobPub\}}}{7};\\
&~~~~ \mathbf{connect}~c:\chanTypePub{\cipherType{\Int}_{\bot}};\\
&~~~~\outputChan{\encE{x}{\{\pub(Alice),bobPub\}}}{c};\\
&~~~~\inputChanII{c}{e};\\
&~~~~ \decryptThen{Alice}{e}{xInc}{\Int_{\{\pub(Alice),bobPub\}}}\\
&~~~~ \assign{x}{xInc};\rangle\\

|& \langle \{Bob \mapsto \prin{k^+_b}{k^-_b}{ \{ \}},alicePub \mapsto k_a^+ \} \blacktriangleright \\
&~~~~!~\mathbf{accept}~c:\chanTypePub{\cipherType{\Int}_{\bot}};\\
&~~~~\inputChanII{c}{z};\\
&~~~~\decryptThen{Bob}{z}{w}{\Int_{\{\pub(Bob),alicePub\}}}\\
&~~~~\outputChan{\encE{w+1}{\{\pub(Bob),alicePub\}}}{c};\rangle\\
\end{array}\!\!
\]

Alice and Bob start off knowing each other's public keys, and Bob provides a service to add one to any number sent by Alice. 
The variable $x$ is restricted so that only Alice and Bob can know the value. 
Encrypting this value removes these type protections, so that it can be sent across the public channel $c$. 
On Bob's device decrypting with Bob's private key replaces these type protections. 
\end{example}

\paragraph*{Types rules for secure channels}
For secure channels (client and server side), the rule \eqref{type:open-cert-client} enforces that 
the principal who is creating the channel and the principal being connected to, 
 both have the right to access the data passed over the channel, 
hence $\{\pub(p),k\} \subseteq R_1$. 
In order to ensure that the possible side effects caused by using the channel are not more restrictive than the data passed over the channel 
we need that $R_1 \subseteq R_2$. 
The $R_2 \subseteq \pc$ condition stops side channel leakage to the device receiving the connection at the time the channel is opened.
Finally, the \progc is set to $R_2$ once connected: 
this ensures both devices have the same \progc. Without this we would have implicit leakage, for instance, one device with a public \progc could wait for a message from a device with a non public \progc, then outputs something on a insecure channel.
As the sending of the message may depend on a value protected by the \progc,
this would result in a leakage.

We make the strong assumption that the existence of a connection attempt can only be detected 
by someone with the correct private key to match the public key used to set up the connection. 
If we assumed a stronger attacker, who could observe all connection attempts, we will need the condition that $\pc = \bot$
at least for the client. 

The output rule \eqref{type:output} has two main restrictions: 
one which verifies that the device still has the \progc agreed with the corresponding device,
and $R_1 \subseteq R_3$ i.e., 
the type on the channel is no less restrictive than the type of data sent.
This is because when this data is received it will be treated as data of type $R_1$.

For channel creation the restriction on the channel must be at least as restrictive as the program counter. 
For input and output we must additionally check that the program counter has not become more restrictive,
hence requiring that the channel restriction and the program counter are equal, i.e., 
testing the value of a high level piece of data and then sending data over a lower level channel is forbidden.


\begin{example}
 The different roles of $R_1$ and $R_2$ is illustrated in these two programs.
 \[\!\!
\begin{array}{l}
 \new{x}{\Int_{\{\publicKey{Alice},bob\}}}{7};\\
 \acceptCCert{c}{\Chan(\Int_{\{\publicKey{Alice},bob\}})_{\{\publicKey{Alice},bob\}}\\~~~~}{bob}{Alice};\\
 \mathbf{if}(x > 10)\{\\
 ~~\assign{x}{x+1};\\
 ~~\outputChan{x}{c};\}
\end{array}
\]
\[
\begin{array}{l}
 \new{x}{\Int_{\{\publicKey{Alice},bob\}}}{7};\\
 \acceptCCert{c}{\Chan(\Int_{\{\publicKey{Alice},bob\}})_{\bot}}{bob}{Alice};\\
 ~~ \{ \mathbf{if}(x > 10)\{\\
 ~~~~\assign{x}{x+1};~\}\}\\
 ~ | ~
 ~~\outputChan{x}{c};
\end{array}\!\!
\]

Both programs aim at sending $x$ to Bob, which is a secret shared by Alice and Bob.
In the first case, the sending of $x$ depends on its value:
therefore the communication should can only be on a channel with rights 
$\Chan(\Int_{\{\publicKey{Alice},bob\}})_{\{\publicKey{Alice},bob\}}$.
In the other example, even if the value of $x$ is updated due to a parallel thread that has a non public \progc,
the sending of $x$ is unconditional.

Note that the language does not have \emph{``\{if condition then $C$ \}; $C''$} structure as this construct would not be safe:
if $C$ waits infinitely for a connection then $C''$ is not executed. 
However, a \emph{delay} command could be added to
help the second program to output $x$ after $x$ has been updated.
\end{example}

 \paragraph*{Type rules for release and register} The release command is similar to the encryption command except that the rights with which the principal is encrypted are provided by the principal value.
 Therefore, there is no static check to perform in \eqref{type:release}.
 The registration rule \eqref{type:register}, for the same reason has less checks than \eqref{type:dec}.
 However, it does enforce that $\pc=\bot$, without which we could get non public rights; revealing a such a none public right would then be an information leak. Removing this restriction, and allowing non public rights, would be possible in a more complex type system but we decide not to do so to keep the type system more understandable.

\section{Example: A Secure Cloud Server}
\label{sec:example}


\begin{figure}
\small
\[
\begin{array}{l}
Server \equiv \\\new{usage}{ \arrayType{\Int}_{\bot}}{\{0,0,0\}};\\
\new{blocked}{ \arrayType{\Int}_{\bot}}{\{0,0,0\}};\\
\new{nextID}{ \Int_{\bot}}{0};\\
RegisterUsers ~|~CheckUsage
\\[4mm]
RegisterUsers  \equiv \\
!~\acceptPub{\mathit{newUsers}}{\chanTypePub{\arrayType{\pubKeyType}_{\bot}}};\\
~~\inputChanII{newUsers}{client1Client2};\\
~~\letk{client1}{client1Client2[0]} \\
~~\letk{client2}{client1Client2[1]} \\
~~\sync~\{ \\
~~~~\new{accountID}{\Int_\bot}{nextID}; \\
~~~~nextID = nextID+1;\}\\
~~\mathbf{if}~(accountID \leq 2)~\mathbf{then} \{\\
~~~~\new{data}{\Int_{\{\publicKey{Server},client1,client2\}}}{0};\\
~~~~ServeClient(client1,client2,client1)
\\~~|~~ServeClient(client1,client2,client2)~\}
\\[4mm]
CheckUsage \equiv  !~\sync \{\\
~~\new{total}{\Int_\bot\!}{\!usage[0]+usage[1]+usage[2]+3};\\
~~\{~\condnb{usage[0]>total/2}{\\
~~~~\assign{blocked[0]}{1}}{\assign{blocked[0]}{0}};\\
~~|~\condnb{usage[1]>total/2}{\\
~~~~\assign{blocked[1]}{1}}{\assign{blocked[1]}{0}};\\
~~|~\condnb{usage[2]>total/2}{\\
~~~~\assign{blocked[2]}{1}}{\assign{blocked[2]}{0}};\}
\\[4mm]
ServeClient(c1,c2,c3)  \equiv \\
!~\acceptCCert{upload}{\chanTypePub{\Int_{\{Server,c1,c2\}}}\\
~~~~}{c3}{Server};\\
~~\mathbf{if}~(blocked[accountID]=0)~\mathbf{then} \{\\
~~~~\inputChanII{upload}{z};\\
~~~~usage[accountID] = usage[accountID] + 1;\\
~~~~data=z;\}\\
|~ !~\acceptCCert{dowload}{\chanTypePub{\Int_{\{Server,c1,c2\}}}\\
~~~~}{c3}{Server};\\
~~\mathbf{if}~(blocked[accountID]=0)~\mathbf{then} \{\\
~~~~usage[accountID] = usage[accountID] + 1;\\
~~~~\outputChan{data}{download};\}\\
\end{array}
\]
\caption{An example server that monitors the clients usage but not their data}
\label{fig:example}
\end{figure}
\begin{figure}
\[\!
\begin{array}{l}
Sender \equiv \\
 ~~~ \acceptPub{\mathit{otherPrin}}{\chanTypePub{\pubKeyType_{\bot}}};\\
 ~~~ \inputChanII{otherPrin}{mobileKey};\\
 ~~~ \letk{mobile}{mobileKey}\\
 ~~~ \newPrin{Alice}{\{mobile\}};\\
 ~~~ \acceptPub{\mathit{releasedPrin}}{\chanTypePub{\privKeyType_{\bot}}};\\
 ~~~ \outputChan{\release{Alice}}{releasedPrin};\\ 
 ~~~ \acceptPub{\mathit{c}}{\chanTypePub{\pubKeyType_{\bot}}};\\
 ~~~ \inputChanII{c}{bobKey};\\
 ~~~ \outputChan{otherPrin}{bobKey};\\
 ~~~ \letk{bob}{bobKey}\\
 ~~~ \outputChan{\publicKey{Alice}}{c};\\
 ~~~ \connectPub{newUsers}{\chanTypePub{\arrayType{\pubKeyType}_{\bot}}};\\
 ~~~ \outputChan{\{\publicKey{Alice},bobKey\}}{newUsers};\\
 ~~~ Send(Alice,bob,42) \\
 \\
Send(p,k,v) \equiv \\
 ~~~ \connectCCert{upload}{\chanTypePub{\Int_{\{srvKey,\publicKey{p},k\}}}\\~~~~~}{srvKey}{p};\\
 ~~~ \new{sharedSecret}{ \Int_{\{srvKey,\publicKey{p},k\}}}{v};\\
 ~~~ \outputChan{sharedSecret}{upload};\\
\\
Mobile \equiv \\
~~~ \newPrin{Mobile}{\{\}};\\
~~~ \connectPub{\mathit{keyChan}}{\chanTypePub{\pubKeyType_{\bot}}};\\
~~~ \outputChan{\publicKey{Mobile}}{keyChan};\\
~~~ \connectPub{\mathit{releaseChan}}{\chanTypePub{\privKeyType_{\bot}}};\\
~~~ \inputChanII{releaseChan}{encaps}\\
~~~ \registerII{Mobile}{encaps}{MyId}\\
~~~ \inputChanII{keyChan}{bobKey};\\
~~~ \letk{bob}{bobKey} Send(MyId,bob,24)\\
\\
Receiver \equiv\\
 ~~~ \newPrin{Bob}{\{\}};\\
  ~~~ \connectPub{fromBob}{\chanTypePub{\pubKeyType_{\bot}}};\\
  ~~~ \outputChan{\publicKey{Bob}}{fromBob};\\
   ~~~ \inputChanII{fromBob}{aliceKey};\\
   ~~~ \letk{alice}{aliceKey} \\
 ~~~ \connectCCert{download}{\chanTypePub{\Int_{\{srvKey,alice,\publicKey{Bob}\}}}\\~~~~~}{srvKey}{Bob};\\
  ~~~ \inputChanII{download}{data};\\
\end{array}\!
\]
\caption{Definitions of Sender, Receiver and Mobile Processes}
\label{fig:example-2-a}
\end{figure}
\begin{figure}[ht]
\[
\begin{array}{lcl}
SD & \equiv & \langle \{ srvKey \mapsto k^+_s\} \blacktriangleright Sender \rangle \\
MD & \equiv &  \langle \{ srvKey \mapsto k^+_s\} \blacktriangleright Mobile \rangle\\ 
RD & \equiv & \langle \{ srvKey \mapsto k^+_s\} \blacktriangleright Receiver \rangle\\ 
Srv & \equiv & \langle \{Server \mapsto \prin{k^+_s}{k^-_s}{ \{ \}}  \} \blacktriangleright Server \rangle  \\
System & \equiv & SD \mid MD \mid RD \mid RD \mid Srv
\end{array}
\]
\caption{The server context with 4 devices: a sender with its mobile device and two concurrent receivers}
\label{fig:example-2-b}
\end{figure}

As an extended example we consider a cloud server that provides a data storage service. 
The motivation of our work is to make it possible to type an open cloud service, 
without the need for a global PKI neither the need to verify that
its users run typed programs,
so ensuring that it provides security guarantees to all of its users. 
The server process in Figure~\ref{fig:example} defines an open service which users can connect to and register to store data. 
This data can be shared with another principal, hence the server takes a pair of public keys, 
representing the principals, when registering. 

To keep the example simple this server accepts up to 3 accounts and denies further registrations.
The data for each accounts is stored in the data variable defined in the $RegisterUsers$ process; 
the restriction set used to type this variable specifies that only the server and the two clients named at registration 
can have knowledge of this data. 
Additionally, the server keeps track of how often each account is used (in the $usage$ array) 
and runs a process to monitor the usage (the $CheckUsage$ process). 
If any account is found to have made more than 50\% of the total number of requests (plus $3$), 
it is temporarily blocked (by setting the corresponding index in the $blocked$ array to 1). 
The usage data and blocked status are public data. 
This is an example of an open cloud service which writes to public variables after processing private data. 
Our type system ensures that there is no leakage between the two.

An example configuration is given in \autoref{fig:example-2-b}, with the
definitions of the processes provided in
\autoref{fig:example-2-a}: this configuration consists of four devices $SD$, $MD$ and two identical $RD$ devices.
We assume that, in the physical world, 
$SD$ and $MD$ are the laptop and respectively the mobile of Alice while the two other devices $RD$ are owned by Bob and Charlie.
In the system definition, Alice's and Bob's devices start off knowing the servers public key, 
but the server has no knowledge of Alice's and Bob's principals. 
The mobile device $MD$ first creates a new principal identity and shares the public key to $SD$.
Note that $RD$ could also send its private key to $SD$ at this point which is not the expected behavior.
To avoid honest users to establish unwanted connection, a port number mechanism should be added to the connections rules.
Once $SD$ receives the principal's public key from $MD$, 
$SD$ creates a new principal identity to use with the cloud service which is known by the mobile's principal identity.
This allows $SD$ to release and to send the new principal $Alice$ to $MD$ which registers it.
Therefore both $SD$ and $MD$ can use the service with the same account.
Finally Bob's device $RD$ and $SD$ exchange their public keys, 
and $SD$ sends to $MD$ the public key received from $RD$
then $SD$ registers for a shared account between $\pub(Alice)$ and $bobKey$ on the server.
Finally, $SD$ or $MD$ can upload a $sharedSecret$ value to the server.
Meanwhile $MD$ is able to recover the last uploaded value ($0$ if it downloads before an upload occurs).
 
The security type on the variable $sharedSecret$ means that its value can only have an effect on other variables 
with the same or a more restrictive type. 
Importantly, our correctness result limits knowledge of these values to just the Alice, Bob and Server devices, 
no matter what well-typed code are run in these devices. 
On the other hand, checking the authenticity of the Bob key (with a mechanism such as PGP, or out-of-band checks) 
is Alice's responsibility.

These are exactly the guarantees that a user, or a organisation, would want before using the cloud service. 
While many people trust their cloud services, and organisations enter into complex legal agreements, 
leaks can still occur due to programming errors. 
Type checking the code, as we outline in this paper can show that it is safe from such programming errors, 
and help provide the users with the guarantees they need to use the system.





\section{Security analysis}
\label{sec:result}

We now prove that the type system preserves confidentiality of data:
when a variable is declared with rights $R$ then
the only devices that can observe anything when the variable's value changes are the devices that are \emph{allowed} to know one of the keys in $R$.

The proof uses techniques from the applied pi-calculus, rephrased for
our formalism.  Our basic result uses a notion of bisimulation formulated
for reasoning about information flow in
nondeterministic programs \cite{SabelfeldSands00}.
Intuitively, two programs are bisimilar
(for ``low'' observers) if
each one can ``imitate'' the low actions of the other, and at each
step the memories are equivalent at low locations.  Note that memory
can change arbitrarily at each step, reflecting the ability of
concurrent attacker threads to modify memory at any time.  

The applied
pi-calculus extends the well-known pi-calculus, a process calculus
where communication channels may be sent between processes,
with binding of variables to non-channel values.  In our approach,
``memory'' is this set of bindings of variables to values in the
applied pi-calculus.  Also, our bisimilarity is a labelled
bisimilarity since we consider communications on channels as
observable events.  Our
correctness result shows that a high (insider) attacker cannot leak
information to low observers by communication on high channels or by
modifying high locations in memory.

We explain our proof over the following five subsections. In the following subsection we annotate devices with an identifier, so that we can keep track of particular devices as a process reduces, and we define when a particular device is entitled to read data protected with a particular set of rights. In Subsection \ref{sec:attacker} we define our untyped attacker and outline an labelled, open semantics which defines how an ``honest'' (typed) process can interact with an untyped attacker process. We also prove that this open semantics is correct with respect  to the semantics presented above. 

To give us the machinery we need to prove our main results, in Subsection \ref{sec:annotations} we annotate our processes with the rights that apply to all variables. We show that a well annotated, well typed process reduces to a well annotated, well typed process, this results shows that a well typed system does not leak data, but it does not account for untyped attackers. To do this we introduce a labelled bisimulation in Subsection \ref{sec:bisim}. This bisimulation relation defines the attackers view of a process, and their ability to distinguish systems. Finally, in Subsection \ref{sec:results}, we prove that, for our open semantics, a well annotated, well typed process is bisimular to another process that is the same as the first, except that the value of a high level variable is changed. This means that no information can leak about the value of that variable for any possible attacker behaviour, so proving our main correctness results.

\subsection{From rights to allowed devices}
\label{sec:rightsdevices}
As a preliminary step, we need to formally define which devices are and are not granted permissions by a particular set of rights. To do this we need a way to refer to particular devices while they make reductions, so as a notational convention, we place \emph{identifiers} on devices, that are preserved through reductions.  
By convention an identifier will be an index on each device, so for example  $D_1 \mid D_2 \ra D'_1 \mid D'_2$
expresses that $D_i$ and $D'_i$ represent the same physical device in different states.

In Definition~\ref{def:initial-process}, 
Definition~\ref{def:back-rel} and Definition~\ref{def:dev-from} below, we formally define an association between the public keys in a rights set and devices, but first we motivate these definitions with an example:

\begin{example}
 Consider the the system  $SD_A \mid SD_B \mid MD_M \mid MD_N \mid RD_X \mid RD_Y \mid Srv_S$ 
 where $SD,MD,RD$ and $Srv$ are defined in \autoref{fig:example-2-a}
 and \autoref{fig:example-2-b} (i.e. there are two clones of each devices of the system 
 from \autoref{sec:example}, except for the server).
 Consider the variable $\mathit{data} : \Int_{\{\publicKey{\mathit{Server}} , \mathit{client1}, \mathit{client2} \}}$ 
 of the $\mathit{RegisterUser}$ command on Device $S$ (the server).
 There are three reasons for a device to be allowed to access shared
 data, depending on which reduction occurs in the system.
  
  First, the devices that created the keys $\mathit{client1}$ and
  $\mathit{client2}$ are allowed access to this data.  This pair of keys is passed to the
  server at the start of its loop.
 Depending on which device ($A$ or $B$)  made the connection to the server channel $\mathit{newUsers}$, 
 $\mathit{client1}$ allows either Device $A$ or Device $B$ (as $\mathit{client1}$).
 Assume that it is Device $A$.
 Similarly $\mathit{client2}$ represents Device $X$ or $Y$ depending on which device connected to channel $c$
 during the $\mathit{Sender}$ command of Device $A$. Assume it is Device $X$.
 
 Next, since the public key $\mathit{client1}$ has been created by the command $\newPrin{\mathit{Alice}}{\{\mathit{mobile}\}}$ in Device $A$,
 the device which corresponds to the public key $\mathit{mobile}$ is also allowed to access $\mathit{data}$. We assume that it is Device $M$.

 Thirdly, the public key $\publicKey{\mathit{Server}}$ has not been
 generated by any device.  However it was in the initial memory of
 Device $S$, therefore
 this device is also allowed to access the shared data by the right
 granted by this key.
 
 Our security property grants that no other device than Device $S, A, X$ and $M$ can get information about the value of $data$.
 On the other hand, if an untyped attacker provides its own key to Device $A$, through channel $c$,
 then no security guarantee can be provided about
 $\mathit{data}$.
 Indeed, such a case means that the rights explicitly allows the attacker's device to access the data
 as any other regular device.
 
\end{example}
 
Before we formalise what are the allowed devices, we make reasonable assumption about the initial process.
For instance, when the process starts, we assume that devices have not already established any channel between them.
We also consider that they have an empty memory except for some public keys and principals 
(and we do not allow duplicate principals). 
Finally, we consider that all devices are well-typed except one (Device $0$) which is the untyped attacker. 

We first define a well-formed and well-typed condition on processes:

\begin{definition}\label{def:initial-process}
A \emph{valid} initial process 
$P= \nu \activechans.~ \dev{\mem_0}{C_0}_0 \mid \dev{\mem_1}{C_1}_1 \mid \ldots \mid \dev{\mem_n}{C_n}_n$
is a process where:
\begin{enumerate}
 \item There is no active channel already established between the
   processes: $\activechans = \{\}$.
 \item The bound values in memory are either principals or public
   keys: For all $0 \leq i \leq n$, $\loc{x}{w} \in \mem_i$ implies there exists $\pubof{k}$, $w = \prinv{k}{\{\}}$ or $w=\pubof{k}$. 
 \item Each principal exists only on one device: For all $0 \leq i , j \leq n$, $i\neq j$, $\prinv{k}{\{\}} \in \mem_i$ implies $\prinv{k}{\{\}} \notin \mem_j$.
 \item The memory of every device is well-typed with contexts corresponding to its memory and $\pc=\bot$: 
 for all $1 \leq i \leq n$, w.l.o.g. assume that
  $\mem_i =
  \{ \loc{p_1}{\prinv{k_1}{\{\}}},\ldots, \loc{p_m}{\prinv{k_m}{\{\}}},\loc{pk_1}\pubof{k'_1},\ldots \loc{pk_p}{\pubof{k'_p}} \}$, 
  we have
 $\bot; \{p_1,\ldots,p_n\}; \{pk_1,\ldots,pk_p\}; \{\}; \{\} \vdash C$
 for some command $C$.
\end{enumerate}
\end{definition}



Before defining the set of allowed devices, we define an auxiliary function
that maintains which devices are allowed to access shared data, and
which public keys need to be associated to devices.  This auxiliary
metafunction maps backward from a set of rights $\LR$ that confers access, to all
possible devices that may have provided the keys that gave them those
rights.  This is the set of allowed devices $\devicesFrom{\LR}{T}$ where $T$ is a process trace,
defined below.  Any device that is not in this set is not
allowed by $\LR$; we will consider such devices as attacker
devices in our threat model.

\begin{definition}\label{def:back-rel}
 Given a reduction $ P  \ra  P' $ where devices identifiers are in $\{0,\ldots,n\}$, 
given a subset of identifiers $I \subseteq \{1,\ldots,n\}$ and
 a set of public keys $\LR$, 
 we define the backward function $\backward{P \ra P' }(I,\LR)$ in the following way.
 \begin{itemize}
  \item If  $P \ra P' $ is the reduction \eqref{sem:newPrin} 
 on a device $D_i$, $i \neq 0$ that creates a new principal $\prinv{k}{\LR'}$ 
 and that $\pubof{k} \in \LR$ 
 then \[\backward{P \ra P'}(I, \LR)=\left(I \cup \{ i \}, \LR' \cup \LR \setminus \{ \pubof{k} \}\right).\]
 \item Otherwise $\backward{P \ra P'}(I, \LR)=\left(I, \LR \right)$.
  \end{itemize}
  \end{definition}
  
 \begin{definition}\label{def:dev-from}
 Let $P_0 = \dev{\mem_0}{C_0}_0 \mid \dev{\mem_1}{C_1}_1 \mid \ldots \mid \dev{\mem_n}{C_n}_n$ a valid initial process. 
 Let a sequence of reductions $T=P_0 \ra P_1 \ldots \ra P_n $ and let $\LR$ a set of public keys,
 we consider
 \[(I_0, \LR_0)= \backward{P_0  \ra P_1 } 
 \circ \ldots \circ \backward{P_{n-1}\ra P_n}(\emptyset, \LR).\]
 Let $I' = \{i \mid \exists \pubof{k} \in \LR_0, i \in \{1,\ldots,n\}, \prinv{k}{\{\}} \in \mem_i \}$.
 We define the set of allowed devices identifiers $\devicesFrom{\LR}{T}$
 as $\devicesFrom{\LR}{T} = I_0 \cup I'$.
 
Consider the set $\{\pubof{k} \mid \pubof{k}\in \LR_0 \wedge \nexists i \in I', \prinv{k}{\{\}} \in \mem_i \}$.
We say that $\devicesFrom{\LR}{T}$ is \emph{safe} if this set is empty.
\end{definition}

In other words, $\devicesFrom{\LR}{T}$ is safe 
when all keys involved by $\LR$ 
have been either created by devices of $\devicesFrom{\LR}{T}$
or owned by them at the beginning.
This implies, since valid initial processes don't have duplicated keys,
that the untyped attacker whose index cannot be in $\devicesFrom{\LR}{T}$
have not generated any of these keys.

\subsection{Definition of the attacker and of the open process semantics}
\label{sec:attacker}
An attacker is a device $A=\dev{\mem}{C}$ where $\mem$ is a standard memory 
and $C$ is a command which is not typed and which contains
additional expressions to do realistic operations that an attacker can perform 
like extracting $\secof{k}$, $\pubof{k}$ and $\LR$ from $\prinv{k}{\LR}$,
decrypting a ciphertext with only a secret key, or releasing a principal with 
arbitrary rights ($\enc{\prinv{k}{\LR'}}{n}{\LR}$ with $\LR \neq \LR'$).
However, the attacker is not able to create principals with a public key that does not correspond to the private key 
because we assume that the validity of any pairs is checked when received by an honest device.
We denote such an extended expression using $\attExpr$.

To reason about any attacker, we introduce open processes in a similar way as in the applied pi-calculus \cite{AbadiFournet2001}.
An open process has the syntax $\attKnowledge \models \nu\activechans.~ D_{1} \mid \ldots  \mid D_{n}$ 
where $D_1,\ldots,D_n$ are well-typed devices 
(the indexes $1,\ldots n$ are the tags of the devices: we do not change them through reductions),
where $\activechans$ are the channels which have been established between devices $D_1,\ldots,D_n$ (not with the attacker)
and where $\attKnowledge$ is a memory representing the values that the
attacker already received.
We refer to $D_1,\ldots,D_n$ as the \emph{honest devices}.
We also refer to $\attKnowledge$ as the \emph{attacker knowledge}.
This plays the same role as frames in the applied pi-calculus, and
also plays a similar role as computer memory in the bisimulation that
we use for reasoning about noninterference for nondeterministic
programs.
We denote $\attKnowledge(\attExpr)$ the evaluation of 
$\attExpr$ with the memory $\attKnowledge$
(to be defined the variables of $\attExpr$ should exists in $\attKnowledge$).

Our type system ensures that an attacker is never able to learn any of
the secret keys belonging to ``honest'' devices, as represented by the
notion of reference rights defined below.  The following predicate
overapproximates what an attacker can learn about a value, based
on the ``knowledge'' represented by its memory $\attKnowledge$
and on the assumption that it knows all keys which are not in 
$\LR$ (which aims at being the reference rights).

 \begin{definition}\label{def:super-attacker}
Given a set of keys $\LR$ and a value $v$, we define the predicate
$\attKnowledge \getPower_{\LR} v$ as
there exists an extended expression $\attExpr$ such that
$\attKnowledge(\attExpr)=v$, where
this extended expression contains all standard functions and attacker functions,
as well as an oracle function which provides the secret key of any public key which is not in $\LR$.
 \end{definition}

Open processes have two forms of reductions: \emph{internal reductions}, which are the same as the reductions for closed processes, and
\emph{labeled reductions} which are reductions involving the attacker. 
Labels on these latter reductions represent
the knowledge that an attacker can use to distinguish between two
traces, effectively the ``low'' information that is leaked from the system.

There are two forms of labelled reductions, both of which take the form $P \lra{l} P'$. In the first form,  
\emph{input reductions} $P\lra{\LabIn{\ca{c},\attExpr}} P'$,  the attacker provides data, and in the second form, \emph{output reductions} $P\lra{\LabOut{\ca{c},x}} P'$, the attacker receives data from an honest device.
There are also two further forms of input reductions: those for establishing
channels and those which send data.
The reduction that establishes a secure channel takes the form:
\[\!\!
\begin{array}{c}
\attKnowledge\! \grants \!\!\front\!\! \deva{i}{\mem\!}{\!\acceptCCert{c}{\Chan(S_{R_1})_{R_2}\!\!}{k}{P}; C} 
\\  \lra{\LabIn{\ca{c},(\chanType{S_{R_1}}{R_2},\attExpr,\attExpr')}}
 \\  \attKnowledge \grants \front \deva{i}{\mem}{C \{\ca{c}/c\}} 
   \end{array}\!\!
\]
where $\ca{c}$ is any attacker channel name, 
$\attKnowledge(\attExpr)$ should be the private key corresponding to $\mem(k)$
and $\attKnowledge(\attExpr')$ should be the public key of $\mem(p)$.
The reduction to establish a public channel is similar but simpler.
There is no need for checks on $\attExpr$ or $\attExpr'$.

Note that, unlike standard connection establishment where a fresh channel name is 
added to $\nu \activechans$, here the name of the established channel is provided by the attacker
and is not added in $\activechans$, which is out of the scope of the attacker.
However the attacker has to provide channel names in a separate subdomain 
which prevents it from using an existing honest channel name.
The names which are the attacker's channel names are written $\ca{c}$. 
To summarize, a channel name of form $\ca{c}$ represents a channel which is established between a device and the attacker,
a channel $c \in \activechans$ is a channel between two devices (not accessible from the attacker)
and a channel name $c \notin \activechans$ and not in the attacker's channel domain 
is just a program variable representing a future channel. 
Finally, we consider an implicit injection $c \mapsto \ca{c}$ from $\activechans$ to attacker's channels.


For input reductions that sends data on an established attacker channel, 
$\attExpr$ is an expression of the extended syntax
admitting lower level operations that are available to attackers but
not to honest devices, as explained above.
There is just one rule for output reduction, saying that an attacker
can learn from a value output on an attacker channel:
\[
 \inferrule{
 \fresh(x) \\
 \evaluates{\mem_1}{e}{v}
  }{
 \attKnowledge \grants \front \deva{i}{\mem_1}{C' ~|~ \outputChan{e}{\ca{c}}; C }
 \\ \lra{\LabOut{\ca{c},x}} 
 \attKnowledgeAdd{(\loc{x}{v})} \grants \front \deva{i}{\mem_1}{C' ~|~ C}
 } 
\]

\begin{example}\label{example:secure1}
 We consider the system consisting of $\mathit{MD} \mid \mathit{SD}$ from \autoref{fig:example-2-b} 
 running in parallel with an untyped attacker $A$.
 The corresponding open process is initially $\{\} \grants  \mathit{SD}_A \mid \mathit{MD}_{M} $.
 Assume that Device $M$ ($\mathit{MD}_M$) reduces with the attacker instead of $\mathit{SD}_A$, we have:
 \[\begin{array}{ll}
  \ra & \{\} \grants \mathit{SD}_{A} \mid \mathit{MD}'_{M}  \\
  \lra{\LabIn{\ca{c}, ct}} & \{\} \grants \mathit{SD}_{A} \mid \mathit{MD}''_{M}  \\
     \lra{\LabOut{\ca{c},x}} & (\loc{x}{k^+}) \grants \mathit{SD}_{A} \mid \mathit{MD}'''_{M} 
   \end{array}
\]
where $ct = \chanTypePub{\pubKeyType_{\bot}}$ and Device $M$ has memory $\loc{\mathit{Mobile}'}{\prinv{k}{\{\}}}$
(a renaming of the local variable $\mathit{Mobile}$ on the device),
after the first internal reduction where the principal is created.

After the first reduction where $\mathit{MD}_M$ creates its principal,
the attacker establish the connection with Device $M$: 
as $M$ is expecting a connection of type $ct$, 
the attacker have to provide $ct$ and one of its channel name $\ca{c}$.
Next, Device $M$ outputs the value of $\pub(Mobile)$ on $\ca{c}$ which is then stored on the attacker's memory. 
\end{example}


The following defines a subprocess of the ``honest'' devices of
a system $P$, where some (other) devices of that system may be attacker devices.
This subprocess of honest devices will be those defined by
$\devicesFrom{\LR}{T}$.  In other words, all devices
which are not allowed by some key in $\LR$ are assumed to be controlled by the attacker.


\begin{definition}
 Given a process $P=\nu \activechans.~ D_{1}~|~ \dots ~|~ D_{n} $ 
 and $\{\ident_1,\dots,\ident_m \} \subseteq \{1,\dots,n\}$, 
 let $\activechans'$ the names of channels between devices of $\{\ident_1,\dots,\ident_m \}$
 we define the subprocess of devices
 $\subp{P}{\{\ident_1,\dots,\ident_m\}} = \nu \activechans'. D'_{\ident_1}~|~ \dots ~|~ D'_{\ident_m}$ 
 where $D'_i$ is $D_i$ where each channel name $c \in \activechans \setminus \activechans'$
 have been replaced by an attacker-channel name $\ca{c}$.
\end{definition}

In the following, we will denote by $P\Lra{l_1,\ldots,l_n} P'$ a
sequence of reductions $P \ra^* P_1 \lra{l_1} P'_1 \ra^* P_2 \ldots P_n \lra{l_n} P'_n \ra^* P'$. 
We also denote by $P\lra{l^?} P'$ a reduction which is either $P \ra P'$ or $P \lra{l'} P'$ for some label $l'$ 
and by $P\lra{l^?}\!^? P'$ either $P\lra{l^?} P'$ or $P=P'$.

The following proposition states that if there is an execution 
of a system that includes communication with attacker devices, where
all communication is local in this closed system, then we can consider
subsystem of this, omitting the attacker devices,
 where communications with attacker devices are
modelled by labelled reductions of the form described above.  So the
attacker devices become part of the context that the devices in the
subsystem interact with through a labelled transition system.  Such a
subsystem may also exclude some of the ``honest'' devices in the
original system, and in that case we treat those excluded devices as
attacker devices in the context (since labels on the reductions only
model communication with attackers).



\begin{restatable}{proposition}{subprocess}\label{prop:identifiers}
Let $P=\nu \activechans. A \mid D_{1} \mid \ldots \mid D_{n}$ where all $D_{i}$ are well-typed and $A$ is an untyped device.
If  $ P \ra\!^* \nu \activechans'. P'$,    
then for all subset $S$ of $\{1, \dots, n\}$, 
there exist $l_1,\ldots,l_m$ and $\attKnowledge'$ (the attacker
knowledge at the end of the execution) such that 
\[\attKnowledge \grants \subp{P}{S}
\Lra{l_1,\ldots,l_m} 
\attKnowledge' \grants \subp{P'}{S}\]
where $\attKnowledge$ is 
a memory which is the union of the memories of $A$ and all $D_i$ with $i \notin S$ 
(variable names are renamed whenever there is a name conflict).
\end{restatable}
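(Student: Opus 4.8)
The plan is to proceed by induction on the length $N$ of the reduction sequence $P = P^{(0)} \ra P^{(1)} \ra \cdots \ra P^{(N)} = \nu\activechans'.P'$, strengthening the statement into an invariant that is preserved step by step. The invariant I would maintain is that, after projecting the first $j$ steps, there is an open execution $\attKnowledge \grants \subp{P}{S} \Lra{l_1,\ldots,l_{m_j}} \attKnowledge^{(j)} \grants \subp{P^{(j)}}{S}$ in which the attacker knowledge $\attKnowledge^{(j)}$ (up to renaming of fresh names) subsumes the union of the current memories of $A$ and of every device $D_i$ with $i \notin S$, together with the values the attacker has received on labelled outputs. The base case $j=0$ is immediate, taking $\attKnowledge^{(0)}=\attKnowledge$ and the empty label sequence.

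For the inductive step I would do a case analysis on the rule used in $P^{(j)} \ra P^{(j+1)}$, according to which devices it touches. First, if the step is internal to a single device $D_k$ with $k \in S$, or a communication \eqref{red:open-pub}/\eqref{red:open-for}/\eqref{red:i/o} between two devices both in $S$, it is matched verbatim by an internal reduction $\ra$ of the open process, and $\attKnowledge^{(j+1)}=\attKnowledge^{(j)}$. Second, if the step is internal to $A$ or to some $D_k$ with $k \notin S$, or a communication between two such context devices, then $\subp{P^{(j)}}{S}=\subp{P^{(j+1)}}{S}$ and no open-process reduction is needed; I only have to argue that the invariant on $\attKnowledge$ is preserved, which follows because the attacker's extended expressions $\attExpr$ subsume every computation an honest device can perform, so any new value appearing in a context device remains attacker-derivable. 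Third, if the step is a communication crossing the boundary, i.e. between a device $D_k$ with $k \in S$ and a context device (either $A$ or $D_{k'}$ with $k' \notin S$), then by the definition of $\subp{\cdot}{S}$ the shared channel is renamed to an attacker channel $\ca{c}$, and the step is matched by a labelled reduction: an output reduction $\lra{\LabOut{\ca{c},x}}$ when the $S$-device sends (recording the value in $\attKnowledge^{(j+1)}$), and an input or channel-establishment reduction $\lra{\LabIn{\ca{c},\attExpr}}$ when the $S$-device receives or a channel is opened.

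The delicate part of the boundary-crossing case is exhibiting witness extended expressions for the input and channel-establishment labels. For a data input I must produce an $\attExpr$ with $\attKnowledge^{(j)}(\attExpr)=v$, where $v$ is the value the context device sent; this holds by the invariant, since $v$ was computed from a context memory that $\attKnowledge^{(j)}$ subsumes and the attacker can replay that computation. For a secure channel establishment \eqref{red:open-for} I must additionally supply expressions evaluating to the private key of $\mem(k)$ and the public key of $\mem(p)$ that the context device used; these keys were present in the context device's memory when the original reduction fired, hence are in $\attKnowledge^{(j)}$, so the side conditions that held in the closed reduction are exactly reproducible by the attacker. I would also need a small bookkeeping lemma ensuring that $\subp{\cdot}{S}$ commutes with reductions, namely that fresh names (variables, keys, principals, channels) generated by context devices are consistently renamed into $\attKnowledge$ to avoid clashes, and that a boundary-crossing channel is never added to the honest binder $\activechans'$ but instead enters the attacker channel subdomain via the injection $c \mapsto \ca{c}$.

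The main obstacle I anticipate is precisely this expressivity argument for the witness expressions, combined with the consistent fresh-name renaming: one must be sure that merging the typed excluded devices into the untyped attacker context is sound, which rests on the intuitively clear but technically load-bearing fact that the attacker's extended syntax is strictly more powerful than honest computation, so that every honest context step can be absorbed into the attacker's internal reasoning without any visible action on the $S$-side.
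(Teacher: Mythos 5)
Your proposal is correct and follows essentially the same route as the paper's own proof: induction on the length of the reduction sequence, with a case analysis on whether the step lies inside $S$, inside the context (attacker plus excluded devices), or crosses the boundary, while maintaining the invariant that the attacker knowledge derives everything in the context memories. Your treatment is in fact somewhat more explicit than the paper's (which leaves the invariant and the witness-expression argument for inputs and channel establishment largely implicit), but the decomposition, the key invariant, and the justification of each case coincide.
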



For instance, the system $SD_A \mid \mathit{MS}_M \mid \mathit{SD}_0$,
where $\mathit{SD}_0$ is part of the attacker,
can perform three internal reductions between $\mathit{MS}_M$ and $\mathit{SD}_0$
where $\mathit{MS}_M$ sends its public key to $\mathit{SD}_0$.
In \autoref{example:secure1}, we provided the three reductions of the system 
$\subp{(\{\}\grants \mathit{SD}_A \mid \mathit{MS}_M \mid \mathit{SD}_0)}{\{A,M\}}$.

\subsection{Extended syntax with extra annotations}
\label{sec:annotations}
In this section, we add extra annotations to processes to perform a specific analysis about a given right $\LR=\{pk_1,\ldots,pk_n\}$.
Since the keys in $\LR$ do not necessary exist in the initial process,
we first annotate open processes with a \emph{reference right} $\refrights$
with the intention that this right will eventually grow to
the right $\LR$ as keys are generated and added to this right during execution.
Given a reference right $\refrights$, 
we define any rights whose all keys are in $\refrights$ to be \emph{\highr} (by opposition to \emph{\lowr}).
The set $\refrights$ starts with keys that exists in the initial
process.

An \emph{annotated process} has the syntax 
\[\attKnowledge; \refrights \models P\]
for attacker knowledge $\attKnowledge$, reference right $\refrights$
and process $P$.  The exact form of the annotations is
explained below.
The reference right $\refrights$ only changes during a reduction of a command $\newPrin{p}{RS}$.
In this case, there is a choice of whether or not to include the
generated public key in $\refrights$. 

\begin{definition}
A sequence of reductions is called \emph{\standard} if 
each time a \eqref{sem:newPrin} reduction adds a key to $\refrights$:
\begin{multline*} \attKnowledge; \refrights \models \nu \activechans. \dev{\mem}{\newPrin{p}{RS}; C} \mid P \\ \ra 
\attKnowledge; \refrights \cup \{ \pubof{k}\} \models \nu \activechans. 
\dev{\mem \cup{
\loc{p}{\prinv{k}{\LR}}}}{ C} \mid P \!\!\! \end{multline*}
we have $\LR \subseteq \refrights$.
\end{definition}


The next proposition ensures the existence of a \standard annotation such that
the rights we want to consider at the end are \highr and that the attacker does know any key 
in the set $\refrights$ of the initial process.
We will state in  \autoref{lemma:subred} that this implies that the attacker never knows the keys in
$\refrights$ during the whole reduction.

\begin{restatable}{proposition}{safeStandard}\label{prop:safe->standard}
Let $P$ be a valid initial process such that $P \ra{\!^*} P'$ and let $\LR$ a set of keys defined in $P'$.
If $D_H=\devicesFrom{\LR}{P \ra{\!^*} P'}$ is safe
then there exists a \standard annotation for the reduction
provided by \autoref{prop:identifiers}:
$\attKnowledge_0; \refrights_0 \grants \subp{P}{D_H} \Lra{l_1,\ldots,l_n} \attKnowledge; \refrights \grants \subp{P'}{D_H}$
where $\refrights_0$ and $\refrights$ are
such that $\forall \pubof{k} \in \refrights_0,~ \attKnowledge_0 \ngetPower_{\refrights_0} \secof{k}$ 
and $\LR \subseteq \refrights$.
\end{restatable}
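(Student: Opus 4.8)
The plan is to read the backward metafunction underlying $\devicesFrom{\LR}{T}$ \emph{forwards} and use it to prescribe the annotation. First I would apply \autoref{prop:identifiers} with $S = D_H$ to obtain the labelled reduction $\attKnowledge_0 \grants \subp{P}{D_H} \Lra{l_1,\ldots,l_n} \attKnowledge \grants \subp{P'}{D_H}$, where $\attKnowledge_0$ is the union of the memories of the attacker and of the honest devices outside $D_H$. Writing the original trace as $T = P_0 \ra P_1 \ra \cdots \ra P_N$ and letting $\LR_j$ denote the intermediate backward sets (so $\LR_N = \LR$ and $\LR_0$ is the set of \autoref{def:dev-from}), I set the initial reference right to $\refrights_0 = \LR_0$ and declare the annotation to add the freshly generated public key at exactly those \eqref{sem:newPrin} reductions the backward metafunction classifies as relevant, i.e. those on a device $i \neq 0$ whose generated key lies in the current backward set; every other generation leaves $\refrights$ unchanged. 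Since each relevant generation contributes its device to the index component and $I_0 \subseteq D_H$, all relevant generations occur on devices of $D_H$ and thus survive as internal reductions of $\subp{P}{D_H}$, so the annotation is well defined on the reduction produced by \autoref{prop:identifiers}.

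The heart of the argument is the invariant $\LR_j \subseteq \refrights_j$, where $\refrights_j$ is the reference right after the first $j$ reductions. I would establish it from two facts about the backward sets: that $\LR_j$ contains only keys already present at point $j$ (by induction, using that the protecting rights $\LR'$ of a principal are evaluated in the memory \emph{before} it is generated, hence exist at the earlier point), and that any $\pubof{k} \in \LR_j$ is either an initial key, which is never removed going further back and therefore lies in $\LR_0 = \refrights_0 \subseteq \refrights_j$, or is generated at a relevant step $m \leq j$, and was therefore added to $\refrights$ by the prescribed annotation. Granting this invariant, \textbf{standardness} is immediate: at a relevant generation $P_{j-1} \ra P_j$ creating $\prinv{k}{\LR'}$, the backward step gives $\LR' \subseteq \LR_{j-1} \subseteq \refrights_{j-1}$, which is precisely the side condition required before $\pubof{k}$ is added. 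Instantiating the invariant at $j = N$ yields $\LR = \LR_N \subseteq \refrights_N = \refrights$, since each key of $\LR$ is either in $\LR_0$ or is generated at a relevant step.

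For the last condition, that $\attKnowledge_0 \ngetPower_{\refrights_0} \secof{k}$ for every $\pubof{k} \in \refrights_0 = \LR_0$, I would invoke \emph{safeness}: every such $\pubof{k}$ is owned by some device $i \in I' \subseteq D_H$, i.e. $\prinv{k}{\{\}} \in \mem_i$. Because $P$ is a valid initial process its memories contain only bare public keys and principals with empty rights, and no principal is duplicated, so $\secof{k}$ occurs in no memory other than $\mem_i$; in particular it is absent from $\attKnowledge_0$, which is built only from the attacker and the devices outside $D_H$. The oracle of \autoref{def:super-attacker} supplies secret keys only for public keys \emph{not} in $\refrights_0$, so it does not yield $\secof{k}$ either, and a bare public key never lets one recover its secret key through the extended expressions. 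Hence $\secof{k}$ is underivable, as required.

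The main obstacle I anticipate is making this forward/backward duality fully rigorous, in particular the invariant $\LR_j \subseteq \refrights_j$ together with the bookkeeping that the relevant generations coincide exactly with the steps at which the annotation grows $\refrights$. Some care is needed to rule out a relevant generation sitting on the attacker or on an honest device excluded from $D_H$, which would break the correspondence with $\subp{P}{D_H}$; this follows because any relevant generation contributes its device to $I_0 \subseteq D_H$, and generations on device $0$ are excluded outright by the backward metafunction. Once the invariant is in place, standardness, the containment $\LR \subseteq \refrights$, and the attacker-knowledge condition all follow directly.
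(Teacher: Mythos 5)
Your proposal is correct and takes essentially the same route as the paper's own proof: both read the backward metafunction forwards to define the annotation (the paper sets $\refrights_m = \bigcup_{0\leq i \leq m} \LR_i$, which coincides with your prescription of adding the fresh key exactly at the generation steps the backward function classifies as relevant), both rest on the invariant $\LR_m \subseteq \refrights_m$, both obtain standardness from the fact that the backward step inserts the protecting rights $\LR'$ into the preceding set, and both derive the condition $\attKnowledge_0 \ngetPower_{\refrights_0} \secof{k}$ from safeness of $D_H$. Your write-up is, if anything, more explicit than the paper's (notably on why keys in $\LR_j$ exist at point $j$, why relevant generations lie on devices of $D_H$, and why the oracle cannot supply the initial secret keys), but the underlying argument is the same.
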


When a variable or a channel is created with some rights, the
reduction rules of the operational semantics 
remove all information about those rights.  Therefore we add annotations to devices
to remember if the defined right was \highr or \lowr, according to
$\refrights$.  Recall that a right is \highr if it
contains a key in the reference right $\refrights$.
We use $\ell$ as a metavariable for an annotation $\high$ or $\low$.
\begin{enumerate}
 \item A memory location $\loc{x}{v} \in \mem$ which is created by a command $\new{x}{S_R}{e};C $,
 where $R$ evaluates in $\mem$ to a \highr right according to $\refrights$, is annotated as a \highr location: $\loch{x}{v}$.
 Otherwise $x$ is annotated as a \lowr location: $\locl{x}{v}$. By convention $\attKnowledge$ contains only \lowr locations.
 \item 
A new channel name $c$, which is created by a command that establishes a channel $c : \Chan(S_{R_1})_{R_2}$,
 is annotated as ${c}^{\ell_1}_{\ell_2}$ where $\ell_1$ resp. $\ell_2$ is $\high$ if the evaluation of $R_1$
 (resp. $R_2$) is $\highr$, and is annotated as $\low$ otherwise.
 \item  A value which is the result of an expression (besides an
   encryption expression) where one variable refers to a \highr{} location
 is annotated as a \highv value $\tagged{v}$. Otherwise, it is annotated as a \lowv{} value $\untagged{v}$. 
 When a value $\tagx{v}{\ell}$ is encrypted, it becomes $\untagged{\enc{\tagx{v}{\ell}}{n}{RS}}$ 
 i.e., the initial tag is associated to the subterm.
 \item A device $\dev{\mem}{C_1 ~|~ \dots ~|~ C_n}$ 
 is annotated as $\dev{\mem}{\thr{\ell_1}{C_1} ~|~ \dots ~|~ \thr{\ell_n}{C_n}}_i$.
There is one tag $\ell_i \in \{\low,\high\}$ for each sub-command $C_i$ which can be reduced.
The annotation of each {thread} $\thr{\ell_i}{C_i}$ records whether 
the existence of this thread was due to a \highv value or a \highr right.
When $\ell_i=\high$, we say that the thread is \emph{\hight}, otherwise the thread is \emph{\lowt}. 
For instance, the annotated device $\dev{\loch{a}{\tagged{1}}}{\thr{\low}{\IF\ (a = 0) \THEN C_1 \ELSE C_2}}_1$
reduces to $\dev{\loch{a}{1}}{\thr{\high}{C_2}}_1$ because $a$ is a \highr location.
The other case where a thread can be set to \hight is in the
establishment of a secure channel that is annotated with  $\chanhh{c}$.
\end{enumerate}


These annotations allow us to define technical invariants that are preserved during reduction.
In the following technical definition, we formalize the idea that 
there exist a typing judgment for devices (Case~\eqref{safe:well-type}
below) which is consistent with 
the annotations: 
\begin{enumerate}
\item
Variables (Case~\eqref{safe:variables}) and channels (Case~\eqref{safe:channel}) 
should have a type which corresponds to their annotation
\item The type system tracks a notion of security level for the
  control flow, and this level $\pc$ must be consistent with the
  thread annotation (Case~\eqref{safe:pc}).
\item
In addition, we express that the devices are not in a corrupted
configuration, in the sense that
secure channels are not being used to communicate with the attacker
(Case~\eqref{safe:no-high-attacker-chan}), nor are they used 
in a \lowt thread (Case~\eqref{safe:low-pc-low-chan}).
\item
Finally, ciphers for values (Case~\eqref{safe:cipher-variable}) and principals (Case~\eqref{safe:released-prin})
should not have \highv contents protected by \lowr keys. 
\end{enumerate}

\begin{definition}\label{def:safe}
 A tuple consisting of a reference right $\refrights$, 
 a memory $\mem$ and a thread $\thr{\ell}{C}$ is \emph{\wa} 
 written ``\emph{$\safe{\refrights}{\ell}{\mem}{C}$}'' if
 there exists $\pc$, $\GammaPrin$, $\GammaKey$, $\GammaChan$ and $\ctx$ such that
 \begin{enumerate}
  \item \label{safe:well-type} $\pc;\GammaPrin;\GammaKey;\GammaChan;\ctx \vdash C$
  \item \label{safe:variables} For all locations $x$ in $\mem$, we have 
  \begin{itemize}
   \item either $\locl{x}{\untagged{v}}$ in $\mem$ for some value $v$ and $x \in \GammaPrin \cup \GammaKey$
  \item or $\GammaPrin;\GammaKey;\ctx \vdash x : S_R$ and
  \begin{itemize}
   \item if $\refrights \lc \mem(R)$, $\loch{x}{\tagged{v}}$ in $\mem$, 
   \item if $\refrights \nlc \mem(R)$, $\locl{x}{\untagged{v}}$ in $\mem$ for some value $v$. 
  \end{itemize}
    \end{itemize}
  \item \label{safe:channel} For all channels $c$ in the thread such that $\GammaChan \vdash c : \Chan({S}_{R_1})_{R_2}$, 
  the annotation of $c$ is $c^{\ell_2}_{(\ell_1)}$ 
  where $\ell_2=\high$ iff. $\refrights \lc R_2$, $\ell_1=\high$ iff. $\refrights \lc \mem(R_1)$.
  \item \label{safe:pc} $\refrights \lc \pc$ if and only if $\ell=\high$.
  \item \label{safe:no-high-attacker-chan} For all $\ca{c}$, 
  we have $\GammaChan \vdash \ca{c} : \Chan({S}_{R_1})_{R_2}$ 
  with $\refrights \nlc \mem(R_2)$ and $\refrights \nlc \mem(R_1)$.
 \item \label{safe:low-pc-low-chan}if $\ell$ is $\low$, then there is no $\chanhh{c}$ in $C$.
  \item \label{safe:cipher-variable} For all values $v$ stored in memory, if a sub-term of $v$ matches $\enc{\tagged{t}}{n}{\LR}$
  then $\refrights \lc \LR$.
  \item \label{safe:released-prin} For all values $v$ in memory, if a sub-term of $v$ matches $\enc{\prin{k}{\LR}}{n}{\LR'}$
  then $\LR = \LR'$ or $\refrights \lc \LR$.
 \end{enumerate}
 
 When, for a device $D=\dev{\mem}{\thr{\ell_1}{C_1}~|~ \dots ~|~ \thr{\ell_n}{C_n}}_i$ and a set $\refrights$,
 we have $\safe{\refrights}{\ell_i}{\mem}{C_i}$ for $i\in\{1,\dots,n\}$,
 then we use the notation $\safeDev{\refrights}{D}$.
\end{definition}

Finally, we get the following subject-reduction result:

\begin{restatable}{proposition}{subjectreduction}\label{lemma:subred}
Let $\attKnowledge; \refrights \grants \nu \activechans. D_1 \mid \ldots \mid D_n$ an open process
such that for all $\pubof{k} \in \refrights$ we have $\attKnowledge \ngetPower_\refrights \secof{k}$
and for all $1 \leq i \leq n$ we have $\safeDev{\refrights}{D_i}$.
If 
 $\attKnowledge; \refrights \grants \nu \activechans . D_1 \mid \ldots \mid D_n
 \lra{l^?} 
 \attKnowledge'; {\refrights}' \grants \nu \activechans' . D'_1 \mid \ldots \mid D'_n  $
 then 
 for all $1 \leq i \leq n$ we have $\safeDev{\refrights}{D'_i}$.
 Moreover if the reduction is \standard,
 we have $\attKnowledge' \ngetPower_{\refrights'} \secof{k}$ for all
$\pubof{k} \in \refrights'$.
\end{restatable}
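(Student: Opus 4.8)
The plan is to prove this subject-reduction result by a case analysis on the reduction rule applied in the step $\attKnowledge; \refrights \grants \nu \activechans. D_1 \mid \ldots \mid D_n \lra{l^?} \attKnowledge'; {\refrights}' \grants \nu \activechans'. D'_1 \mid \ldots \mid D'_n$. For each rule, I would show that every clause of Definition~\ref{def:safe} is re-established on the affected device(s), while noting that devices untouched by the reduction trivially preserve their $\safeDev{\refrights}{\cdot}$ status (clauses \eqref{safe:cipher-variable} and \eqref{safe:released-prin} are global over memory, but only the active device's memory changes, so these too are stable for untouched devices). The backbone for clause~\eqref{safe:well-type} is a standard preservation-of-typing argument: each reduction rule corresponds to a typing rule whose premises type the continuation, so the continuation remains well-typed under an updated context. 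The genuinely new content over ordinary subject reduction is threading the annotation invariants \eqref{safe:variables}--\eqref{safe:released-prin} and the connection to $\refrights$ through each case.

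The key cases I would carry out in order are the following. First, the \textbf{assignment and new-variable rules} \eqref{red:assign}, \eqref{red:new_var}: here I use the typing premises $\rightsComp{R_1}{\pc \cap R_2}$ to argue that a value written into a \highr location carries (or is upgraded to) a \highv tag consistent with clause~\eqref{safe:variables}, and that $\pc$ tracks the right security level per clause~\eqref{safe:pc}. Second, the \textbf{conditional} \eqref{type:if}: the example in the excerpt ($a$ a \highr location forcing the resulting thread to \hight) shows that branching on a \highr value must flip $\ell$ to $\high$ and set $\pc$ to include $\refrights$; I verify clause~\eqref{safe:pc} is preserved because the typing rule lowers $\pc$ to $\pc \cap R_1 \cap R_2$ in the branches. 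Third, the \textbf{encryption/decryption} rules \eqref{red:enc}, \eqref{red:decT}: for \eqref{red:enc} I use \eqref{type:enc}'s premise $\rightsComp{RS}{R}$ to establish clause~\eqref{safe:cipher-variable}, namely that a \highv plaintext is only ever wrapped under a \highr key set; for \eqref{red:decT} the condition $M(RS) \subseteq \LR$ together with clause~\eqref{safe:cipher-variable} guarantees the recovered value's tag matches the declared right of the new variable. Fourth, the \textbf{channel establishment and I/O} rules \eqref{red:open-for}, \eqref{red:i/o}, \eqref{red:open-pub}: these require clauses~\eqref{safe:channel}, \eqref{safe:no-high-attacker-chan} and \eqref{safe:low-pc-low-chan}; the typing constraints $R_1 \subseteq R_2 \subseteq \pc$ and $R_2 = \pc$ from \eqref{type:open-cert-client}, \eqref{type:output}, \eqref{type:input} are what force the channel annotations to agree with the reference right and prevent a \lowt thread from holding a $\chanhh{c}$. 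Fifth, the \textbf{register/release} rules \eqref{red:registerT}, \eqref{rede:release} for clause~\eqref{safe:released-prin}.

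The final sentence of the proposition — that for a \standard reduction the attacker still cannot derive any $\secof{k}$ with $\pubof{k} \in \refrights'$ — is where I expect the main obstacle, and I would handle it separately from the structural invariants. The plan is to show that no labelled reduction can add to $\attKnowledge'$ a value from which such a secret key is derivable under $\getPower_{\refrights'}$. The only channels exposed to the attacker are the $\ca{c}$ channels, and clause~\eqref{safe:no-high-attacker-chan} guarantees these carry only \lowr data; moreover any value released to the attacker via the output rule is, by clauses~\eqref{safe:cipher-variable} and~\eqref{safe:released-prin}, either already \lowv or a cipher/encapsulated principal protected by a \highr key set, which the attacker cannot open without a secret key in $\refrights'$ that — by the induction hypothesis — it does not possess. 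The \standard condition is precisely what ensures that when \eqref{sem:newPrin} extends $\refrights$ to $\refrights'=\refrights \cup \{\pubof{k}\}$, the embedded right $\LR$ satisfies $\LR \subseteq \refrights$, so no freshly added secret becomes derivable through the new principal's own rights. The delicate part is ruling out an \emph{indirect} leak: the $\getPower_{\refrights'}$ oracle already hands the attacker every secret key \emph{outside} $\refrights'$, so I must argue that combining oracle keys with \lowr knowledge can never reconstruct an inside secret key. This reduces to the cryptographic soundness assumption baked into the extended attacker (the attacker cannot forge a valid key pair and cannot decrypt a $\highr$-protected ciphertext), which I would state explicitly as the base case and invoke at each output step.
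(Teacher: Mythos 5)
Your proposal is correct and follows essentially the same route as the paper's proof: a case analysis over the reduction rules verifying each clause of Definition~\ref{def:safe} (the paper merely transposes the organization, arguing clause-by-clause across the relevant rules rather than rule-by-rule across the clauses, and packages your tag/type correspondence reasoning into Lemma~\ref{lemma:no-tag<=>low-type}), followed by a separate argument for the attacker-knowledge claim. That final argument is also the paper's: attacker knowledge grows only through outputs on $\ca{c}$ channels, which by the cipher and released-principal invariants carry only \lowr{} or \highr-protected material, and \standard ness ensures every key protecting a key of $\refrights$ is itself in $\refrights$, so non-derivability is preserved---with your ``cryptographic soundness base case'' being exactly what Definition~\ref{def:super-attacker} builds into $\getPower_{\LR}$.
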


Finally, valid initial processes which only require each device to be well-typed are \wa.

\begin{restatable}{proposition}{initialprocess}\label{prop:init-well-annotated}
 Given a valid initial process $P = \Big(\attKnowledge \grants \deva{1}{\mem_1}{C_1} | \ldots | \deva{n}{\mem_n}{C_n} \Big)$, 
 We have $\safe{\refrights}{\low}{\mem_i}{C_i}$ for $1\leq i\leq n$ for any $\refrights$.
\end{restatable}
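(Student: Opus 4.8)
The plan is to verify each of the eight clauses of Definition~\ref{def:safe} directly, by exhibiting explicit witnesses for the existentially-quantified contexts. For a fixed honest device $\deva{i}{\mem_i}{C_i}$, I would take the witnesses supplied by the fourth condition of Definition~\ref{def:initial-process}: namely $\pc = \bot$, the principal context $\GammaPrin$ and key context $\GammaKey$ read off from the memory $\mem_i$, together with the empty channel context $\GammaChan = \{\}$ and empty variable context $\Gamma = \{\}$. With these choices, clause~\eqref{safe:well-type} is exactly the well-typing judgment $\bot; \GammaPrin; \GammaKey; \{\}; \{\} \vdash C_i$ guaranteed by validity of the initial process, so nothing has to be re-derived there.

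The two structural facts I would then exploit are that, by the first two conditions of Definition~\ref{def:initial-process}, an initial device has no established channels ($\activechans = \{\}$) and its memory binds every location to either a principal $\prinv{k}{\{\}}$ or a public key $\pubof{k}$, never to a reference holding a compound value or a ciphertext. From the first fact, clauses~\eqref{safe:channel} and~\eqref{safe:no-high-attacker-chan} are vacuous, since with $\GammaChan = \{\}$ there is no $c$ with $\GammaChan \vdash c : \Chan(S_{R_1})_{R_2}$ and no attacker channel $\ca{c}$; and clause~\eqref{safe:low-pc-low-chan} holds because the raw command $C_i$ carries no $\chanhh{c}$ annotation (secure-channel annotations only arise when a channel is established). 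From the second fact, every location of $\mem_i$ lies in $\GammaPrin \cup \GammaKey$, so clause~\eqref{safe:variables} is met by the first of its two alternatives, annotating each such location as a low location holding a low value $\locl{x}{\untagged{v}}$; and since no bound value contains an encryption subterm, the cipher clauses~\eqref{safe:cipher-variable} and~\eqref{safe:released-prin} are vacuously true.

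The one clause that genuinely involves $\refrights$ and the thread annotation is~\eqref{safe:pc}, and here I annotate the single thread of each device as $\low$, matching the statement. The requirement then becomes that $\refrights \lc \pc$ fail, which holds because $\pc = \bot$ is the public (unrestricted) right and is therefore \lowr{} for every reference set $\refrights$, so $\refrights \lc \bot$ never holds. This is precisely why the conclusion holds uniformly ``for any $\refrights$'': every remaining $\refrights$-dependent clause has already been discharged vacuously, so the whole statement reduces to the $\refrights$-independent fact that the public right $\bot$ is never high.

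There is no deep obstacle here; the proposition is essentially a bookkeeping base case whose content is checking that the chosen witnesses line up with the annotation scheme. The only points requiring care are selecting $\pc = \bot$ together with the $\low$ thread annotation so that clause~\eqref{safe:pc} is satisfied rather than violated, and confirming that the restricted shape of initial memories (principals and public keys only, no established channels) is exactly what makes the channel- and cipher-clauses vacuous. If initial memories were permitted to contain references or ciphertexts, clauses~\eqref{safe:variables}, \eqref{safe:cipher-variable} and~\eqref{safe:released-prin} would instead demand real verification.
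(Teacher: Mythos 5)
Your proposal is correct and follows essentially the same route as the paper's own (much terser) proof: take the typing witnesses with $\pc=\bot$ from the validity condition of the initial process, observe that the low thread annotation makes the $\pc$ clause hold because $\bot$ is never \highr{}, and discharge every remaining clause as immediate or vacuous since initial memories contain only principals and public keys and no channels are yet established. Your clause-by-clause expansion is simply a more explicit rendering of what the paper compresses into two sentences.
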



\subsection{Labelled bisimilarity}
\label{sec:bisim}

The invariants expressed above are about a single process. 
To ensure that the attacker cannot track implicit flows, 
we need to compare the execution of two processes in parallel.
In this section, we define a relation between processes which implies
an adapted version of the bisimilarity property of the applied pi-calculus~\cite{AbadiFournet2001}.

The two processes that we compare are the actual process and another one 
where the value of one of the \highv variables of the memory of some device has been changed to another one.
So first, we define what is a process where a variable is modified arbitrarily.

\begin{definition}
 Given a device $D=\dev{\mem \addm{\{\loc{x}{v'}\}}}{C}$, and a value $v$,
 we define $D_{x=v}$ to be the device that updates the variable $x$ to
 be $v$ by assignment, $\dev{\mem
   \addm{\{\loc{x}{v'}\}}}{\paral{C}{\assign{x}{v};}}$.
 Extending this from devices to processes, given a process $P$ which contains $D$ with index $i$, 
we define $P_{i:x=v}$ to be the same process
 except that $D$ has been replaced by $D_{x=v}$.
\end{definition}

In contrast to systems where the attacker can only observe public
values in memory after reduction,
here we model that the attacker can observe communications on channels
that have been established with other devices.
In the following examples, we stress how an attacker can distinguish between two processes even without 
knowing actual confidential values in the memory of those devices. 

\begin{example}
We consider the device $D(X,Y)$, where the description is parametrized
by two meta-variables: $X$ is a value stored in memory and $Y$ is a
value that is encrypted and sent on an attacker channel.  The
device has a memory $\mem(X) = \{\loc{x}{X} , \loc{k}{\pubof{k_0}}\}$,
and the full description of the device is
\[
  \devA{\mem(X)}{ \IF \ x = 0 \THEN \outputChan{\encE{Y}{k}}{\ca{c}}}.
\]

The process $D(0,Y)$ can be distinguished from 
the process $D(1,Y)$.
In the first case, we have:
\begin{eqnarray*}
\lefteqn{ \{\} \grants D(0,Y) \lra{\LabOut{\ca{c},m}} } \\
& &
\{\loc{m}{\encE{Y}{k}}\} \grants \devA{\mem(X)}{\SKIP}
\end{eqnarray*}
On the other hand, there is no reduction with only the label
$\LabOut{\ca{c},m}$ and internal reductions
starting from $\{\} \grants D(1,Y)$.
This distinction models the fact that if the attacker receives data on
$\ca{c}$, it learns that $X=0$.

The processes 
$D(0,2)$ and $D(0,3)$ can also be distinguished
even if both 
$\{ \loc{k}{\secof{k_0}} \} \grants D(0,2)$ and $\{  \loc{k}{\secof{k_0}}  \} \grants D(0,3)$
can reduce with a label $\LabOut{\ca{c},m}$.
Indeed, after reduction the attacker's knowledge is $\attKnowledge = \{\loc{k}{\secof{k_0}},\loc{m}{\enc{Y}{n}{\pubof{k_0}}}\}$
where $Y$ is $2$ (resp. $3$): 
By performing the decryption of $m$ with an untyped decryption, the
attacker can compare the result to $2$, and  
$\attKnowledge(\mathrm{decr}_k(m))=2$ is only true in the first case.

Finally, the processes $\{\} \grants D(0,2)$ and $\{\} \grants D(0,3)$ are not distinguishable.
In both cases:
\begin{itemize}
 \item there is no test to distinguish between the two attacker's
   knowledge, and
 \item the labelled reductions are the same i.e $\LabOut{\ca{c},m}$
\end{itemize}
\end{example}

With these examples in mind, 
we introduce static equivalence, an adaptation of the one used in the applied pi-calculus, which
expresses that it is not possible to test some equality which would work with one memory but not with the other.

\begin{definition}
 We say that two memories $\mem_1$ and $\mem_2$ are statically equivalent ($\approx_s$) if 
they have exactly the same variable names and for all 
 extended expressions $\attExpr$ where its variables $x_1,\ldots,x_n \in
 \mem_1$, we have $\mem_1(\attExpr)=\mem_2(\attExpr)$.
\end{definition}

Finally, we define an adaptation of the labelled bisimilarity.
This recursive definition generalizes the conclusion of the example:
the two memories should be statically equivalent,
and a transition in one process can be mimicked in the other process,
where the reduced processes should also be bisimilar.

\begin{definition}
 Labeled bisimilarity ($\approx_l$) is the largest symmetric relation $\mathcal{R}$
on open processes such that $(\attKnowledge^A \grants P^A) \labrel (\attKnowledge^B \grants P^B)$ implies:
\begin{enumerate}
 \item $\attKnowledge^A \approx_s \attKnowledge^B$;
 \item if $(\attKnowledge^A \grants P^A) \ra ({\attKnowledge^A}' \grants {P^A}')$, 
 then $(\attKnowledge^B \grants P^B) \ra^{\!*} ({\attKnowledge^B}' \grants {P^B}')$ 
  and $({\attKnowledge^A}' \grants {P^A}')\labrel({\attKnowledge^B}' \grants {P^B}')$ for some ${\attKnowledge^B}' \grants {P^B}'$;
 \item if $(\attKnowledge^A \grants P^A) \lra{l}  ({\attKnowledge^A}' \grants {P^A}')$, 
 then $(\attKnowledge^B \grants ({\attKnowledge^B} \grants {P^B}) \Lra{l}  
 ({\attKnowledge^B}' \grants {P^B}')$ and $({\attKnowledge^A}' \grants {P^A}')\labrel({\attKnowledge^B}' \grants {P^B}')$ 
 for some ${\attKnowledge^B}' \grants {P^B}'$.
\end{enumerate}
\end{definition}

The labelled bisimilarity is a strong equivalence property, and its
we use it to state the security property that an attacker is unable to distinguish
between two processes in the same class \cite{Arapinis2014}. 
However this definition does not help to actually compute the processes in the same class.
Therefore we define a stronger relation
that can be defined from our annotated semantics.

First, we define an obfuscation function $\erase{\LR}{w}$ 
which takes a set of public key $\LR$ 
and a value or a principal $w$ 
and returns an
obfuscated value (its syntax is like the syntax of value except that there is an additional option $\erasedValue$
and that a principal value is also an option).


 \begin{definition}\label{def:erase} 
 Let $\LR$ a set of public key, $w$ a principal $p$ or a value $v$.
  We define $\erase{\LR}{w}$ depending on the structure of $w$.\\
  {\small
  \[\!\!\begin{array}{lcl}
  \erase{\LR}{\enc{w'}{n}{\LR'}}\!\!&\!\!\!=\!\!\! & \!\!\! \begin{cases}
   \enc{\erase{\LR}{w'}}{n}{\LR'} \text{ if }\LR \nlc \LR' \\
   \enc{\erasedValue}{n}{\LR'} \text{ if } \LR \lc \LR'   
                               \end{cases}\\[4mm]
  \erase{\LR}{\{v_1,\ldots,v_n\}}\!\!&\!\!\!=\! \!\!& \!\!\! \{ \erase{\LR}{v_1},\ldots, \erase{\LR}{v_n} \}\\[1mm]
  \text{otherwise }\erase{\LR}{w}\!\!&\!\!\!=\!\!\! & \!\!\! w.
  \end{array}\!\!\!\!\!\!\!\!\!
  \]}
  where $w'$ is a value or a principal and $n$ a nonce.
 \end{definition}

The set $\LR$ aims at containing a set of keys whose private keys will never be known by the attacker.
In our previous example, we have $\erase{\pubof{k_0}}{\enc{Y}{n}{\pubof{k_0}}}=\enc{\erasedValue}{n}{\pubof{k_0}}$:
if $\secof{k_0}$ is never known by the attacker, the attacker will never be able to know anything about $Y$.
Note that the random seed $n$ used for the cipher is not hidden.
Indeed, the attacker is able to distinguish between a cipher that is sent twice
and two values which are encrypted then sent:
in the first case the two message are strictly identical.

We now define an equivalence on memories: 
given a reference right $\refrights$, a set of safe keys, then we say
that two memories are equivalent 
if they differ only by the \highv values which aims at never been sent to the attacker
and by the obfuscated terms of the \lowv values.

\begin{definition}[equivalent memories]\label{def:eqmem}
\emph{Equivalent memory} is a relation on memories such that $\mem_1
\memrel{\refrights} \mem_2$ 
if 
  for each \lowr location $\locl{x}{v^A}$ in $\mem_1$ (resp. $\mem_2$), 
  there exists $\locl{x}{v^B}$ in $\mem_2$ (resp. $\mem_1$) such that
  $\erase{\refrights}{v^A}=\erase{\refrights}{v^B}$ and each location $\loc{p}{\pvalue}$ in $\mem_1$ (resp. $\mem_2$) 
  exists in $\mem_2$ (resp. $\mem_1$).
\end{definition}

Next, we define an equivalence relation between two \wa processes:
two processes are equivalent if they have the same commands up to some additional \hight threads
and have equivalent memories:

\begin{definition}\label{def:final-relation}
Two annotated processes $P^A$ and $P^B$ are \emph{annotated-equivalent}  ($P^A \labrel P^B$)  
when there exists an alpha-renaming (capture-avoiding renaming of bound variables) of $P^B$ such that
\begin{multline*}
P^A\ \mathrm{is}\  \attKnowledge^A; {\refrights} \grants \nu \activechans. \deva{1}{\mem^A_1}{\paral{C^A_{(1,1)}}{\paral{\dots}{C^A_{(1,m_1)}}}} | 
\\
\dots | \deva{n}{\mem^A_n}{\paral{C^A_{(n,1)}}{\paral{\dots}{C^A_{(n,m_n)}}}},
\end{multline*}
\vspace{-2em}
\begin{multline*}P^B\ \mathrm{is}\ \attKnowledge^B;{\refrights} \grants \nu \activechans'. \deva{1}{\mem^B_1}{\paral{C^B_{(1,1)}}{\paral{\dots}{C^B_{(1,m_1)}}}} | 
\\ 
\dots | \deva{n}{\mem^B_n}{\paral{C^B_{(n,1)}}{\paral{\dots}{C^B_{(n,m_n)}}}},
\end{multline*}
 and we have $\attKnowledge^A \memrel{\refrights} \attKnowledge^B$,
 for all $\pubof{k} \in \refrights$, we have $\attKnowledge^A \ngetPower_\refrights \secof{k}$
 and  furthermore for all $(i,j)$,
 first
 $\mem^A_i \memrel{\refrights} \mem^B_i$,
  next,
  the annotation on each thread $C^X_{(i,j)}$ (where $X$ stands for $A$ or $B$)
  is either $\high$ or $\low$ such that either:
   \begin{itemize}
    \item $\safe{\refrights}{\high}{\mem^X_i}{C^X_{(i,j)}}$
    \item $\safe{\refrights}{\low}{\mem^X_i}{C^X_{(i,j)}}$ and
      $C^A_{(i,j)} = C^B_{(i,j)}$ (the commands are syntactically
      identical up to renaming of bound variables).
   \end{itemize}
\end{definition}

Finally, we prove that this relation actually implies bisimilarity.

\begin{restatable}{proposition}{implylabelled}\label{prop:=>labelled}
Let $P^A$ be $(\attKnowledge^A; \refrights \grants \nu \activechans.  Q^A)$ 
and $P^B$ be $(\attKnowledge^B; \refrights \grants \nu
\activechans'. Q^B)$, where both are \wa processes such that $P^A \labrel P^B$,
Furthermore assume that for all $ \pubof{k} \in \refrights$
we have  $\attKnowledge^A \ngetPower_\refrights \secof{k}$.
Then the processes are annotated equivalent,  $P^A \approx_l P^B$ (with the annotations removed).
\end{restatable}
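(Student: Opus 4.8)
The plan is to exhibit the annotated-equivalence relation itself as a labelled bisimulation. Concretely, I would let $\mathcal{R}$ be the set of all pairs of \wa open processes $(P^A,P^B)$ with $P^A \labrel P^B$ and satisfying the key-secrecy invariant $\attKnowledge^A \ngetPower_{\refrights} \secof{k}$ for every $\pubof{k}\in\refrights$, and then verify that $\mathcal{R}$, with the annotations erased, meets the three clauses of the definition of $\approx_l$. Since $\approx_l$ is the largest such relation, this yields $P^A\approx_l P^B$ at once. The point that makes $\mathcal{R}$ a legitimate candidate is closure under the bisimulation game: \autoref{lemma:subred} guarantees that well-annotatedness and the key-secrecy invariant are preserved by every internal and \standard labelled reduction, so each pair reached during the game is again in $\mathcal{R}$, provided I re-establish $\labrel$ at each step (matching memories up to the channel renaming of \autoref{def:final-relation}).

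Clause~1 (static equivalence of the frames) I would reduce to a separate lemma: if $\mem_1 \memrel{\refrights} \mem_2$ and the attacker cannot recover $\secof{k}$ for any $\pubof{k}\in\refrights$, then $\mem_1\approx_s\mem_2$. The proof is by induction on the extended expression (equality test) performed by the attacker, using that the obfuscation $\erase{\refrights}{\cdot}$ records exactly the observable part of a value: a subterm $\enc{v}{n}{\LR}$ with $\refrights\lc\LR$ is opaque, since opening it would need a secret key in $\refrights$ that the hypothesis forbids, while the nonce $n$ left visible by $\erase{\refrights}{\cdot}$ ensures that repeated ciphertexts remain linkable identically on both sides. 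Hence equal erasures at the \lowr locations force all attacker tests to agree, which is $\approx_s$.

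For Clauses~2 and~3 (matching reductions) I would case-split on the step that occurs and on its annotation. If the step involves only \hight thread(s), the typing invariant~(\ref{safe:well-type}) together with $\refrights\lc\pc$~(\ref{safe:pc}) forces it to write only \highr locations --- assignment typing makes the target right a subset of $\pc$ --- and conditions~(\ref{safe:no-high-attacker-chan}) and~(\ref{safe:low-pc-low-chan}) forbid outputting on, or establishing, an attacker channel; moreover the principal-creation command $\newPrin{p}{RS}$ requires $\pc=\bot$, so no principal is created. I therefore match such a step with the empty sequence on the other side: only \highr memory changes, so $\memrel{\refrights}$, $\approx_s$ and $\labrel$ all persist, and any spawned thread is again \hight. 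This is where non-interference is used. If instead a \lowt thread moves, the two processes share this thread syntactically by \autoref{def:final-relation} and agree on \lowr memory, so I replay the identical reduction; memory, array and $\letk{k}{x}$ steps match immediately, and for an internal communication on $c\in\activechans$ both endpoints are present and identical (output and input typing both fix $\pc=R_2$, so paired threads share their level). For the labelled transitions, the attacker channel $\ca{c}$ is \lowr by~(\ref{safe:no-high-attacker-chan}), so the participating thread is \lowt and identical: on an output $\LabOut{\ca{c},x}$ the emitted expression reads only \lowr locations (a \lowr-typed expression cannot mention a \highr variable), whence the two emitted values have equal erasure and the updated frames stay $\memrel{\refrights}$; on an input $\LabIn{\ca{c},\attExpr}$ I invoke the Clause~1 lemma to conclude that the delivered values have equal erasure, so the reference-based substitution leaves the continuation commands syntactically identical and the memories equivalent.

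I expect the main obstacle to be the reduction-matching for the two commands whose control flow depends on ciphertext contents, namely $\mathbf{decrypt}$ and $\mathbf{register}$: I must rule out the two runs branching differently (one taking $C_1$, the other $C_2$). The argument has to combine the fact that a \lowt thread reads only \lowr data with the cipher invariants~(\ref{safe:cipher-variable}) and~(\ref{safe:released-prin}), in order to show that the inspected ciphertext has a \lowr top-level right, so that the rights test deciding success ($\mem(RS)\supseteq\LR$, resp.\ $k^+\in\LR$) is determined by exactly the information that $\erase{\refrights}{\cdot}$ preserves; equal erasure then forces the same branch on both sides. A secondary delicate point is establishing the static-equivalence lemma of Clause~1 in the presence of randomized encryption, since the visible nonce must be treated so that ciphertext linkability is reproduced identically in the two frames.
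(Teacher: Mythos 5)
Your proposal is correct and takes essentially the same route as the paper: the paper also reduces Clause~1 to the lemma that memory equivalence plus key secrecy implies static equivalence (\autoref{prop:memrel=>static-equiv}, itself proved via the erasure-evaluation argument and the nonce-linkability point you raise), and obtains Clauses~2--3 by showing that annotated equivalence is closed under single reduction steps (\autoref{thr:stronger}), with exactly your case split --- \hight{} steps matched by the empty sequence since they only touch \highr{} memory and cannot use attacker channels or create principals, \lowt{} steps replayed identically, and the $\mathbf{decrypt}$/$\mathbf{register}$ branching resolved because equal erasure of \lowv{} ciphertexts fixes the encryption rights and hence the branch --- iterated over whole traces using the subject-reduction result (\autoref{lemma:subred}) to preserve the key-secrecy invariant. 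The only difference is organizational: you inline as ``closure under the bisimulation game'' what the paper factors into the pre-proved auxiliary propositions.
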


\subsection{Main theorems}
\label{sec:results}

Our first security property grants confidentiality for each created variable in the following way.
When a variable $x$ is created with a protection $R$, this implicitly defines a set of devices which are allowed
to access $x$. 
If the untyped attacker is not in this set, 
then any collaboration of the attacker with the denied-access devices can not 
learn any information about the value stored by the variable:
they cannot detect an arbitrary modification of the variable.

\begin{restatable}{theorem}{thmA}\label{prop:att-never-sees-external-change}
 Let $P= A  ~|~ D_1 ~|~ \ldots ~|~ D_n $ be a valid initial process.
 We consider a reduction $P \ra\!^* P'$ with
 $P'=\nu \activechans. A' ~|~ D'_1 ~|~ \ldots ~|~ D'_n$ such that for some $1 \leq i \leq n$, 
 $D'_i$ is $\dev{\mem}{\paral{\newvar{x}{S_R}{E};C}{C'}}$.
 Let $\LR=\mem(R)$ be the set of keys corresponding to $R$, 
 let $D_H = \devicesFrom{\LR}{P\ra P'}$  
 and let $\attKnowledge$ be the unions of memories of the devices of $P'$ whose indexes are not in $D_H$ and of device $A'$.
 If $D_H$ is safe (as stated in \autoref{def:dev-from}),
 then $(\attKnowledge \grants \subp{P'}{D_H})\approx_l (\attKnowledge \grants \subp{P'_{i:x=v}}{D_H})$.
\end{restatable}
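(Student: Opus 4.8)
The plan is to obtain the statement by feeding the right pair of annotated open processes to Proposition~\ref{prop:=>labelled}; almost all of the genuine work is already discharged there, so the task here is orchestration. First I would pass from the closed reduction $P \ra^{*} P'$ to the open semantics. Applying Proposition~\ref{prop:identifiers} with the subset $S=D_H$ yields a labelled execution
\[
\attKnowledge_0 \grants \subp{P}{D_H} \Lra{l_1,\ldots,l_m} \attKnowledge \grants \subp{P'}{D_H},
\]
in which the terminal attacker knowledge $\attKnowledge$ is exactly the union of the memory of $A'$ with those honest devices whose index lies outside $D_H$, i.e.\ the $\attKnowledge$ of the statement. Since $D_H$ is safe, Proposition~\ref{prop:safe->standard} promotes this to a \standard annotation carrying a reference right $\refrights$ with $\LR \subseteq \refrights$ and $\attKnowledge_0 \ngetPower_{\refrights_0}\secof{k}$ for every $\pubof{k}\in\refrights_0$. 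Because $\LR=\mem(R)\subseteq\refrights$, the right $R$ guarding the new variable is \highr, that is $\refrights \lc \mem(R)$.

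Next I would propagate the \wa invariant along this trace. By Proposition~\ref{prop:init-well-annotated} the initial subprocess $\subp{P}{D_H}$ is \wa, with every thread tagged $\low$ consistently with $\pc=\bot$; iterating the subject-reduction Proposition~\ref{lemma:subred} over the \standard reduction then gives that $\subp{P'}{D_H}$ is \wa with respect to $\refrights$ and, crucially, that $\attKnowledge \ngetPower_{\refrights}\secof{k}$ for all $\pubof{k}\in\refrights$: the attacker never holds a secret key guarding $x$. This conclusion is the conceptual core of the argument, and it is precisely here that the safety hypothesis on $D_H$ is consumed.

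It then remains to exhibit annotated-equivalence $\subp{P'}{D_H}\labrel\subp{P'_{i:x=v}}{D_H}$ in the sense of Definition~\ref{def:final-relation}, after which Proposition~\ref{prop:=>labelled} delivers the labelled bisimilarity $\approx_l$ claimed (once annotations are erased). Both processes carry the same attacker knowledge $\attKnowledge$, so $\attKnowledge\memrel{\refrights}\attKnowledge$ and the non-derivability condition are exactly what the previous step secured. They agree on every device except $i$, where $\subp{P'_{i:x=v}}{D_H}$ carries one additional thread $\assign{x}{v}$. Reading $P'_{i:x=v}$ as the modification applied just after the \eqref{red:new_var} step that installs $x$ as a fresh \highr location (the definition of $P_{i:x=v}$ presupposes $x\in\mem$, and that creation step is a common internal reduction on both sides), the two memories still coincide at the comparison point, so $\mem^A_i\memrel{\refrights}\mem^B_i$ is immediate; the later divergence of the \highr location $x$ is absorbed by the bisimulation game of Proposition~\ref{prop:=>labelled}, since equivalent memories constrain only \lowr locations and principals. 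The one real check is that the added thread is a \wa \hight thread: typing $\assign{x}{v}$ by rule \eqref{type:assign} with $\pc=R$, $x:S_R$ and $v:S_R$ makes $\rightsComp{R}{\pc\cap R}$ hold, and $\refrights\lc\pc$ forces the tag $\high$ required by Definition~\ref{def:safe}, case~\eqref{safe:pc}; Definition~\ref{def:final-relation} explicitly tolerates such an extra \hight thread.

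I expect the main obstacle to be bookkeeping rather than conceptual: correctly threading the reference right $\refrights$ so that $\LR$ ends up \highr while $\refrights_0$ remains unknown to the attacker (which is exactly what safety of $D_H$ buys through Proposition~\ref{prop:safe->standard}), and reconciling the mismatch that at $P'$ the location $x$ is only about to be created, so that both the memory comparison and the \wa-ness of the inserted assignment thread must be phrased at the post-creation state. Everything substantive --- the coinductive bisimulation, static equivalence, and the erasure function $\erase{\refrights}{\cdot}$ --- is packaged inside Propositions~\ref{lemma:subred} and~\ref{prop:=>labelled} and is reused here as a black box.
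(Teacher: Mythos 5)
Your proposal is correct and takes essentially the same route as the paper's own proof: pass to the open semantics via Proposition~\ref{prop:identifiers}, use safety of $D_H$ and Proposition~\ref{prop:safe->standard} to get a \standard{} annotation with $\LR \subseteq \refrights$, propagate well-annotatedness and attacker ignorance of the protected keys through Propositions~\ref{prop:init-well-annotated} and~\ref{lemma:subred}, type the inserted $\assign{x}{v}$ thread as a \hight{} thread via \eqref{type:assign}, and conclude with Definition~\ref{def:final-relation} and Proposition~\ref{prop:=>labelled}. Your explicit treatment of the post-creation state of $x$ corresponds exactly to what the paper does implicitly by stating $\safe{\refrights}{\high}{\mem \addm{\{\loch{x}{v'}\}}}{\assign{x}{v}}$.
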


\begin{proof}
According to \autoref{prop:identifiers}, we have 
$\attKnowledge_0 \grants \subp{P}{D_{H}} \Lra{l_1,\ldots,l_{n}}^* \attKnowledge' \grants \subp{P'}{D_{H}}$
where $\attKnowledge_0 = \bigcup_{i \notin D_H}{\mem_i}$.
As $P$ is a valid initial process,
we get that $\subp{P}{D_{H}}$ is also a valid initial process 
(the verification of all conditions to be a valid initial process is immediate).
Since $D_H$ is safe, we consider $\refrights_0$, $\refrights$
and the \standard annotated semantics
from \autoref{prop:safe->standard}:
\[\attKnowledge_0; \refrights_0 \grants \subp{P}{D_{H}} \Lra{l_1,\ldots,l_{n}} 
\attKnowledge; \refrights \grants \subp{P'}{D_H}\]
where $\LR \subseteq \refrights$ and 
 \begin{equation}\label{eq:1}
 \forall \pubof{k} \in \refrights_0,~ \attKnowledge_0 \ngetPower_{\refrights_0} \secof{k}.
 \end{equation}
From \autoref{prop:init-well-annotated}, the annotation of the initial process is \wa.
From \autoref{lemma:subred} on this annotation, we get for all $i \in D_H$:
$\safeDev{\refrights}{D_i}$ 
and, due to \eqref{eq:1}, for all $\pubof{k} \in \refrights$ we have $\attKnowledge \ngetPower_\refrights \secof{k}$.
Since $\LR \subseteq \refrights$, and given the \eqref{type:assign} rule, we also have 
\[\safe{\refrights}{\high}{\mem \addm{\{\loch{x}{v'}}\}}{\assign{x}{v}}.\]
Therefore, we satisfy all conditions of \autoref{def:final-relation}:
\[(\attKnowledge; \refrights \models  P') \labrel (\attKnowledge ; \refrights \models P'_{i:x=v}).\]
Finally, we conclude with \autoref{prop:=>labelled}.
\end{proof}

\begin{example}
From the example of \autoref{sec:example}, 
let consider the variable 
\trlinebreak
$\mathit{sharedSecret} : \Int_{\{\mathit{serverKey}, \mathit{Alice}, \mathit{Bob}\}}$
of the device $SD$.
The \autoref{prop:att-never-sees-external-change} states that only devices
from $\devicesFrom{\{\mathit{serverKey}, \mathit{Alice}, \mathit{Bob}\}}{\mathit{System} \ra \mathit{System'}}$
can distinguish between $\mathit{sharedSecret}=42$ or $\mathit{sharedSecret}=43$.
Whatever are the reductions, these devices contains for sure
$Srv$ since its key is known from the beginning and $\mathit{SD}$ since it creates $\mathit{Alice}$.
The only threat is that the channel $\mathit{otherPrin}$ has not been established with $\mathit{MD}$
or that $c$ has not been established with the device $\mathit{RD}$ of Bob.
These threats could be removed by an additional authentication protocol 
or the use of a private channel (for instance if the connection between $\mathit{SD}$ and $\mathit{MD}$ is a direct wired connection).
This means that not knowing which code is run on $\mathit{Srv}$ and $\mathit{RD}$ is not a threat
as long as $\mathit{Srv}$ and $\mathit{RD}$ guarantee that their codes are well-typed.
\end{example}

Finally, we state a standard security property in the simpler case where there is no untyped attacker.
Given a process $P$, if, at some point, we change the value that a
variable is bound to in memory, where that variable has been typed
with right $R$,
then new reductions will not alter values in memory locations that
have been typed strictly less confidential than $R$ (i.e., those
variables have rights that contain public keys not contained in
$R$).

\begin{restatable}{theorem}{thmB}\label{thm:taint-marking}
 Let $P$ be a valid initial process.
 We consider a reduction $P \ra P'$ with
 $P'=D'_1 | \dots | D'_n$ such that for some $i$, 
 $D'_i$ is $\dev{\mem}{\paral{\newvar{x}{S_R}{E};C}{C'}}$.
 Let $v$ be any value of type $S$.
 Let $P''=D''_1| \dots|D''_n$ such that $P' \ra^* P''$ 
 then there exists $Q''$ 
 such that $P'_{x=v} \ra^* Q''$ 
 where for all memory location $y$ that have been created in any device with rights $R_l$ with
 $R \nlpc R_l$, we have $\erase{R}{v_P}=\erase{R}{v'_Q}$ where $v_P$ is the value of $y$ in $P''$
 and $v_Q$ the value of $y$ in $Q''$.
\end{restatable}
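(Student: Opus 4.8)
The plan is to re-run the annotated-semantics argument of \autoref{prop:att-never-sees-external-change}, but specialised to a closed system (no attacker, hence no labelled reductions) and reading the conclusion off the \emph{memory} component of the annotated-equivalence relation rather than off labelled bisimilarity. First I would fix the reference right $\refrights = \mem(R)$ and choose the \standard annotation that never adds a freshly generated key to $\refrights$; this is always legal, since the \standard condition only constrains the steps that do extend $\refrights$, so $\refrights$ stays equal to $\mem(R)$ for the whole run. With this choice a location declared with right $R_l$ is \highr exactly when all keys of $R_l$ lie in $\refrights=\mem(R)$, and \lowr otherwise; the latter is precisely the condition $R\nlpc R_l$ of the statement, so the locations named in the theorem are exactly the \lowr ones. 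Because the attacker knowledge is empty, $\attKnowledge\ngetPower_{\refrights}\secof{k}$ holds vacuously for every $\pubof{k}\in\refrights$.

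Next I would make both sides \wa and set up the relation. By \autoref{prop:init-well-annotated} the initial process is \wa for $\refrights$, and by \autoref{lemma:subred} the step $P\ra P'$ keeps every device \wa. Taking the state just after $x$ has been declared, $x$ is a \highr location (its keys are exactly $\refrights$), so its stored value is unconstrained by $\memrel{\refrights}$. I would then verify \autoref{def:final-relation} directly for $P'$ and $P'_{x=v}$: the two share the same established channels and the same (empty) attacker knowledge, their device memories differ only at the \highr location $x$ and so are related by $\memrel{\refrights}$, and their command sets coincide except for the extra high assignment thread, for which \eqref{type:assign} together with $\mem(R)\subseteq\refrights$ yields $\safe{\refrights}{\high}{\mem}{\assign{x}{v}}$. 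Hence $P'\labrel P'_{x=v}$.

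The engine of the proof is the reduction-closure of $\labrel$ under internal reductions: whenever $P^A\labrel P^B$ and $P^A\ra {P^A}'$, there is a matching ${P^B}'$ with $P^B\ra^{*}{P^B}'$ and ${P^A}'\labrel{P^B}'$. This is exactly the invariant established while proving \autoref{prop:=>labelled}, here needed only for internal ($\ra$) steps since the system is closed. Iterating it along $P'\ra^{*}P''$ (invoking \autoref{lemma:subred} at each step to keep both sides \wa) produces the required $Q''$ with $P'_{x=v}\ra^{*}Q''$ and $P''\labrel Q''$. Finally I would read the conclusion off \autoref{def:final-relation} and \autoref{def:eqmem}: for every device the memories of $P''$ and $Q''$ are related by $\memrel{\refrights}$, so each \lowr location $y$ (i.e.\ each location created with rights $R_l$ such that $R\nlpc R_l$) satisfies $\erase{\refrights}{v_P}=\erase{\refrights}{v_Q}$, which is $\erase{R}{v_P}=\erase{R}{v_Q}$; the matching \lowr locations exist on both sides because related processes carry syntactically identical \lowt threads.

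The hard part will be that reduction-closure step. Its content is a case analysis over the operational rules showing that a step inside a \hight thread never perturbs \lowr memory observably — high writes to low locations are excluded by the type system, high branches and high-channel communications remain at high $\pc$ and touch only \highr state, and encryptions of \highv data land under \highr key sets (\autoref{def:safe}, case~\eqref{safe:cipher-variable}) which $\erase{\refrights}{\cdot}$ blanks — whereas a step inside a \lowt thread is identical on both sides and hence preserves equivalence. The delicate points are the weak (zero-or-many step) matching required when the two runs diverge on \highr control flow, and the bookkeeping that new \lowr locations and channels are created identically on both sides; both are already handled by the invariants of \autoref{def:final-relation} and are reused here unchanged. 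Unlike the proof of \autoref{prop:att-never-sees-external-change}, no subprocess extraction and no labelled reductions are involved, so the closed-system statement follows more directly once the internal-reduction closure is in hand.
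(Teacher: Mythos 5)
Your proposal runs on the same engine as the paper's own proof: work in the annotated semantics for the closed system, relate $P'$ and $P'_{x=v}$ by annotated equivalence, iterate the one-step preservation result (which, note, is \autoref{thr:stronger} in the appendix --- the proposition the paper cites directly --- rather than an invariant internal to \autoref{prop:=>labelled}), and read the conclusion for \lowr{} locations off memory equivalence (\autoref{def:eqmem}, \autoref{def:final-relation}). The one genuine difference is the annotation bookkeeping. You freeze $\refrights=\mem(R)$ once and for all and argue the annotation is vacuously \standard; the paper instead starts $\refrights$ from the keys of $\mem(R)$ that are present in the initial memories and \emph{grows} it, adding a freshly generated key exactly when that key belongs to $\mem(R)$, explicitly accepting a non-\standard{} annotation (standardness only serves to control attacker knowledge, which is moot here since there are no labelled reductions and the attacker knowledge stays empty). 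Your fixed choice sits uneasily with the framework's stipulation that ``the set $\refrights$ starts with keys that exist in the initial process'': $\mem(R)$ may contain keys that are only generated during $P \ra^{*} P'$, so read literally your annotation is not an admissible one, and your appeal to standardness is a red herring rather than a licence. The choice is nonetheless extensionally harmless: every rights set evaluated at run time mentions only keys already in some memory, so a location is classified \highr{} under your fixed $\refrights$ exactly when it is under the paper's growing one, and both annotations end with $\refrights=\mem(R)$, hence the same erasure; what the paper's growing annotation buys is that this equivalence never needs to be argued, at the price of having to say out loud that non-\standard{} annotations are acceptable. So either switch to the paper's growing annotation or add the one-line argument that the two classifications coincide; the remainder of your proof (initial relation via \autoref{prop:init-well-annotated}, \autoref{lemma:subred}, and the typing rule \eqref{type:assign}; iteration of reduction-closure; reading off the \lowr{} locations, which are exactly those with $R \nlpc R_l$) matches the paper's argument.
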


{\bf Implementation:} We have implemented an interpreter for this language in Ocaml\footnote{https://github.com/gazeaui/secure-type-for-cloud}.  
The program strictly follows the extended semantics, 
it has commands to add a new device; to do an internal reduction on the selected thread(s) of the selected device(s);
to perform an attacker communication,
and to type-check each device of the system according to the annotations. 
To define a device which starts with keys in memory,
two additional commands are provided which are allowed only as a preamble:
$\mathbf{load~principal} ~p~ \mathbf{ from }~ i;$
and $\mathbf{load } ~x~ :~ \pubKeyType~ \mathbf{ from }~ i;$
where $i$ is a number; an identical $i$ on two devices represents a shared key. This implementation demonstrates that the syntax is well-defined and effective, 
and allows us to test the invariance of the properties with demonstrative examples.
The example of \autoref{sec:example} has been tested 
using this implementation:
we are able to reduce the process such that Bob gets the secret from Alice
and we can verify that each step correctly type-checks.

\section{Conclusion}
\label{sec:concl}

We have presented a security type system that provides location based correctness results 
as well as a more traditional non-interference result. 
The key novelty of our system is to allow principal identities to be created and shared across devices in a controlled way, without the need for a global PKI or middleware layer.
Hence, our correctness result states that well-typed devices can control which other devices may acquire their data, even in the presents of untyped attackers. 
We have illustrated our system with an example of an open cloud server that accepts new users. 
This server does perform some monitoring of its users but our type system proves that it does not monitor the content of their data. 
We argued that our framework is particularly appropriate to cloud systems 
where organizations will want guarantees about where their data will be stored as well as the secrecy of their data.

\subsubsection*{Acknowledgement} We would like to thank Alley Stoughton for her help with this work; her insightful comments and useful advice greatly improved this paper.

\bibliographystyle{plain}
\bibliography{refs}

\newpage
\appendix

\section{Additional Language Rules}
\label{app:other-rules}

\begin{figure}
\begin{gather}
\inferrule{ v\!:\!T \in \Gamma}{ \typeVar{v} : T} \tag{var\_T}\label{type:var}
\\
\inferrule{ 
}{ \typeVar{i} : \Int_\bot} \tag{int\_T}\label{type:int}
\\
\inferrule{ P \in \GammaPrin}{ \typeVar{\publicKey{P}} : \pubKeyType_{\bot}} \tag{pubKey\_T}\label{type:pubKey}
\\
\inferrule{ \GammaPrin;\GammaKey;\Gamma \vdash e_1\!:\!\Int_{R_1} 
\\ \GammaPrin;\GammaKey;\Gamma \vdash e_2\!:\!\Int_{R_2}
}{\GammaPrin;\GammaKey;\Gamma \vdash e_1 + e_2 : \Int_{R_1 \cap R_2}} 
\tag{Sum\_T}\label{type:sum}
\\
\inferrule{ \typeVar{v_1\!:\!S_R} \dots  \typeVar{v_n\!:\!S_R}}{
\typeVar{ \{v_1, \dots v_n \} : \arrayType{S}_R}}
\tag{array\_T} \label{type:array} 
\\
\inferrule{ \typeVar{x\!:\! \arrayType{S}_{R_1}} \\ \typeVar{e \!:\!Int_{R_2}} 
}{
\typeVar{ x[e]\!:\!S_{R_1 \cap R_2}}}
\tag{element\_T} \label{type:element} 
\\
\inferrule{
}{
\GammaPrin; \GammaKey \vdash \bot
\tag{public\_T}
}
\\
%
\inferrule{
\pc;\GammaPrin;\GammaKey;\GammaChan; \Gamma \vdash C_1 \\
\pc;\GammaPrin;\GammaKey;\GammaChan; \Gamma \vdash C_2
}{
\pc;\GammaPrin;\GammaKey;\GammaChan; \Gamma \vdash ~C_1~|~C_2 }  
\tag{para\_T} \label{type:para}
\\
\inferrule{
}{
\pc;\GammaPrin;\GammaKey;\GammaChan; \Gamma \vdash \nop }  
\tag{skip\_T} \label{type:skip}
\\
\inferrule{\pc;\GammaPrin;\GammaKey;\GammaChan; \Gamma \vdash C
}{
\pc;\GammaPrin;\GammaKey;\GammaChan; \Gamma \vdash ~\bang C }  
\tag{bang\_T} \label{type:bang}
\\
\inferrule{
     \typeVar{e_1:\Int_{R_1}} \\
     \typeVar{e_2 : S_{R_2}} \\
     \typeVar{x : \arrayType{S}_{R_3}} \\
   \vprin \vdash R_3 \subseteq \pc \cap R_1 \cap R_2 \\
  \pc;\GammaPrin;\GammaKey;\GammaChan; \Gamma \vdash C 
}{
\pc;\GammaPrin;\GammaKey;\GammaChan; \Gamma \vdash  \assign{x[e_1]}{e_2}; ~C } \tag{assign\_array\_T}\label{type:assignArray}
\\
\inferrule{
 \pc = \bot \\
 \pc;\GammaPrin ;\GammaKey;\GammaChan \cup \{ c \!:\!\!\chanType{S_\bot}{\bot}\}; \Gamma  \vdash C 
}{
 \pc;\GammaPrin ;\GammaKey;\GammaChan; \Gamma \vdash \acceptPub{c}{\chanType{S_\bot}{\bot}}; ~C }   \tag{accept\_1\_T} \label{type:acceptPub}
\\
\inferrule{
p \in \GammaPrin \\
k \in \GammaKey \\
\rightsComp{  \{\publicKey{p},k\} }{ R_1 \subseteq R_2 \subseteq \pc}  \\
R_2; \GammaPrin; \GammaKey; \GammaChan \cup \{ c : \chanType{S_{R_1}}{R_2} \};  \Gamma \vdash C 
}{
\pc;\GammaPrin; \GammaKey; \GammaChan;  \Gamma \vdash \acceptCCert{c}{\chanType{S_{R_1}}{R_2}}{k}{p};~C }  
\tag{accept\_2\_T} \label{type:open-cert-server}
\end{gather}
\caption{Other type rules for commands}
\label{fig:other-type-rules-a}
\end{figure}

\begin{figure}
\begin{gather}
\inferrule{
\pc;\GammaPrin;\GammaKey;\GammaChan; \Gamma \vdash G \{C / \nop \}   
}{
\pc;\GammaPrin;\GammaKey;\GammaChan; \Gamma \vdash \sync\{G\};C }  
\tag{sync\_T} \label{type:sync}
\end{gather}
Where 
\[
\begin{array}{rcl}
G & ::= & \cond{e_1}{e_2}{G_1}{G_2} \\
& | & \new{x}{S_R}{e};~G  \\
& | & \assign{x}{e};~G \\
& | & \assign{x[e_1]}{e_2};~G\\
& | & \decrypt{P}{e}{x}{S_R}{G_1}{G_2} \\
& | & \register{P_1}{e}{P_2}{G_1}{G_2} \\
& | & \paral{G_1}{G_2} \\
& | & \nop 
\end{array}
\]
\caption{Other type rules for commands (continued)}
\label{fig:other-type-rules-b}
\end{figure}

\begin{figure}

\begin{gather}
 \inferrule{
 M(e) = i \\ i \in \{0,\ldots,n \}\\
 M(x) = \{v_0,\ldots,v_i,\ldots,v_n\}
 }{
 M(x[e]) = v_i
} \tag{element\_E}\label{red:arrayelement}
\\
\inferrule{
  M(e) = i \\ i \notin \{0,\ldots,n \} \\
  M(x) = \{v_0,\ldots,v_i,\ldots,v_n\}
 }{
  M(x[e]) = \NotAValue
 } \tag{element\_err\_E}\label{red:array_err}
\\
 \inferrule{
 M(P) =  \prin{\pubKey}{\secKey}{R} 
 }{ 
 M(\publicKey{P}) = \pubKey
} \tag{pub\_E}\label{red:pub}
\\
\inferrule{ 
   \dev{M}{C_1}  \rightarrow^*  \dev{M'}{\nop}
 }{
   \dev{M}{ \sync\{C_1\};C_2 }  \rightarrow  \dev{M'}{C_2}
}
 \tag{sync\_S}  \label{red:sync}
%
%
%
\\
%
 \inferrule{
 M(e_1) = v \\
 M(e_2) = v 
 }{
 \dev{M}{\cond{e_1}{e_2}{C_1}{C_2}}
 \rightarrow \dev{M}{C_1} 
 }\tag{if1\_S} \label{red:if1}
 \\
 \inferrule{
 M(e_1) = v \\
 M(e_2) = v' \\
  v \neq v' 
  }{
 \dev{M}{\cond{e_1}{e_2}{C_1}{C_2}}
 \rightarrow \dev{M}{C_2} 
 }\tag{if2\_S} \label{red:if2}
\\
%
 \inferrule{
 }{
 \dev{M}{~ \bang{C} }
 \rightarrow \dev{M}{ C~|~\bang{C}}}\tag{bang\_S} \label{red:bang}
\\
%
%
 \inferrule{
M(e_1) = i \\
M(e_2) = v
 }{
 \dev{M\cup\{x \mapsto \{v_1,\dots v_i, \dots v_n\}}{\assign{x[e_1]}{e_2}; ~C}
 \\ \rightarrow \dev{M \cup \{x \mapsto  \{v_1,\dots v, \dots v_n\}\}}{C}
 }\tag{assignArray\_S} \label{red:assignArray}
\\
 \inferrule{
M(p) = \prin{k^+}{k^-}{\LR_2} \\
M(e) = \enc{v}{n}{\LR} \\
k^+ \not\in M(RS_1) \lor M(RS_1) \not\subseteq \LR
 }{ 
 \dev{M}{\decrypt{p}{e}{x}{S_{RS_1}}{C_1}{C_2} }
 \rightarrow 
 \dev{M }{C_2}
} \tag{dec\_false\_S}\label{red:decF}
\end{gather}
\caption{Other semantics rules}
\label{fig:other-semantics-rules}
\end{figure}


Reduction is defined modulo the standard equivalence rules, which may be applied in any context, e.g.\\

$D_1 ~|~ D_2 \equiv D_2 ~|~ D_1$ \\
$D_1 ~|~ (D_2 ~|~ D_2) \equiv (D_1 ~|~ D_2) ~|~ D_2$ \\
$\dev{M_1}{C_1~|~C_2} \equiv \dev{M_1}{C_2~|~C_1}$ \\
$\dev{M_1}{C_1~|~(C_2~|~C_3)} \equiv \dev{M_1}{(C_1~|~C_2)~|~C_2}$ \\

The rules not mentioned in \autoref{fig:semantics-rules} can be found in  \autoref{fig:other-semantics-rules}
and the types rules not mentioned in \autoref{fig:types-rules-cmd} and
\autoref{fig:types-rules-expr} can be found in
\autoref{fig:other-type-rules-a} and \autoref{fig:other-type-rules-a}.

%
%

%

\begin{figure}
\[
\begin{array}{rcll}
\attExpr & ::= & m_1 & \text{a variable of $\attKnowledge$}
\\ & | & \mathrm{encr}_{\attExpr_2}(\attExpr_1)  & \text{encrypt $\attExpr_1$ with rights $\attExpr_2$}
\\ & | & \mathrm{decr}_{\attExpr_1}(\attExpr_2) & \text{decrypt $\attExpr_1$ with the secret key $\attExpr_2$}
\\ & | & \mathrm{alter}(\attExpr_1,\attExpr_2) & \text{the principal $\attExpr_1$ where its rights are now $\attExpr_2$}
\\ & | & \mathrm{freshprin}(\attExpr) & \text{the $\attExpr$-th principal generated by the attacker}
\\ & | & \mathrm{pub}(\attExpr) & \text{the public key of a principal $\attExpr$}
\\ & | & \mathrm{sec}(\attExpr) & \text{the secret key of a principal $\attExpr$}
\\ & | & \mathrm{rights}(\attExpr) & \text{the rights of a principal $\attExpr$}
\\ & | &  \{\attExpr_1, \dots \attExpr_n \} & \text{tuple}
\\ & | &  \attExpr_1[\attExpr_2] & \text{element of a tuple}
\\ & | & \attExpr_1 \oplus \attExpr_2  & \text{where $\oplus$ is $+,-,\times,\dots$}
\\ & | & i & \text{an integer}
\end{array}
\]

\begin{gather}
 \inferrule{
  M(\attExpr_1) = v
 \\ M(\attExpr_2) = \{ \pubKeyOf{k_1} ,\ldots ,\pubKeyOf{k_n}\}
 \\ \fresh(n)
 }{
  M(\mathrm{encr}_{\attExpr_2}(\attExpr_1)) = \enc{v}{n}{\{ \pubKeyOf{k_1} ,\ldots ,\pubKeyOf{k_n}\}}
 } \tag{attacker\_encryption}\label{eexpr:encr}
 \\
 \inferrule{
  M(\attExpr_1) = \enc{v}{n}{\{ \pubKeyOf{k_1} ,\ldots ,\pubKeyOf{k_n}\}}
 \\ M(\attExpr_2) = \secKeyOf{k_i} \quad i \in \{1,\ldots,n\}
 }{
  M(\mathrm{decr}_{\attExpr_2}(\attExpr_1)) = v
 } \tag{attacker\_decryption}\label{eexpr:decr}
 \\
\inferrule{
  M(\attExpr_1) = \prinv{k}{r}
 \\ M(\attExpr_2) = r'
 }{
  M(\mathrm{alter}({\attExpr_1},\attExpr_2)) = \prinv{k}{r}
 } \tag{attacker\_alter}\label{eexpr:alter} 
 \\
\inferrule{
  M(\attExpr) = i \in \mathbb{N}
 }{
  M(\mathrm{freshprin}({\attExpr})) = \prinv{k_i}{\{\}}
 } \tag{attacker\_fresh}\label{eexpr:fresh} 
\end{gather}

\caption{extended expressions of the attacker and some of their reduction rules.}\label{fig:extended-expressions}
\end{figure}

\section{Proofs of correctness}
\label{app:proofs-correctness}
Here we provide the proof of all propositions of \autoref{sec:result}.

\subsection{From rights to allowed devices}
No proposition in this section.

\subsection{Definition of the attacker and of the open process semantics}

\subprocess*

\begin{proof}
By induction on the number of reduction steps.
We assume that the result is true for $n-1$ reductions
i.e we have $P ~|~ A \ra\!^* P' ~|~ A'$
which is mimicked by 
\[\attKnowledge \grants \subp{P}{S} \lra{l_1,\ldots,l_m} \subp{\attKnowledge' \grants P'_a}{S}\]
and we prove that if $P' \mid A' \ra P'' \mid A''$ then we can also have
\[\attKnowledge \grants \subp{P'_a}{S} \lra{l?} \subp{\attKnowledge' \grants P''_a}{S}.\]

\begin{enumerate}
 \item 
If the last reduction involves devices which are part of $S$, then this same reduction can be done on $\subp{P'_a}{S}$.
\item If the last reduction involves devices which are not part of $S$ then 
no reduction has to be done in the open semantics.
%
\item If the last reduction $\nu \activechans.c. P \mid A \ra P_1 \mid A_1$ is \eqref{red:i/o} 
from outside $S$ to $S$ transferring $v$ on $c$ then
the labeled reduction 
$\attKnowledge \grants \nu \activechans.P \{\ca{c} / c \} 
\lra{\LabIn{\ca{c},v}} \attKnowledge \grants \nu \activechans.P_1 \{\ca{c} / c \}$ is possible
since $v$ has been produced from values into memory not in $S$ and the attacker can derive all these values.
\item If the last reduction is a channel establishment between a device from $S$ and a one not from $S$,
the proof is similar. Note that the new channel name is annotated with $\ca{c}$ instead of $c$.
\item If the last reduction $\nu \activechans.c. P \mid A \ra P_1 \mid A_1$ is \eqref{red:i/o} 
from $S$ to outside $S$ transferring $v$ on $c$ then
the labeled reduction $\attKnowledge \grants \nu \activechans.P \lra{\LabOut{c,v}} \attKnowledgeAdd{v} \grants \nu \activechans.P_1$
exists and we still have for all $v'$ in memory of a device of $P$ which identifier is not in $S$,
$ \attKnowledgeAdd{v} \vdash v'$.
\end{enumerate}
\end{proof}

\subsection{Extended syntax with extra annotations}

\safeStandard*

\begin{proof}
 We assume that $P'$ is the $n$-th reduction of $P$.
 We denote by $P_m$, $P$ after $m$ reductions.
 Let $(I_i, \LR_i)= \backward{P \ra P_1 } (I_{i+1},\LR_{i+1})$
 for $0 \leq i < n$ and  $(I_n, \LR_n)=(\emptyset, \LR)$.
   
 We now prove by induction on the number of reduction steps
 that the annotation 
 $\attKnowledge_0; \refrights_0 \grants \subp{P}{D_H} 
 \Lra{l_1,\ldots,l_m} \attKnowledge_m; \refrights_m \grants \subp{P'_m}{D_H}$
 such that $\refrights_m = \bigcup_{0\leq k \leq m} \LR_i$ is \standard.
 
 Note that as the set is safe, 
 we have $\forall \pubof{k} \in \refrights_0,~ \attKnowledge_0 \ngetPower_{\refrights_0} \secof{k}$.
 Moreover we also $\LR \subseteq \refrights_n$ by definition.
 
 $\LR_m \subseteq \refrights_m$ where $\LR_m$ contains all the keys of $\LR$ which have already been created.
 
 The only reduction we need to consider are the one which might not be \standard i.e.
 when a reduction which creates a new principal $\pvalue=\prinv{k}{\{k_1, \ldots, k_n\}}$
 where $k$ is in $\bigcup_{0\leq k \leq m} \LR_i$.
 However, by design of the $\backward{}$ function, we have $\{k_1, \ldots, k_n\} \in \refrights_{m-1}$:
 the reduction is \standard.
\end{proof}

%

In the following auxiliary lemma, 
we state that \lowr expressions reduces to \lowr values while \highr expressions reduces
to \highr values.
\begin{lemma}\label{lemma:tag-on-expression}
 Let $e$ and $\mem$ be such that the sub-expressions $e_i : R_i$ of $e$ are such 
 that $\evaluates{\mem}{e_i}{\tagx{v_i}{\ell_i}}$  with $x_i=\high$ iff. $\refrights \lpc R_i$.
 Let $\tagx{v}{\ell}$ such that $\evaluates{\mem}{e}{\tagx{v}{\ell}}$.
 We have: $e : S_R$ with $\refrights \lpc R$ iff $\ell=\high$.
\end{lemma}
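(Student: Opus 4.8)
The plan is to prove the statement by structural induction on the expression $e$, using the hypothesis on the immediate sub-expressions $e_i : S_{R_i}$ (that $\ell_i = \high$ iff $\refrights \lpc R_i$) as the induction hypothesis. First I would dispatch the leaves. For an integer constant and for $\publicKey{p}$, the typing rules \eqref{type:int} and \eqref{type:pubKey} assign the type-right $\bot$, which is \lowr (public data being accessible to the attacker), and the evaluated value refers to no \highr location, hence carries the tag $\low$; so both sides of the biconditional are false and it holds vacuously. For a variable $x : S_R$, the equivalence is precisely the memory-consistency clause \eqref{safe:variables} of Definition~\ref{def:safe}: a location is annotated $\loch{x}{\tagged{v}}$ exactly when $\refrights \lc \mem(R)$ and $\locl{x}{\untagged{v}}$ otherwise, which is the desired statement at the base case.

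Next I would treat the forms whose result type is $\bot$. Both $\encE{e'}{RS}$ and $\release{p}$ are typed $\cipherType{S}_\bot$ and $\privKeyType_\bot$ by \eqref{type:enc} and \eqref{type:release}; by the annotation convention a value obtained from an encryption is re-tagged $\untagged{\cdot}$ at top level, the operand tag being retained only on the enclosed subterm, so the produced value is $\low$. Since $\bot$ is not \highr, both sides are again false. The array literal $\{e_1,\dots,e_n\}$ of \eqref{type:array} is the first compositional case, but an easy one: all components are forced to share the single right $R$, so the value refers to a \highr location iff some component does, iff (by the induction hypothesis, every component having the same right $R$) $\refrights \lc R$, matching the result type $\arrayType{S}_R$.

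The inductive cases for the binary operators \eqref{type:sum} and for array indexing \eqref{type:element} are the heart of the argument, because their result type carries the \emph{syntactic intersection} $R_1 \cap R_2$. Here I would relate the value annotation — which is $\high$ iff one of the operand sub-values is $\high$, hence by the induction hypothesis iff $\refrights \lpc R_1$ or $\refrights \lpc R_2$ — to the high-ness of $R_1 \cap R_2$. One direction is immediate from the subset characterisation of the \highr predicate: since $R_1 \cap R_2$ is contained in each $R_i$, whenever an operand right is \highr so is the intersection. The converse, that $\refrights \lpc (R_1 \cap R_2)$ forces one of the operand tags to be $\high$, is the delicate point, since syntactic intersection interacts non-trivially with the subset test against $\refrights$; I would close it by reading the tag-propagation rule of the annotated semantics against the \highr predicate so that the two directions line up, exactly as they must for the memory invariant \eqref{safe:variables} to be preserved when such a value is subsequently stored.

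I expect this intersection case to be the main obstacle: the leaves, the $\bot$-typed forms and the array literal all follow mechanically once the annotation conventions are unfolded, whereas the $\oplus$ and indexing cases are where the precise agreement between the operational tag and the type-level \highr test must be pinned down. Once both directions are settled there, assembling the induction over the remaining structure is routine.
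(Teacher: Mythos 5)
Your decomposition mirrors the paper's own proof: a case analysis over the constructors of $e$, with the sub-expression hypothesis playing the role of the induction hypothesis, the forms $\encE{e'}{RS}$, $\release{p}$ and $\publicKey{p}$ dispatched through their $\bot$-typed conclusions, and $e_1 + e_2$, $x[e']$ handled by confronting \eqref{type:sum} and \eqref{type:element} with the tag-propagation rules. One structural remark: the paper deliberately keeps variables out of this lemma --- they are treated in Lemma~\ref{lemma:no-tag<=>low-type}, whose statement carries the \wa{} hypothesis of Definition~\ref{def:safe}. Your variable case appeals to Item~\ref{safe:variables} of that definition, but the present lemma's hypotheses do not give you any such consistency between $\mem$ and the typing context, so that case only goes through if you import the \wa{} assumption explicitly.

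The genuine gap is the converse direction of your intersection case, and the strategy you sketch cannot close it. Once the induction hypothesis and the tag-propagation rules have been applied, what remains is a statement about rights alone: if $\refrights \lpc R_1 \cap R_2$ then $\refrights \lpc R_1$ or $\refrights \lpc R_2$ --- equivalently, the syntactic intersection of two \lowr{} rights is \lowr. ``Reading the tag-propagation rule against the \highr{} predicate'' cannot supply this: those rules relate the result tag to the operand tags, and the induction hypothesis has already converted operand tags into operand rights, so this appeal merely restates the obligation. Invoking preservation of the memory invariant is circular in the other direction as well, since the present lemma (through Lemma~\ref{lemma:no-tag<=>low-type}) is an ingredient of the subject-reduction proof, Proposition~\ref{lemma:subred}, not a consequence of it. For comparison, the paper's own proof discharges exactly this step by bare assertion (``due to the typing rule $a$ or $b$ has type $\Int_{R'}$ with $\refrights \lpc R'$''), and the assertion is not innocent: taking \highr{} as stated in Section~\ref{sec:annotations} (all keys of the right lie in $\refrights$), two incomparable \lowr{} rights sharing one safe key, say $R_1=\{k_1,k_a\}$ and $R_2=\{k_1,k_b\}$ with $k_1\in\refrights$ and $k_a,k_b\notin\refrights$, have the \highr{} intersection $\{k_1\}$. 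So closing your ``delicate point'' needs either an additional invariant restricting which rights can meet in a well-typed expression, or a reformulation of the statement; you correctly located the crux, but neither your plan nor the paper's one-line assertion actually proves it.
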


\begin{proof}
 If $\refrights \lpc R$, we consider all constructors:
 \begin{enumerate}
 \item $e=a+b$. 
 If $e : S_R$ with $\refrights \lpc R$, due to the typing rule 
 $a$ or $b$ has type $\Int_{R'}$ with $\refrights \lpc R'$,
 by the induction hypothesis, one of them reduced to a \highr value and 
 the reduction rule states that the result has to be \highr.
 If $e : S_R$ with $\refrights \nlpc R$ due to the typing rule
 both $a$ and $b$ has type $\refrights \nlpc R'$,
 by hypothesis both of them reduce to a \lowr value, 
 the reduction rule states that the result will not be \lowr.
 \item $e=x[e']$. 
  If $e : S_R$ with $\refrights \lpc R$, due to the typing rule \ref{type:element}
 $x$ or $e'$ has type with label $\refrights \lpc R'$,
 by the induction hypothesis, one of them reduced to a \highr value and 
 the reduction rule states that the result has to be \highr.
 If $e : S_R$ with $\refrights \nlpc R$, due to the typing rule
 both $x$ and $e'$ has type $\refrights \nlpc R'$,
 by hypothesis both of them reduce to a \lowr value, 
 the reduction rule states that the result will be \lowr.
 \item $e=\encE{e_1}{R}$.
  According to \ref{type:enc}, we have $e : S_\bot$ and reduces to a \lowv value (since the new root is \lowv).
 \item $e= \release{P}$, according to \ref{type:release}, $e :\SecK_\bot$ and reduces to a \lowv value.
 \item $e=\publicKey{P}$, according to \ref{type:pubKey}, $e :\PubK_\bot$ and reduces a \lowv value.
\end{enumerate}
\end{proof}

\begin{lemma}\label{lemma:no-tag<=>low-type}
 Let $\refrights,C,\mem,y$ such that $\safe{\refrights}{\mem}{y}{C}$ for some $\ctx$ and
 where $C$ is a command with an expression $e$ such that $\typeVar{e} : S_R$.
 We have that $\refrights \lpc R$ iff. $\evaluates{\mem}{e}{\tagged{v}}$ for some value $v$.
\end{lemma}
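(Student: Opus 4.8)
The plan is to obtain \autoref{lemma:no-tag<=>low-type} as a corollary of the already-proved \autoref{lemma:tag-on-expression}. That earlier lemma is exactly the equivalence we want, but it assumes the tag/type correspondence already holds on the sub-expressions of $e$; the only thing \autoref{lemma:no-tag<=>low-type} adds is that this correspondence can be read off from the well-annotated hypothesis $\safe{\refrights}{\ell}{\mem}{C}$ (I read the four-place predicate of \autoref{def:safe}, the ``$y$'' in the statement being the thread annotation). So first I would unfold $\safe{\refrights}{\ell}{\mem}{C}$ to obtain the typing derivation $\pc;\GammaPrin;\GammaKey;\GammaChan;\ctx \vdash C$ of case~\eqref{safe:well-type}, and then, by inversion on this derivation, extract the typing $\typeVar{e : S_R}$ of the distinguished expression $e$ under the same context $\ctx$.

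Next I would discharge the premise of \autoref{lemma:tag-on-expression} on each leaf of $e$. For a variable leaf $x$ with $x : S_{R_x} \in \ctx$, case~\eqref{safe:variables} of \autoref{def:safe} states precisely that $\mem$ contains $\loch{x}{\tagged{v}}$ when $\refrights \lc \mem(R_x)$ and $\locl{x}{\untagged{v}}$ otherwise; that is, $x$ evaluates to a \highv value iff $\refrights \lpc R_x$, which is the required leaf correspondence. (The alternative sub-case of~\eqref{safe:variables}, where $x \in \GammaPrin \cup \GammaKey$ is a \lowr location, is consistent because principal- and key-typed leaves are typed at $\bot$.) For the constant leaves, rules \eqref{type:int} and \eqref{type:pubKey} type $i$ and $\publicKey{p}$ at $\bot$, which is never \highr, and their evaluation yields \lowv values, so the correspondence holds on the \low\ side; and for the two ``tag-sinking'' forms $\encE{e'}{RS}$ and $\release{p}$, rules \eqref{type:enc} and \eqref{type:release} assign type $\bot$ while the annotation discipline makes the outermost value \lowv (the original tag being pushed onto the enclosed subterm), so these too are \low\ on both sides. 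With every leaf verified, the premise of \autoref{lemma:tag-on-expression} holds for $e$, and applying it gives $\refrights \lpc R$ iff $\evaluates{\mem}{e}{\tagged{v}}$.

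The step I expect to be delicate is the alignment between the \emph{syntactic} rights manipulated by the type system and their \emph{semantic} evaluation against $\refrights$. The predicate deciding a value's tag compares the actual key set $\refrights$ with the evaluated right $\mem(R)$, whereas the type of a composite expression is built syntactically — in particular the sum rule \eqref{type:sum} and the array-element rule use the syntactic intersection $R_1 \cap R_2$. I therefore need to read ``$\refrights \lpc R$'' in the statement as ``$\refrights \lpc \mem(R)$'', and to ensure that in the intersection cases a right being \highr is inherited correctly between a node and its components; this is exactly the compositional reasoning carried out inside \autoref{lemma:tag-on-expression}, so I would lean on that lemma rather than redo it, being careful that the rights-evaluation rule \eqref{red:rights} together with the no-aliasing convention on rights cannot let two distinct key names that evaluate to the same key flip a \lowr type into a \highv value. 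Confirming that this pathology is excluded under the invariants maintained by $\safe{\refrights}{\ell}{\mem}{C}$ is the crux of the argument.
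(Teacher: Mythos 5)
Your proposal takes essentially the same route as the paper: the paper's proof is a structural induction on $e$ whose base case (a variable) is discharged by Item~\ref{safe:variables} of \autoref{def:safe} and whose inductive case is discharged by \autoref{lemma:tag-on-expression} --- exactly the two ingredients you identify, including your observation that the constant and tag-sinking forms ($\encE{e'}{RS}$, $\release{p}$, $\publicKey{p}$) are already handled inside that lemma. The one adjustment is that your ``verify the leaves, then apply \autoref{lemma:tag-on-expression} once'' framing should be stated as an explicit induction, since that lemma's premise concerns the (possibly compound) sub-expressions of $e$ rather than its leaves; this is precisely the induction the paper makes explicit, so the repair is immediate and your argument coincides with the paper's.
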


\begin{proof}
By induction on the structure of $e$.
If $e$ is a variable, we conclude from Item \ref{safe:variables} of \autoref{def:safe}.
For the inductive case, we use \autoref{lemma:tag-on-expression} to conclude.
\end{proof}

%
%

\subjectreduction*

\begin{proof}
W.l.o.g. we consider that the reduction is done on $D_1$ (eventually $D_1$ and $D_2$ if it is a communication between two device).
We write $D_1=\dev{\mem_1}{\thr{x}{C_1} ~|~ \ldots ~|~ \thr{x_n}{C_n}}$ and
$D'_1=\dev{\mem'_1}{\thr{x'}{C'_1} ~|~ \ldots ~|~ \thr{x_n}{C'_n}}$.
Let ${\GammaPrin}_i$, ${\GammaKey}_i$, $\ctx_i$ and $\pc_i$ such that $\safe{\refrights}{x_i}{\mem_i}{C_i}$.
We prove all the conditions of \wa separately.
\begin{enumerate}
\item[\ref{safe:well-type}.] {\bf $C_i'$ is well-typed for some $\pc_i',\ctx_i'$}

 By hypothesis, we know that $C_i$ is well-typed for some ${\GammaPrin}_i$, ${\GammaKey}_i$, $\pc_i,\ctx_i$.
 If $C_i := C  \mid  C'$ then the two new threads after reduction are well-typed with ${\GammaPrin}_i$, ${\GammaKey}_i$, $\ctx_i$ and $\pc_i$
 due to \eqref{type:para}.
 If $C_i := !C$ then the two new threads are also well-typed due to \eqref{type:bang}.
 Otherwise, the form of $C_i$ is $c ; C'_i$ (up to name renaming) 
 or $\textrm{cond}(c)\THEN C_a \ELSE C_b$ where $C_a$ or $C_b$ is
 an renaming of $C'_i$.
 All the corresponding type rules enforces that $C_i$ is well-typed 
 only if $C'_i$ is well-typed.
 
\item[\ref{safe:variables}.] {\bf  For all location $x$ in $\mem$, we have 
  \begin{itemize}
   \item either $\locl{x}{\untagged{v}}$ in $\mem$ for some value $v$ and $x \in \GammaPrin \cup \GammaKey$
  \item or $\GammaPrin;\GammaKey;\ctx \vdash x : S_R$ and
  \begin{itemize}
   \item if $\refrights \lc \mem(R)$, $\loch{x}{\tagged{v}}$ in $\mem$, 
   \item if $\refrights \nlc \mem(R)$, $\locl{x}{\untagged{v}}$ in $\mem$ for some value $v$. 
  \end{itemize}
\end{itemize}}

First, we consider the annotation of variables.
Once a memory location is created, 
there is no reduction rule to change its type ($\lowr$ or $\highr$),
$\refrights$ never becomes more confidential ($\refrights$ increases only with new fresh keys) 
and all types rules enforce that the reduced command $C'$ can be typed 
with a context which preserves rights for all variables.
For these reasons, if the property holds when a variable is created, it holds forever.
During creation, by design of the annotated rules, the location is created as \highr or \lowr 
only depending on if $\refrights \lpc R$ for regular variable and is always \lowr
when the location contains a principal or a key.
For inputs, since the input command does not specify the type, 
we look at the annotation of the channel,
which is itself defined from the condition $\refrights \lpc R_1$ in the annotated semantic rules 
about channel creation
\eqref{red:h:open-for}, \eqref{red:lh:open-for}, \eqref{red:ll:open-for}
for channel opening with the attacker, 
from the fact that $\attKnowledge \ngetPower_\refrights \secof{k}$ for all $\pubof{k} \in \refrights$ 
we know that the channel has a type for values less confidential than $\refrights$.

For the annotation of values, we need to check that no \highv values are stored into \lowr memory
(the annotation always produces \highv values from \highr memory).
 The reductions which set new values to \lowr memory locations are the following: 
 \eqref{red:l:aff}, \eqref{red:l:new_var}, \eqref{red:let},
 \eqref{red:ll:dec1}, \eqref{red:l:i/o} and \eqref{redatt:input}.
 For the first rules (\eqref{red:l:aff}, \eqref{red:l:new_var} ), 
 the stored value is the evaluation of an expression $E$.
 Since the type rules associated to these commands enforces $E$ to be of a type $R$ such that $\refrights \lpc R$, 
 according to Lemma \ref{lemma:no-tag<=>low-type}, $E$ reduces to a \lowv term.
 The argument is the same for \eqref{red:let}.

For the internal communication \eqref{red:l:i/o}, 
the argument is the same except that the type judgment of $E$ is on the other device.
For the input from the attacker \ref{redatt:input}, 
the rule set that attacker's inputs are \lowv.

For rule \eqref{red:ll:dec1} where one layer of encryption is removed, 
we have to ensure that the decrypted value is \lowv.
The typing rule requires that $\refrights \nlpc R_1$ and $R_2 \lc R_1$
which implies $\refrights \nlpc R_2$.
From Item~\ref{safe:cipher-variable}, we conclude.

The rule \eqref{red:hl:dec1} can never be applied. 
Indeed, if the cipher is \highv, from \autoref{lemma:no-tag<=>low-type}, the cipher has a \highr type,
then the typing rule enforce the \progc to be high which means the thread has to be \hight.

\item[\ref{safe:pc}.] {\bf $\refrights \lc \pc$ if and only if $y=\high$.}

First note that all types rules of commands requires that their continuation can be typed with the same \progc.
The interesting cases are when the type rules requires the continuation to be typed with a more restrictive \progc
and, for the other direction of the equivalence, when the reduction rules reduces a \lowt thread into a \hight thread.

\begin{itemize}
 \item Conditional is typed with \eqref{type:if}: the \progc $\pc'$ of the continuation if set to $\refrights \lc \pc'$
 when at least one of the expressions is more confidential than $\refrights$ (we assume that initially $\refrights \nlpc \pc$).
 According to \autoref{lemma:no-tag<=>low-type}, one of the expression reduces to a \highv value.
 In this case, the reductions rules \eqref{red:h:if1} and \eqref{red:h:if0} sets the thread to \hight.
 Reciprocally, if the thread is set to \hight, on of the value is \highv and according to \autoref{lemma:no-tag<=>low-type}
 the type of the continuation is $\refrights \lc \pc'$.
 \item Secure channel opening is typed with \eqref{type:open-cert-client} and \eqref{type:open-cert-server}.
 Both type rules requires the continuation to have a \progc $\pc'$ set such that $\refrights \lc \pc'$ iff
 $\refrights \lc R_2$ where $R_2$ is the right to see the channel.
 The reduction rule \eqref{red:h:open-for} can be applied only when $R_2$ and $R'_2$, the rights to see the channel of both threads
 are equal and sets both threads to \hight exactly with the same condition.
 \item Decryption is typed with \eqref{type:dec}, the \progc $\pc'$ of the continuation if set to $\refrights \lc \pc'$
 when the expression to decrypt has rights $R$ with $\refrights \lc R$.
 This coincides with the reductions rules \eqref{red:hh:dec1}, \eqref{red:hl:dec1},\eqref{red:h:dec0}
 where the thread becomes \hight iff the decrypted value is \highv (due to \autoref{lemma:no-tag<=>low-type}). 
\end{itemize}

\item[\ref{safe:channel}.] {\bf For all $c$ in the thread such that $\ctx \vdash c : \Chan({S}_{R_1})_{R_2}$, 
  the annotation of $c$ is $c^z_{(z')}$ 
  where $z=\high$ iff. $\refrights \lc R_2$, $z'=\high$ iff. $\refrights \lc R_1$.}

  The annotation on channel is definitively fixed once it is created.
  The channel establishment rules \eqref{red:h:open-for}, \eqref{red:lh:open-for},
  \eqref{red:lh:open-for} and \eqref{red:ll:open-for} all enforce this property in their premisses.
  Public channels have type $\bot$ and are annotated with $c^-_\low$.

\item[\ref{safe:no-high-attacker-chan}] {\bf For all $\ca{c}$, 
we have $\ctx \vdash \ca{c} : \Chan({S}_{R_1})_{R_2}$ 
  with $\refrights \nlc R_2$ and $\refrights \nlc R_1$.}

The interesting case is the creation of a new channel with the attacker.
This can be done through rules \eqref{redatt:auth-server-chan-for} 
and \eqref{redatt:auth-client-chan}.
By hypothesis, the attacker does not know any key in $\refrights$ so it can only provides a secret key $\secof{k}$
not in $\refrights$.
Due to \eqref{type:open-cert-client} or \eqref{type:open-cert-server},
$R_1$ and $R_2$ should contains $\pubof{k}$.
Therefore neither $R_1$ or $R_2$ is a subset of $\refrights$, which means $\refrights \nlc R_2$ and $\refrights \lc R_1$
as $\vprin$ is required to be empty. 

\item[\ref{safe:low-pc-low-chan}] {\bf If $y$ is $\low$, then there is no $\chanhh{c}$ in $C$.}

This is true before reduction. Since establishing a new channel $\chanhh{c}$ set the thread
to \hight, the property is also true after reduction.

\item[\ref{safe:cipher-variable}]{ \bf For all $x$ in memory, if a sub-term of $x$ is $\enc{\tagged{t}}{n}{RS}$
  then $\refrights \lc RS$.}
  
The encryption typing rule for $E=\encE{E_1}{R}$, \eqref{type:dec}, requires that $E_1 :R_1$ is such that $R_1 \lc R$.
From \autoref{lemma:no-tag<=>low-type}, 
we have $\evaluates{\mem}{E_1}{\tagged{v}}$ implies $\refrights \lc R_1$ 
and therefore $\refrights \lc R$ which allows to conclude.

\item[\ref{safe:released-prin}] {\bf For all  $x$ in memory, if a sub-term of $x$ matches $\enc{\prin{k}{\LR}}{n}{\LR'}$
  then $\LR = \LR'$ or $\pubof{k} \notin \refrights$.}
  
  Encapsulated principals can only be produced with a $\release{P}$ expression or by an input of the attacker.
  In the first key, the release reduction \eqref{rede:release} always encrypts the principal $\prinv{k}{\LR}$
with the keys of $\LR$ in the other case since by hypothesis we have $\attKnowledge \ngetPower_{\refrights} \secof{k}$
we also have $\attKnowledge \ngetPower_{\refrights} \prin{k}{r}$ for any $r$ so the attacker cannot output this value.
\end{enumerate}

Finally, we prove that when the annotation is \standard, we have $\attKnowledge' \ngetPower_{\refrights'} \secof{k}$
for all $\pubof{k} \in \refrights'$. 
The result is immediate for reductions which do not change $\attKnowledge$ or $\refrights$.

The set $\attKnowledge$ increases only with the reduction \eqref{redatt:output}.
The values which contains secret keys of $\refrights$ are of type $\privKeyType$ 
(or a combination of tuples and encryptions of this type).
According to item \ref{safe:released-prin}, any  $\prinv{k}{R}$ is encrypted with $R$.
Moreover, by definition of \standard annotation, all keys $\prinv{k_i}{R_i}$ in $\refrights$ 
are such that all keys of $R_i$ are also in $\refrights$.
Since $\attKnowledge \ngetPower_\refrights k$, the attacker cannot decrypt the key.

The reduction which modify $\refrights$ is \eqref{red:h:nwp}, but the new key is fresh
so the attacker cannot derive new facts from the fact that $\attKnowledge \ngetPower_\refrights' v$
is syntactically more powerful than  $\attKnowledge \ngetPower_\refrights v$.
\end{proof}


%

 \initialprocess*

\begin{proof}
Condition \autoref{safe:low-pc-low-chan} of \wa is true since $\pc=\bot$ and the tread is $\low$. 
Since $\mem_i$ and $\ctx$ only contains principals and public keys, the other conditions which concerns 
other kind of values and variables are immediately true. 
\end{proof}

\subsection{Labelled bisimilarity}

\begin{lemma}\label{lemma:memrel=>eval-equal}
 When two memories $\mem_1$ and $\mem_2$ only contains \lowr locations,
 $\mem_1 \nvdash \secof{k}$ and  $\mem_2 \nvdash \secof{k}$ for all $k$ such that $\pubof{k} \in \refrights$
and  $\mem_1 \memrel{\refrights} \mem_2$ 
 implies that, for all extended attacker expression $f$ (\autoref{fig:extended-expressions}),
 $\erase{\refrights}{\mem_1(f)}=  \erase{\refrights}{\mem_2(f)}$.
\end{lemma}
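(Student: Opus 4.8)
The plan is to proceed by structural induction on the extended attacker expression $f$, supported by a structural observation about the obfuscation function of Definition~\ref{def:erase}: for any value or principal $w$, $\erase{\refrights}{w}$ has the same outermost constructor as $w$, and differs from $w$ only by replacing, inside ciphers whose key set is a subset of $\refrights$, the plaintext by $\erasedValue$. In particular $\erase{\refrights}{\cdot}$ is the identity on integers, on public and secret keys, on principals $\prinv{k}{r}$, and on $\NotAValue$; it commutes with array formation; and on a cipher it preserves both the nonce and the key set $\LR'$. Consequently, from $\erase{\refrights}{v}=\erase{\refrights}{w}$ one can read back that $v$ and $w$ have the same top-level shape (same integer, same key, same principal, ciphers with identical nonce and identical $\LR'$, or arrays of equal length with componentwise equal erasures). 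This inversion is what lets every side condition in the evaluation rules of \autoref{fig:extended-expressions} --- ``is a principal'', ``is a cipher containing $\pubof{k}$'', ``is an integer'' --- hold in $\mem_1$ exactly when it holds in $\mem_2$, so that $\mem_1(f)$ is defined (or equal to $\NotAValue$) iff $\mem_2(f)$ is. I would also fix the convention that evaluation is deterministic, with the fresh nonce in \eqref{eexpr:encr} chosen canonically from the syntax, so that corresponding encryptions in the two memories use the same $n$.

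The base cases are immediate: for a variable $m_1$, Definition~\ref{def:eqmem} (i.e.\ $\mem_1\memrel{\refrights}\mem_2$) gives $\erase{\refrights}{\mem_1(m_1)}=\erase{\refrights}{\mem_2(m_1)}$ directly, since low locations have equal erasures and principal locations are identical; and for an integer $i$ both sides are $i$. For the compound non-cryptographic constructors ($\oplus$, tuple/array, and indexing) the induction hypothesis together with the structural inversion above shows that the two evaluations agree on top-level shape, hence both succeed or both yield $\NotAValue$, and erasure commutes with these constructors. The principal functions $\mathrm{pub},\mathrm{sec},\mathrm{rights},\mathrm{alter},\mathrm{freshprin}$ are easy because erasure is the identity on principals and keys: the induction hypothesis forces the argument principals (or the integer index, for $\mathrm{freshprin}$) to be literally equal in the two memories, so the results are literally equal. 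For encryption $\mathrm{encr}_{\attExpr_2}(\attExpr_1)$, the induction hypothesis forces the key set $\LR'=\mem_1(\attExpr_2)=\mem_2(\attExpr_2)$ to coincide (erasure acts as the identity on sets of public keys); then $\erase{\refrights}{\enc{\cdot}{n}{\LR'}}$ splits exactly as in Definition~\ref{def:erase}: if $\refrights\lc\LR'$ both sides collapse to $\enc{\erasedValue}{n}{\LR'}$ irrespective of the plaintext, and if $\refrights\nlc\LR'$ both sides are $\enc{\erase{\refrights}{\mem_i(\attExpr_1)}}{n}{\LR'}$, equal by the induction hypothesis on $\attExpr_1$.

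The only delicate case --- and the crux of the whole lemma --- is decryption $\mathrm{decr}_{\attExpr_2}(\attExpr_1)$. The point is to show that whenever the attacker can actually decrypt in one memory, the relevant cipher was never obfuscated, so its erased content was already pinned down by the induction hypothesis. Concretely, rule \eqref{eexpr:decr} requires $\mem_i(\attExpr_2)=\secof{k}$; since erasure is the identity on secret keys, the induction hypothesis forces the same $\secof{k}$ in both memories, and the fact that $\attExpr_2$ computes $\secof{k}$ from the attacker memory means $\mem_i\vdash\secof{k}$. Combined with the hypothesis $\mem_i\nvdash\secof{k}$ for every $\pubof{k}\in\refrights$, this yields $\pubof{k}\notin\refrights$. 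The rule also requires $\mem_i(\attExpr_1)=\enc{v_i}{n}{\LR'}$ with $\pubof{k}\in\LR'$ (the key set $\LR'$ and nonce $n$ already forced equal across the two memories by the induction hypothesis), so from $\pubof{k}\in\LR'$ and $\pubof{k}\notin\refrights$ we get $\refrights\nlc\LR'$. Therefore the cipher lies in the non-erased branch of Definition~\ref{def:erase}, and the induction hypothesis on $\attExpr_1$ already gives $\enc{\erase{\refrights}{v_1}}{n}{\LR'}=\enc{\erase{\refrights}{v_2}}{n}{\LR'}$, hence $\erase{\refrights}{v_1}=\erase{\refrights}{v_2}$. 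Since the decryption returns exactly $v_i$, we conclude $\erase{\refrights}{\mem_1(\mathrm{decr}_{\attExpr_2}(\attExpr_1))}=\erase{\refrights}{\mem_2(\mathrm{decr}_{\attExpr_2}(\attExpr_1))}$; and when the side conditions fail, the structural inversion guarantees they fail in both memories, so both evaluate to $\NotAValue$. The main obstacle is precisely this interplay: one must ensure that the erasure predicate $\refrights\lc\LR'$ that drives obfuscation coincides with attacker-undecryptability, which holds only because of the standing invariant that the attacker never derives a secret key whose public counterpart lies in $\refrights$.
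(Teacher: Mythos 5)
Your proposal is correct and takes essentially the same route as the paper's own proof: structural induction on $f$, with the base case supplied by Definition~\ref{def:eqmem} and the crux in the $\mathrm{decr}$ case, where the hypothesis that neither memory derives $\secof{k}$ for any $\pubof{k}\in\refrights$ makes obfuscation of a cipher coincide with attacker-undecryptability (the paper argues that an obfuscated cipher fails to decrypt in both memories; you argue that a successful decryption can only happen on a non-obfuscated cipher --- the same dichotomy read contrapositively). Your extra care --- the structural-inversion observation behind equal erasures, the explicit tracking of $\NotAValue$/definedness across the two memories, and the canonical-nonce convention for \eqref{eexpr:encr} --- only makes explicit points that the paper's terser proof leaves implicit.
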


\begin{proof}
 By induction on the structure of $f$. 
 This is true if $f$ is a variable or an integer.
 We assume that $\erase{\refrights}{\mem_1(f_i)}=\erase{\refrights}{\mem_2(f_i)}$ for all sub-expressions $f_i$ of $f$.
 For $\mathrm{freshprin}(i)$, $i$ has to evaluate has an integer. Since
 the function $\mathrm{freshprin}(i)$ returns the $i$-th principal of a determined sequence of principals,
 the expression returns the same key in both memory.
 For $\mathrm{encr}_r(f)$, since the extended expression $r$ has to evaluate to a right it evaluate to the same right for both
 memories, so the result is either obfuscated with both memories or none of them.
 For $\mathrm{decr}_s(f)$, the result is immediate if $f$ evaluates to a cipher which is not obfuscated.
 Otherwise, $f$ is a cipher encrypted with rights $r'$, $\refrights \lc r'$, by hypothesis, 
 we have that $s$ cannot corresponds to any key in $r'$ therefore the decryption fails (result $\NotAValue$) in both memories.
\end{proof}

\begin{lemma}\label{prop:memrel=>static-equiv}
 When two memories $\mem_1$ and $\mem_2$ only contain \lowr locations, 
 $\mem_1 \nvdash \secof{k}$ and $\mem_2 \nvdash \secof{k}$ for all $k$ such that $\pubof{k} \in \refrights$
and  $\mem_1 \memrel{\refrights} \mem_2$ 
 implies that $\mem_1 \approx_s \mem_2$.
\end{lemma}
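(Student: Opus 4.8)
The plan is to read off this statement as a corollary of the immediately preceding \autoref{lemma:memrel=>eval-equal}, which under identical hypotheses already gives $\erase{\refrights}{\mem_1(f)} = \erase{\refrights}{\mem_2(f)}$ for every extended attacker expression $f$. First I would dispatch the easy half of $\approx_s$, namely that $\mem_1$ and $\mem_2$ have the same variable names: since both memories contain only \lowr locations and $\mem_1 \memrel{\refrights} \mem_2$, \autoref{def:eqmem} pairs each \lowr location of one memory with an equally named location of the other (and matches principal locations in both directions), so the two domains coincide. It then remains to show that no attacker test separates the two memories, i.e.\ that for all extended expressions $\attExpr_1,\attExpr_2$ with variables in the common domain, $\mem_1(\attExpr_1) = \mem_1(\attExpr_2)$ holds exactly when $\mem_2(\attExpr_1) = \mem_2(\attExpr_2)$.

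For this I would combine \autoref{lemma:memrel=>eval-equal}, applied to $\attExpr_1$ and $\attExpr_2$, with a bridging fact that equality of \emph{erasures} coincides, \emph{within a single memory}, with equality of the underlying values on attacker-reachable terms. Granting that bridging fact, the biconditional follows from the chain
\begin{align*}
 \mem_1(\attExpr_1) = \mem_1(\attExpr_2)
 &\iff \erase{\refrights}{\mem_1(\attExpr_1)} = \erase{\refrights}{\mem_1(\attExpr_2)} \\
 &\iff \erase{\refrights}{\mem_2(\attExpr_1)} = \erase{\refrights}{\mem_2(\attExpr_2)} \\
 &\iff \mem_2(\attExpr_1) = \mem_2(\attExpr_2),
\end{align*}
where the outer equivalences use the intra-memory bridging fact and the middle one is \autoref{lemma:memrel=>eval-equal}.

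The bridging fact is where the real work sits, and it is the step I expect to be the main obstacle. Since $\erase{\refrights}{\cdot}$ is the identity on integers, keys, and principals and only rewrites tuples (componentwise) and ciphers, an induction on the structure of a value shows that the only way two distinct attacker-reachable values of a single memory can collapse to the same erasure is through a cipher $\enc{v}{n}{\LR'}$ with $\refrights \lc \LR'$, whose plaintext is replaced by $\erasedValue$. Two such ciphers have equal erasures iff they share the nonce $n$ and the right $\LR'$; so injectivity reduces to the invariant that within one execution distinct encryption events draw distinct fresh nonces, whence a shared nonce forces the ciphers to be literally identical (plaintext included). I would need to check this invariant also for attacker-produced ciphers, since the $\mathrm{encr}$ rule of \autoref{fig:extended-expressions} draws a fresh $n$ as well, and to use the hypothesis $\mem_1 \nvdash \secof{k}$, $\mem_2 \nvdash \secof{k}$ for $\pubof{k}\in\refrights$ to rule out the attacker recovering a \highr-protected plaintext and thereby forging a collision.

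Finally, I would stress why the proof is routed through the erasures of \emph{both} sides rather than attempting to prove $\mem_1(\attExpr) = \mem_2(\attExpr)$ directly: injectivity of $\erase{\refrights}{\cdot}$ holds only \emph{inside} a single memory. Across $\mem_1$ and $\mem_2$ the same fresh nonce may legitimately protect two different \highv plaintexts — this is exactly what \autoref{def:eqmem} permits, and it is the whole point of the noninterference argument — so $\mem_1(\attExpr)$ and $\mem_2(\attExpr)$ genuinely need not be equal as raw values. What the hypotheses do guarantee is that this residual difference is sealed under ciphers the attacker cannot open, hence invisible to any equality test, which is precisely what static equivalence records.
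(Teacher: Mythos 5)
Your proposal is correct and follows essentially the same route as the paper's proof: the paper likewise reduces static equivalence to \autoref{lemma:memrel=>eval-equal} and then closes the gap between equality of erasures and equality of values by noting that erasure preserves nonces, that no party (honest or attacker) can produce a cipher reusing a non-fresh nonce, and hence that each nonce is uniquely associated with a single value. The ``bridging fact'' you isolate (injectivity of $\erase{\refrights}{\cdot}$ within a single memory on attacker-computable values) is precisely what the paper argues informally in the closing sentences of its proof.
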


\begin{proof}
 Let $f$ and $g$ be two extended expressions which evaluates to the same value $v$ under $\mem_1$ ($\mem_1(f)=\mem_1(g)=v$). 
 We show that $\mem_2(f)=\mem_2(g)$.
 From \autoref{lemma:memrel=>eval-equal}, we have 
 $\erase{\refrights}{\mem_2(f)}=\erase{\refrights}{\mem_2(g)}=\erase{\refrights}{v}$.
 However the obfuscation function does not obfuscate the nonce of ciphers.
 And the attacker power does not allow the attacker to generate a cipher with a nonce which is not fresh
 (this is due to the fact that nonce cannot be extracted from ciphers from which the key is unknown).
 Therefore each nonce is uniquely associated to a unique value.
 As nonces are equals the obfuscated values are equals too.
%
\end{proof}

 \begin{lemma}\label{lemma:no-tag=>equiv}
 Given a command context $C$, a set of public keys $\refrights$, 
 two memories $\mem$ and $\mem'$, an expression $E$ and $\ell' \in \{ \low, \high \}$,
if we have $\safe{\refrights}{\mem}{\ell'}{C[E]}$ and $\safe{\refrights}{\mem'}{\ell''}{C[E]}$, $\mem \memrel{\refrights} \mem'$
 , $\evaluates{\mem}{E}{\untagged{v}}$ and $\evaluates{\mem'}{E}{\tagx{v'}{\ell}}$ 
 then $\erase{\refrights}{v}=\erase{\refrights}{v'}$.
\end{lemma}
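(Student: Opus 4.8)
The plan is to prove this by structural induction on the expression $E$, combining two ingredients: the annotation discipline's characterisation of which locations a \lowv value may read (item~3 of the annotation rules together with \autoref{lemma:no-tag<=>low-type}), and the fact that $\erase{\refrights}{\cdot}$ is a homomorphism for every expression operator. The latter is crucial and holds precisely because expressions never decrypt (decryption lives in the $\mathbf{decrypt}$ command, not in the expression grammar), so no expression operator ever inspects the plaintext hidden by $\erase{\refrights}{\cdot}$. Throughout I would use that $\mem \memrel{\refrights} \mem'$ keeps the \lowr locations, the principal locations, and the (\lowr) key locations synchronised up to $\erase{\refrights}{\cdot}$, while leaving the \highr locations free to differ.

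First I would dispatch the base cases. For $E=i$ the integer erases to itself on both sides. For $E=\publicKey{p}$ the result is the public key at the principal location $p$, which by $\memrel$ is present with the identical value in both memories and is a fixed point of $\erase{\refrights}{\cdot}$. For $E=x$, the hypothesis $\evaluates{\mem}{x}{\untagged{v}}$ says $x$ is a \lowr location in $\mem$; the equivalent-memory relation then supplies a \lowr location $\locl{x}{v'}$ in $\mem'$ with $\erase{\refrights}{v}=\erase{\refrights}{v'}$, and since a memory is a function this is exactly the location read by $\evaluates{\mem'}{x}{\cdot}$ (so $x$ cannot be \highr in $\mem'$), giving the claim.

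For the non-encryption inductive cases I would first note that, by item~3, a non-encryption expression is \lowv iff none of its variables refers to a \highr location; hence each immediate subexpression of $E$ is itself \lowv under $\mem$, so the induction hypothesis applies to each. Since $\erase{\refrights}{\cdot}$ commutes with array formation, with array indexing, and with arithmetic, erasure-equality of the subvalues propagates to the result: for arithmetic this is ordinary equality because integers and $\NotAValue$ are fixed points of $\erase{\refrights}{\cdot}$ (so an operand erasing to an integer must itself be that integer, and a non-integer operand stays non-integer, making the two $\NotAValue$ cases coincide), and for indexing the index agrees and $\erase{\refrights}{\cdot}$ preserves array length so out-of-bounds accesses match. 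The case $E=\release{p}$ is handled as for $\publicKey{p}$: the principal $p$ matches in both memories, so the two released encryptions are syntactically identical once the fresh nonce is drawn identically, and their erasures therefore agree.

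The delicate case, and the one I expect to be the main obstacle, is encryption $E=\encE{e}{RS}$, whose root is always annotated \lowv. The evaluated rights $\mem(RS)=\LR$ coincide with $\mem'(RS)$ because $RS$ names only principal and \lowr key locations, which $\memrel$ keeps identical. If $\refrights \lc \LR$, then $\erase{\refrights}{\cdot}$ replaces the plaintext by $\erasedValue$ and both sides become $\enc{\erasedValue}{n}{\LR}$ (using a shared fresh nonce). If instead $\refrights \nlc \LR$, I would invoke the encryption typing rule \eqref{type:enc}, whose premise $\rightsComp{RS}{R}$ entails that every encrypting key also protects the plaintext, i.e.\ $\LR \subseteq \mem(R)$; hence $\refrights \nlc \LR$ forces $\refrights \nlc \mem(R)$, so by \autoref{lemma:no-tag<=>low-type} the content $e$ evaluates to a \lowv value under $\mem$, the induction hypothesis gives $\erase{\refrights}{\mem(e)}=\erase{\refrights}{\mem'(e)}$, and since $\erase{\refrights}{\cdot}$ recurses through a non-hidden cipher the two ciphertexts erase equally. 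The two points needing care are exactly this interaction between the static constraint $\LR\subseteq\mem(R)$ and the dynamic \highr/\lowr status of the plaintext, and the treatment of freshly generated nonces, which must be coupled across the two evaluations (as they are in the lockstep reductions where this lemma is applied), since $\erase{\refrights}{\cdot}$ deliberately leaves nonces visible.
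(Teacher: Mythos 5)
Your proposal is correct and takes essentially the same route as the paper's own proof: a structural induction on $E$ in which the variable case goes through Item~\ref{safe:variables} of Definition~\ref{def:safe} together with the definition of $\memrel{\refrights}$, the compound non-encryption cases go through the observation that a \lowv{} result forces all subexpressions to be \lowv{} so the induction hypothesis applies, and the encryption case goes through the case split of $\erase{\refrights}{\cdot}$ together with the premise of \eqref{type:enc} (via \autoref{lemma:no-tag<=>low-type}) to show the plaintext is \lowv{} whenever erasure does not hide it. The additional care you take --- equality of the evaluated rights in the two memories, integers and $\NotAValue$ being fixed points of erasure, and the coupling of fresh nonces across the two evaluations --- only makes explicit details that the paper's proof leaves implicit.
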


\begin{proof}
By induction on the structure of $E$.
\begin{enumerate}
 \item 
If $E$ is a variable $x$ with $\evaluates{\mem}{E}{\untagged{v}}$ then due to Item \ref{safe:variables} of Definition \ref{def:safe}, 
we have $\locl{x}{\untagged{v}}$.
By definition of $\mem \memrel{\refrights} \mem'$, we conclude $\evaluates{\mem'}{E}{\untagged{v}}$.
\item
If $E$ is $e_1 + e_2$ with  $\evaluates{\mem}{E}{\untagged{v}}$, due to rule \eqref{rede:l:sum} and \eqref{rede:h:sum}
$v$ is \lowv implies that none of its sub-expressions is \highv, we conclude by induction.
\item For $E=x[n]$, the principle is the same. 
\item For $E=x\{E_1,\ldots,E_n\}$, the principle is also the same. 
\item If $E = \encE{e}{\LR'}$ with $\evaluates{\mem}{E}{v}$. 
If $\refrights \nlpc \LR'$, due to the type rule \eqref{type:enc}, 
$e : B_R''$ with $\mem(R'') \lpc \LR' \lpc \refrights$. 
According to \autoref{lemma:no-tag<=>low-type} $e$ is \lowv, 
we conclude by induction.
If $\refrights \lpc R'$, then $\erase{\refrights}{\enc{v}{n}{\LR'}}=\enc{\_}{n}{\LR'}$ so $v$ does not appear.
\item For expressions which involves principals ($\mathbf{pub}(p)$ and $\mathbf{release}(p)$) we conclude as principals are not stored in \highr location.
\end{enumerate}

\end{proof}

%

\begin{restatable}{proposition}{stronger}\label{thr:stronger}
Let $\attKnowledge_1; \refrights \grants P_1$ and  $\attKnowledge_2; \refrights \grants P_2$ 
be two \wa processes such that
$\forall \pubof{k} \in \refrights,\quad \attKnowledge_i \ngetPower_\refrights \secof{k} $ ($i \in \{1,2\}$) and 
$P_1 \labrel P_2$, we have:
 \begin{enumerate}
  \item if $P_1 \lra{l} P_1'$ 
  then $P_2 \lra{l} P_2'$ and $P_1' \labrel P_2'$ for some $P_2'$. 
  \item if $P_1 \ra P_1'$, then $P_2 \ra^? P_2'$ and $P_1' \labrel P_2'$ for some $P_2'$,
  \end{enumerate}
\end{restatable}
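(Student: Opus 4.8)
The plan is to show that the annotated-equivalence relation $\labrel$ of \autoref{def:final-relation} is itself a (weak) bisimulation, by a case analysis on the reduction fired by $P_1$, split according to whether the reducing thread carries the annotation $\low$ or $\high$. Throughout I would use subject reduction (\autoref{lemma:subred}) to keep the reducts \wa{} and to preserve the invariant that the attacker cannot derive any $\secof{k}$ with $\pubof{k}\in\refrights$, and I would transport evaluations of expressions across $\memrel{\refrights}$-equivalent memories using \autoref{lemma:no-tag=>equiv} and \autoref{lemma:memrel=>eval-equal}.

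First, the low case. If the fired reduction acts on a thread annotated $\low$, then by the definition of $\labrel$ the corresponding thread of $P_2$ carries the syntactically identical command; if the rule is binary (an internal i/o \eqref{red:i/o}, or a channel establishment) then by the channel invariant both participating threads are low, hence identical in $P_2$ as well. I would match the step by the \emph{same} rule in $P_2$, choosing the same fresh names via the permitted alpha-renaming, so that a labeled step is matched by one identical labeled step and an internal step by one identical internal step. It then remains to check $P_1'\labrel P_2'$. For conditionals the guard expressions are low, so by \autoref{lemma:no-tag=>equiv} their values agree up to $\erase{\refrights}{\cdot}$; since $\erase{\refrights}{\cdot}$ is injective on realisable values (equal nonces force equal ciphers, as exploited in \autoref{prop:memrel=>static-equiv}), the same branch is taken on both sides. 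For variable creation \eqref{red:new_var}, assignment \eqref{red:assign}, and i/o, the written value is low and again equal up to $\erase{\refrights}{\cdot}$, so every $\mem^A_i\memrel{\refrights}\mem^B_i$ is preserved. Crucially, \emph{all} labeled (attacker-visible) reductions fall in this case: a high thread cannot fire one, because attacker channels $\ca{c}$ satisfy $\refrights\nlc R_2$ whereas a high thread has $\refrights\lc\pc$ and the i/o typing rules force $\pc=R_2$. Hence for an input the received data lands in a low location and $\attKnowledge_1(\attExpr)$, $\attKnowledge_2(\attExpr)$ agree up to $\erase{\refrights}{\cdot}$ by \autoref{lemma:memrel=>eval-equal}; for an output the emitted value is low and equal up to $\erase{\refrights}{\cdot}$; in both cases the same label is produced and $\attKnowledge^A\memrel{\refrights}\attKnowledge^B$ is maintained.

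Second, the high case. If the fired reduction acts on a thread annotated $\high$, then by the previous paragraph it is necessarily an \emph{internal} step: computation, an update of a high location, or communication on a high channel (which by the channel invariant is with another high thread of the same honest subprocess). I would match it by \emph{zero} steps, taking $P_2'=P_2$, and argue $P_1'\labrel P_2$. Subject reduction keeps the advanced thread(s) \wa{} with annotation $\high$, and because a high thread writes only high locations and high channels carry only high values, no low location changes, so every $\mem^A_i\memrel{\refrights}\mem^B_i$ and $\attKnowledge^A\memrel{\refrights}\attKnowledge^B$ survives. The one bookkeeping subtlety is that replication or a parallel split turns one high thread into two; this is absorbed by the tolerance of \autoref{def:final-relation} for additional high threads (equivalently, by padding with terminated $\nop$ threads). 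The symmetric establishment of a high secure channel, which promotes two identical low threads to $\chanhh{c}$-annotated high threads, is instead handled under the low case and matched by the identical step in $P_2$.

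The main obstacle is the breadth and coherence of this case analysis rather than any single hard inequality. The delicate points are: (i) proving that no $\high$ thread can ever trigger an attacker-observable step, which confines all attacker interaction to the identical low threads and is the real heart of noninterference; (ii) the injectivity-up-to-nonces of $\erase{\refrights}{\cdot}$ needed to turn equal erasures into genuinely matched control flow on the low fragment; and (iii) the thread-count bookkeeping for reductions that create or destroy high threads, which must be reconciled with the shape imposed by \autoref{def:final-relation}. Each of these reduces to an invariant already recorded in \autoref{def:safe} and preserved by \autoref{lemma:subred}, so the proof is ultimately a (long but routine) verification that every reduction rule respects these invariants.
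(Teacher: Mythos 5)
Your overall strategy coincides with the paper's own proof: both arguments show that the annotated-equivalence relation of \autoref{def:final-relation} is closed under reduction by a case analysis on the annotations of the reducing thread(s), matching \lowt{} steps by the identical step in $P_2$ (same branch taken, via \autoref{lemma:no-tag=>equiv}), matching \hight{} internal steps by zero steps, confining every attacker-labelled step to \lowt{} threads (via Items \ref{safe:pc} and \ref{safe:no-high-attacker-chan} of \autoref{def:safe} together with the i/o typing rules), and invoking \autoref{lemma:subred} and \autoref{lemma:memrel=>eval-equal} to maintain the well-annotation, memory-equivalence and attacker-knowledge invariants. Up to that point your sketch and the paper agree in both decomposition and supporting lemmas.

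There is, however, one concrete gap. You claim that any binary reduction involving a \hight{} thread must involve another \hight{} thread, and you accordingly propose zero-step matching for the entire high case. This claim is false for channel establishment: rule \eqref{red:h:open-for} can fire between a \hight{} server thread and a \lowt{} client thread, because the client's typing rule \eqref{type:open-cert-client} only constrains $R_2$ against the client's \progc{} in a way that a public \progc{} satisfies, even when $R_2$ is \highr. This is exactly the mixed case the paper isolates (``If $C_2$ is \lowt''), noting that i/o between mixed threads is indeed impossible but establishment is not. For this step neither of your two matching strategies works as stated: zero-step matching leaves the client thread of $P_2$ \lowt{} and unreduced while its counterpart in $P_1'$ has progressed and become \hight, violating the requirement of \autoref{def:final-relation} that low threads be syntactically identical; and same-step matching is unavailable because the corresponding \hight{} server thread of $P_2$ need not be at an accept command at all, since high threads may differ arbitrarily between the two processes. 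The repair, which is what the paper's treatment amounts to, is to match with zero steps \emph{and re-annotate} the partner client thread of $P_2$ as \hight: its connect command is also typable at the high \progc{} $R_2$, so $\safe{\refrights}{\high}{\mem}{C}$ holds for it, and since pairs of high threads are unconstrained syntactically, the relation is re-established. Your case analysis needs this additional case to be exhaustive; the rest of the argument stands.
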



\begin{proof}
\newcommand{\PA}{P^A}
\newcommand{\PB}{P^B}
To avoid conflict of indexes, we consider $\PA$ and $\PB$ instead of $P_1$ and $P_2$.
\paragraph{When the reduction is a labeled reduction with the attacker}

\begin{itemize}
 \item Case where the reduction rule applied to $\PA$ is \eqref{redatt:pub-chan} 
 on a thread $\thr{x}{\connectPub{\ca{c^A}}{\Chan(S_\bot)_\bot};{C'^A}}$.
 From \eqref{type:connectPub}, we know that $\refrights \lc \pc$.
 From Item \ref{safe:pc} of \autoref{def:safe}, we know that this reduction is possible only when
 $x = \low$.
 Hence there exists the same thread in $\PB$. 
 The reduction on the corresponding thread of $\PB$ leads to $C'^A$ through the same label $\LabIn{c,S}$ 
 up to the name of $\ca{c^B}$,
 but they are equal up to alpha-renaming of the channel name.
Hence we have the condition on commands of \autoref{def:final-relation}.
 Finally the memory is unchanged and the attacker does not learn any value any both process:
 the condition about memories of \autoref{def:final-relation} and memory equivalence of the attackers hold.
 
 
 \item Case where the reduction rule applied to $\PA$ is \eqref{redatt:auth-server-chan-for},
 on a thread $\thr{x}{\connectCCert{\ca{c^A}}{\Chan(S_{R_1})_{R_2}}{k}{P};{C'^A}}$,
 with label $\LabIn{\ca{c^A},\secof{k_s},\pubof{k_c},S,r,r'}$ be the values in the label of the reduction.
 Since $\pubof{k_s} \notin \refrights$ (according to our hypothesis), 
 due to \eqref{redatt:auth-server-chan-for}, $k \notin \refrights$ and
 due to the type rule \eqref{type:open-cert-server}, we have 
 $\refrights \nlpc R_1$ and $\refrights \nlpc R_2$.
 According to Item \ref{safe:pc} of \autoref{def:safe}, the thread is \lowt ($x=\low$).
 So an identical thread exists in $\PB$, and since the label enforces $S$, $R_1$ and $R_2$ to be the sames,
 it reduces to $C'^A$ up to the renaming of the channel name.
 Finally, $\attKnowledge^A$, $\attKnowledge^B$, $\mem^A$ and $\mem^B$ remain the sames. 
 
 \item The case where the reduction rule applied to $\PA$ is \eqref{redatt:auth-client-chan} is similar.
 
  \item Case where the reduction rule applied to $\PA$ is \eqref{redatt:output}
  on a thread $\thr{z}{\outputChan{\ca{c}}{e};C^A}$, with label $\LabOut{\ca{c^A},x}$.
 According to Item \ref{safe:no-high-attacker-chan} of \autoref{def:final-relation}
 the channel is of type $\Chan({S}_{R_1})_{R_2}$ with $\refrights \nlpc R_1$ and $\refrights \nlpc R_2$.
 According to \eqref{type:output}, we have $R_2 \nlpc \pc$ and so $\refrights \nlpc \pc$.
 According to Item \ref{safe:pc} of \autoref{def:safe} the thread is \lowt ($z=\low$).
 Therefore there exists an identical thread in $\PB$ where the same label reduction is possible.
 Moreover, according to \eqref{type:output}, we have $e :R_1$ with $\refrights \nlc R_1$ so 
 according to \autoref{lemma:no-tag<=>low-type}, $\evaluates{\mem^A}{e}{\untagged{v^A}}$ for some $v^A$.
 According to Lemma \ref{lemma:no-tag=>equiv},
 we have $\erase{\refrights}{v^B}=\erase{\refrights}{v^A}$, hence $\attKnowledge^A \memrel \attKnowledge^B$.
 
 \item Case where the reduction rule applied to $\PA$ is \eqref{redatt:input}
 on a thread $\thr{z}{\inputChanII{\ca{c}}{x};C^A}$ with label $\LabIn{\ca{c^A},f}$.
 According to Item~\ref{safe:no-high-attacker-chan} of \autoref{def:final-relation}
  the channel is of type $\Chan({S}_{R_1})_{R_2}$ with $\refrights \nlpc R_1$ and $\refrights \nlpc R_2$.
  According to \eqref{type:input}, we have $R_2 \nlpc \pc$ and so $\refrights \nlpc \pc$.
  According to Item \ref{safe:pc} of \autoref{def:safe} the thread is \lowt ($z=\low$).
  So there is an identical thread in $\PB$ (which reduces with the same label).
  According to \autoref{lemma:memrel=>eval-equal} we have
  $\erase{\refrights}{\attKnowledge^A(f)}=\erase{\refrights}{\attKnowledge^A(f)}$.
  Therefore, the new memories $\mem'^A$ and $\mem'^B$ are equivalent.  
\end{itemize}

\paragraph{When the reduction is made on a thread where $\safe{\refrights}{\low}{\mem}{C}$}

In that case, we use the fact 
for any thread $\thr{\low}{C^A}$ on some device with memory $\mem^A$ on which the reduction happens, 
there exists a thread $\thr{\low}{C^B}$ on the other process with memory $\mem^B$ 
such that $C^A \simeq C^B$ and $\mem^A \memrel{\refrights} \mem^B$.
 
We prove all the requirements sequentially: first, we prove that $\refrights$ reduces to the same $\refrights'$
in both process.
Secondly, we prove the commands reduce to equivalent commands.
Thirdly, we prove the equivalence of the new memories.

\begin{enumerate}
 \item {\bf We have ${\refrights}'^A = {\refrights}'^B$}
 The one reduction that changes $\refrights$ is \eqref{red:h:nwp}.
 When reference rights change in a reduction due to \eqref{red:h:nwp}, 
 the thread is \lowt due to the typing rule so the reduction can also occurs in process $B$.
 We conclude by renaming the new key variables of both processes to the same name. 

\item {\bf Reductions rules for $C^B$ is the same as for $C^A$: $C'^A=C'^B$  (or both new threads are \hight)}

There are two cases for ``if'', and two cases for ``dec''
\begin{itemize}
	\item The rule  $C^A \ra C'^A$ is \eqref{red:h:if1} or \eqref{red:h:if0}. 
	In these cases, according to Lemma \ref{lemma:no-tag<=>low-type},
	the evaluated expressions was \highr.
	So it is also true for $C^B$ therefore 
	the conditional expression of $C^B$ also evaluates 
	to a \highv value according to Lemma \ref{lemma:no-tag<=>low-type}.
	On the other hand, the new thread $C'^A$ is $\high$
	so whatever $C^B$ reduces to the then or the else branch, 
	the condition about commands of \autoref{def:final-relation} is satisfied.
	\item The rule $C^A \ra C'^A$ is \eqref{red:l:if1} or \eqref{red:l:if0}.
	This means the expression evaluates to a \lowv value.
	In addition, the evaluation returns an integer which implies the expression
	can only contains $+$, integers and variables.
	This implies that the results is only dependent on \lowv values:
	since \lowv values in $C^A$ and $C^B$ are the same both reduce according to the same rule.
	\item The rule is \eqref{red:l:dec0}, \eqref{red:ll:dec1} or \eqref{red:lh:dec1}.
	Since the value to decrypt is \lowv, the value in $A$ and $B$ differ only on \highv sub-terms
	which means that the encryption key is the same, so both processes take the same branch.
	\item The rule is \eqref{red:h:dec0} or \eqref{red:hh:dec1}: 
	whatever \eqref{red:l:dec0} or \eqref{red:ll:dec1} is taken the thread become \hight on $A$ and $B$.
\end{itemize}
\item {\bf New values in memories are such that $\mem'^A \memrel{\refrights} \mem'^B$.}
First of all, because the rights are the same in $C^A$ and $C^B$
and premisses on the rules depends only on rights, the high or low rule is applied for both command.
	\begin{itemize}
	\item Rules that store values into \highr memory locations
          \eqref{red:h:new_var}, \eqref{red:h:aff}, 
          \trlinebreak
	\eqref{red:lh:dec1}, \eqref{red:hh:dec1}:
	they affect \highr locations so condition $\mem'^A \memrel{\refrights} \mem'^B$ holds whatever is the stored value.
	\item Rules that store values into \lowr memory locations: 
	\eqref{red:l:new_var}, \eqref{red:l:aff}, \eqref{red:ll:dec1}, \eqref{red:let}
	according to Item \ref{safe:variables} of Definition \ref{def:safe}, these values are not \highv
	so, according to Lemma \ref{lemma:no-tag=>equiv}, we have
	$\erase{\refrights}{v^A}=\erase{\refrights}{v^B}$.
	\item Rule \eqref{red:l:register1}: $K^A$ is in a \lowr memory location so $K^A=K^B$ 
	and so the new principals are identical.  
	\item Creates a principal \eqref{red:h:nwp} and \eqref{red:l:nwp}: 
	the new value contains fresh values and rights that are identical in $A$ and $B$.
 	\end{itemize}
\end{enumerate}

\paragraph{When the reduction is done on a thread $C$ where $\safe{\refrights}{\high}{\mem}{C}$}

\begin{enumerate}
\item Internal reductions does not change the equivalence class of the process 
(so the other process can simulate it with no reduction).
Indeed, first, there is no reduction that reduces a \hight thread to a \lowt thread: 
the reduced command $C'$ does not impact the equivalence class of the process.
Secondly, the reductions that create or change memories location are: 
\eqref{red:h:new_var}, \eqref{red:hh:dec1} and \eqref{red:h:aff}.
In each case, the type rules states that $\pc \lpc R$ where $R$ is the right of the memory.
Due to Item \ref{safe:pc} of Definition \ref{def:safe}, we have
that $\pc$ is high. 
Hence $R$ is high too, so the rule either creates or change a memory location with flag $\high$.
\end{enumerate}


\paragraph{When the reduction is made between two threads $C_1$ and $C_2$ where $C_1$ is \hight.}

\begin{itemize}
 \item If both $C_1$ and $C_2$ are \hight. The only possible reduction are \eqref{red:h:i/o} and \eqref{red:h:open-for}.
In case of \eqref{red:h:open-for}, from Item \ref{safe:low-pc-low-chan}, we have $\refrights \lc \pc$.
The types rules \eqref{type:open-cert-client} and \eqref{type:open-cert-server}
enforces that $R_1$ and $R_2$ are \highr. Due to Item \ref{safe:channel}, the channel is annotated with $\chanhh{c}$.
In case of \eqref{red:h:i/o}, the type rules \eqref{type:input} enforces that all received values are stored
into \highv memory. Since \hight threads and \highr memory locations are free to differ between the two processes,
the equivalence is preserved after reduction.

\item If $C_2$ is \lowt. Any annotation variant of the rule \eqref{red:i/o} 
is not possible when one thread is \lowt and the other is \hight.
Indeed, according to Item \ref{safe:channel} of Definition \ref{def:safe}, 
there is no $c_\high^\high$ channel on the \lowt thread.
On the other \hight thread, a $c_\low^\low$ or $c_\low^\high$ channel has a \lowr type according to 
Item \ref{safe:channel} of Definition \ref{def:safe},
so according to \eqref{type:output} and \eqref{type:input} they cannot be used one this thread.

So only establishing connection is possible with rule \eqref{red:h:open-for} 
where the client is on the \lowt thread $C_2$ 
and the server is on the \hight thread $C_1$.
Due to the reduction rule \eqref{red:h:open-for}, $C_2'$ is a \hight thread
and $C'_1$ is \hight too so the equivalence is preserved in both processes. 
\end{itemize}

\paragraph{When the reduction is made on two threads which are \lowt.} 
\

For all communication opening rules, rights in both process are identical since the public keys of the rights set are public.
So both processes thread reduces either to a \hight thread or a \lowt thread.
Hence the two reduced processes remains in the same equivalent class.

For the input/output reduction rule \eqref{red:l:i/o}, according to Item~\ref{safe:low-pc-low-chan} of Definition \ref{def:safe},
the channel type can only be $c^\low_\low$ or $c^\high_\low$ since the threads are \lowt. 
On the output thread, \highv values are sent only on $c^\high_\low$ channels.
According to Item~\ref{safe:channel}, on the input side $c^\high_\low$ has type $\Chan(S_R)_{R'}$ with $\refrights \lc R$.
According to \eqref{type:input}, the value is stored in a \highv memory location so
even if the values differ between $P^A$ and $P^B$ memories are still equivalent.
On the other hand, if the values $v^A$ and $v^B$ sent on the channels are not \highv then they have to be equivalence:   
$\erase{\refrights}{v^A}=\erase{\refrights}{v^B}$. 
\end{proof}

\implylabelled*

\begin{proof}
 The first item is a direct consequence of \autoref{prop:memrel=>static-equiv}
 The two last items are a consequence of \autoref{thr:stronger} with the fact that
  when the annotation is \standard, according to \autoref{lemma:subred},
 we have  that $\forall \pubof{k} \in \refrights,\quad \attKnowledge_i \ngetPower_\refrights \secof{k} $ 
 is preserved by the reduction, so the proposition can be generalized to any number of reduction steps.
\end{proof}


\subsection{Main theorems}

\thmB*

\begin{proof}
Since there is no untyped attacker, we can use non-\standard annotations.
We consider the one such that at the beginning $\refrights$ contains all the keys of $R$ which are present in some memory,
and which add a key in $\refrights$ if and only if this key is in $R$.
From \autoref{thr:stronger} on this annotated process,
we get the existence of $Q$ such that $Q \labequiv P''$ which imply that 
for all memory $\mem_i$ in $P''$ and $\mem'_i$ in $Q$, we have $\mem_i \memrel{R} \mem'_i$.
By definition of $\memrel{R}$, we conclude.
\end{proof}

\section{Annotated rules}\label{sec:ext-rules}
\small
 We define $\topc(x,y)$ to be $\low$ except if $x=y=\high$.

\begin{gather}
\inferrule{\fresh{(k)} \\ \fresh(P')
 }{
 \attKnowledge;  \refrights \grants \devA{\mem}{\thr{x}{\newPrin{P}{R} ; ~C}}
\\ \ra \attKnowledge; \refrights \cup \{ \pubof{k} \}
\grants \devA{\mem \addm{ \loc{P'}{\prinv{k}{R}}}}{\thr{x}{C\{P'/P\}}}
 } \tag{high newPrin}\label{red:h:nwp}
 \\
  \inferrule{\fresh{(k)} \\ \fresh{P'}
 }{
 \attKnowledge;  \refrights \grants \devA{\mem}{\thr{x}{\newPrin{P}{R} ; ~C}}
\\ \ra \attKnowledge; \refrights \grants \devA{\mem \addm{ \loc{P'}{\prinv{k}{R}}}}
{\thr{x}{C\{P'/P\}}}
 } \tag{low newPrin}\label{red:l:nwp}  
 \\
   \inferrule{
\evaluates{\mem}{e}{\tagx{\enc{\prinv{k_1}{R_1}}{R}{n}}{a}} \\
\evaluates{\mem}{P_2}{\prinv{k_2}{R_2}} \\
\pubof{k_2} \in R_1 \\
\fresh(P_3)
 }{
 \devA{\mem}{\thr{x}{\register{P_2}{e}{P_1}{C_1}{C_2}}} \\
 \ra \devA{\mem \addm{\locl{P_3}{\prinv{k_1}{R_1}}}}{\thr{x}{C_1\{P_3 / P_1\}}}
 }\tag{low register}\label{red:l:register1}
  \end{gather}
 
  \begin{gather}
 \inferrule{
 \evaluates{\mem}{p}{\prinv{k}{r}} \\
 \evaluates{\mem}{e}{\tagx{\enc{v}{n}{RS}}{a}} \\
  RS \lpc \mem(R_1)
  \\ \pubof{k} \in \mem(R_1) \\
  \fresh(y)
 \\ \refrights \nlpc \mem(R_1) 
 }{ 
 \attKnowledge; \refrights \grants \devA{\mem}{\thr{z}{\decrypt{p}{e}{x}{S_{R_1}}{C_1}{C_2}}}
 \\ \ra \attKnowledge; \refrights \grants \devA{\mem \addm{\locl{y}{v}}}{\thr{z}{C_1 \{y /x \}}}
 }\tag{low dec\_true}\label{red:ll:dec1}
 \\
 \inferrule{
 \evaluates{\mem}{P}{\prinv{k}{r}} \\
 \evaluates{\mem}{e}{\tagged{\enc{v}{n}{RS}}} \\
  RS \lpc \mem(R_1)
  \\ \pubof{k} \in \mem(R_1) \\
  \fresh(y)
 \\ \refrights \nlpc \mem(R_1)
 }{
 \attKnowledge; \refrights \grants \devA{\mem}{\thr{z}{\decrypt{P}{e}{x}{S_{R_1}}{C_1}{C_2}}}
\\ \ra \attKnowledge; \refrights \grants \devA{\mem\addm{\locl{y}{v}}}{\thr{\high}{C_1 \{y /x \}}}
 }\tag{high thread/low var dec\_true}\label{red:hl:dec1}
 \\
\inferrule{
 \evaluates{\mem}{P}{\prinv{k}{r}} \\
 \evaluates{\mem}{e}{\enc{v}{n}{RS}} \\
  RS \lpc \mem(R_1)
  \\ \pubof{k} \in \mem(R_1) \\
  \fresh(y)
 \\ \refrights \lc \mem(R_1)
 }{ 
 \attKnowledge; \refrights \grants \devA{\mem}{\thr{z}{\decrypt{P}{e}{x}{S_{R_1}}{C_1}{C_2}}}
 \\ \ra \attKnowledge; \refrights \grants \devA{\mem\addm{\loch{y}{v}}}{\thr{z}{C_1\{y /x \}}}
 }\tag{low thread/high var dec\_true}\label{red:lh:dec1}
 \\
 \inferrule{
\evaluates{\mem}{P}{\prinv{k}{r}} \\
 \evaluates{\mem}{e}{\tagged{\enc{v}{n}{RS}}} \\
  RS \lpc \mem(R_1)
  \\ \pubof{k} \in \mem(R_1) \\
  \fresh(y)
 \\ \refrights \lpc \mem(R_1)
 }{ 
 \attKnowledge; \refrights \grants \devA{\mem}{\thr{z}{\decrypt{P}{e}{x}{S_{R_1}}{C_1}{C_2}}}
\\ \ra \attKnowledge; \refrights \grants \devA{\mem\addm{\loch{y}{v}}}{\thr{\high}{C_1\{y /x \}}}
 }\tag{high dec\_true}\label{red:hh:dec1}
 \\
 \inferrule{
\evaluates{\mem}{P}{\prinv{k}{r}} \\
 \evaluates{\mem}{e}{{\enc{v}{n}{RS}}} \\
 (RS \nlpc \mem(R_1) \vee \pubof{k} \notin \mem(R_1))
 }{ 
 \attKnowledge; \refrights \grants \devA{\mem}{\thr{z}{\decrypt{P}{e}{x}{S_{R_1}}{C_1}{C_2}}}
 \\ \ra \attKnowledge; \refrights \grants \devA{\mem}{\thr{z}{C_2}}
 }\tag{low dec\_false}\label{red:l:dec0}
 \\
 \inferrule{
\evaluates{\mem}{P}{\prinv{k}{r}} \\
 \evaluates{\mem}{e}{\tagged{\enc{v}{n}{RS}}} \\
  (RS \nlpc \mem(R_1) \vee \pubof{k} \notin \mem(R_1))
 }{ 
 \attKnowledge; \refrights \grants \devA{\mem}{\thr{z}{\decrypt{P}{e}{x}{S_{R_1}}{C_1}{C_2}}}
\\ \ra \attKnowledge; \refrights \grants \devA{\mem}{\thr{\high}{C_2}}
 }\tag{high dec\_false}\label{red:h:dec0}
 \end{gather}

%
  \begin{gather}
 \inferrule{
 \evaluates{\mem}{e}{\tagx{v}{t}} \\
 \fresh(y)
 \\ \refrights \subseteq \mem(R) 
 }{
 \attKnowledge; \refrights \grants \devA{\mem}{\thr{z}{\newvar{x}{S_R}{e} ; ~C}}
 \\ \ra \attKnowledge; \refrights \grants \devA{\mem \addm{\loch{x}{\tagged{v}}}}{\thr{z}{C \{y/x\}}} 
 } \tag{high new}\label{red:h:new_var}
 \\
 \inferrule{
 \evaluates{\mem}{e}{\tagx{v}{t}}
 \\ \refrights \varsubsetneq \mem(R)
 \\\fresh(y)
 }{
 \attKnowledge; \refrights \grants \devA{\mem}{\thr{z}{\newvar{x}{S_R}{e} ; ~C}}
 \\ \ra \attKnowledge; \refrights \grants \devA{\mem \addm{\locl{x}{\tagx{v}{t}}}}{\thr{z}{C}} 
 } \tag{low new} \label{red:l:new_var}
 \\
 \inferrule{
 \evaluates{(\mem\addm{\locl{x}{\tagx{v}{t}}})}{e}{\tagx{v'}{t'}}
 }{
 \attKnowledge; \refrights \grants \devA{\mem \addm{\locl{x}{v}}}{\thr{z}{\assign{x}{e} ; ~C}}
 \\ \ra \attKnowledge; \refrights \grants \devA{\mem, \locl{x}{\tagx{v'}{t'}}}{\thr{z}{C}} 
 } \tag{low assign}\label{red:l:aff}
 \\
 \inferrule{
 \evaluates{(\mem\addm{\loch{x}{\tagged{v}}})}{e}{\tagx{v'}{t}}
 }{
 \attKnowledge; \refrights \grants \devA{\mem \addm{\loch{x}{\tagged{v}}}}{\thr{z}{\assign{x}{e} ; ~C}}
 \\ \ra \attKnowledge; \refrights \grants \devA{\mem \addm{ \loch{x}{\tagged{v'}}}}{\thr{z}{C}} 
 }\tag{high assign} \label{red:h:aff}
 \\
 \inferrule{
 \evaluates{\mem}{e_1}{\untagged{v}} 
 \\ \evaluates{\mem}{e_2}{\untagged{v}} 
}{
 \attKnowledge; \refrights \grants \devA{\mem}{\thr{z}{\cond{e_1}{e_2}{C_1}{C_2}}}
 \\ \ra \attKnowledge; \refrights \grants \devA{\mem}{\thr{z}{C_1}}
 } \tag{low if\_true}\label{red:l:if1}
 \\
 \inferrule{
 \evaluates{\mem}{e_1}{\untagged{v}} 
 \\ \evaluates{\mem}{e_2}{\untagged{v'}} 
 \\ v \neq v'
 }{
 \attKnowledge; \refrights \grants \devA{\mem}{\thr{z}{\cond{e_1}{e_2}{C_1}{C_2}}}
 \\ \ra \attKnowledge; \refrights \grants \devA{\mem}{\thr{z}{C_2}}
 } \tag{low if\_false}\label{red:l:if0}
 \\
 \inferrule{
 \evaluates{\mem}{e_1}{\tagx{v}{t}}
 \\ \evaluates{\mem}{e_2}{\tagx{v'}{t'}}
 \\ \topc(t,t')=\high
 }{
 \attKnowledge; \refrights \grants \devA{\mem}{\thr{z}{\cond{e_1}{e_2}{C_1}{C_2}}}
 \\ \ra \attKnowledge; \refrights \grants \devA{\mem}{\thr{\high}{C_1}}
 }\tag{high if\_true} \label{red:h:if1}
 \\
 \inferrule{
\evaluates{\mem}{e_1}{\tagx{v}{t}}
 \\ \evaluates{\mem}{e_2}{\tagx{v'}{t'}}
 \\ v \neq v' \\ \topc(t,t')=\low
 }{
 \attKnowledge; \refrights \grants \devA{\mem}{\thr{z}{\cond{e_1}{e_2}{C_1}{C_2}}}
 \\ \ra \attKnowledge; \refrights \grants \devA{\mem}{\thr{\high}{C_2}}
 }\tag{high if\_false} \label{red:h:if0}
 \end{gather}

 \begin{gather}
\inferrule{ 
\fresh(c)
\\ \evaluates{\mem_1}{P_s}{\prinv{K_s}{R_s}}
\\ \evaluates{\mem_2}{P_c}{\prinv{K_c}{R_c}}
\\ \mem_1({R_1}) =  \mem_2({R'_1})
\\ \mem_1(R_2) = \mem_2(R'_2)
\\ \evaluates{\mem_1}{k_c}{\pubof{K_c}}
\\ \evaluates{\mem_2}{k_s}{\pubof{K_s}}
\\ \refrights \lpc \mem_1(R_1) 
\\ \refrights \lpc \mem_1(R_2) 
  }{
   \attKnowledge; \refrights \grants \devA{\mem_1}{\thr{z}{\acceptCCert{c_1}{\Chan({S}_{R_1})_{R_2}}{k_c}{P_s}; ~C_1 }}
   \\ | \devB{\mem_2}{\thr{z'}{\connectCCert{c_2}{\Chan({S}_{R'_1})_{R'_2}}{k_s}{P_c} ; ~C_2}}
 \\ \ra 
   \attKnowledge; \refrights \grants \devA{\mem_1}{\thr{\high}{C_1 \{\chanhh{c}/c_1\}}}
 \\ | \devB{\mem_2}{\thr{\high}{C_2 \{\chanhh{c}/c_2\}}}
 } \tag{high open\_priv} \label{red:h:open-for}
 \\
  \inferrule{ \fresh(c)
\\ \evaluates{\mem_1}{P_s}{\prinv{K_s}{R_s}}
\\ \evaluates{\mem_2}{P_c}{\prinv{K_c}{R_c}}
\\ \mem_1({R_1}) =  \mem_2({R'_1})
\\ \mem_1(R_2) = \mem_2(R'_2)
\\ \evaluates{\mem_1}{k_c}{\pubof{K_c}}
\\ \evaluates{\mem_2}{k_s}{\pubof{K_s}}
\\ \refrights \nlpc \mem_1(R_1) 
\\ \refrights \nlpc \mem_1(R_2) 
}{
   \attKnowledge; \refrights \grants \devA{\mem_1}{\thr{z}{\acceptCCert{c_1}{\Chan({S}_{R_1})_{R_2}}{k_c}{P_s}; ~C_1}}
   \\ | \devB{\mem_2}{\thr{z'}{\connectCCert{c_2}{\Chan({S}_{R'_1})_{R'_2}}{k_s}{P_c} ; ~C_2}}
 \\ \ra 
   \attKnowledge; \refrights \grants \devA{\mem_1}{\thr{z}{C_1 \{\chanll{c}/c_1\}}}
\\ | \devB{\mem_2}{\thr{z'}{C_2 \{\chanll{c}/c_2\}}}
 } \tag{low open\_priv} \label{red:ll:open-for}
 \\
\inferrule{ 
  \fresh(c)
\\ \evaluates{\mem_1}{P_s}{\prinv{K_s}{R_s}}
\\ \evaluates{\mem_2}{P_c}{\prinv{K_c}{R_c}}
\\ \mem_1({R_1}) =  \mem_2({R'_1})
\\ \mem_1(R_2) = \mem_2(R'_2)
\\ \evaluates{\mem_1}{k_c}{\pubof{K_c}}
\\ \evaluates{\mem_2}{k_s}{\pubof{K_s}}
\\ \refrights \nlpc \mem_1(R_1) 
\\ \refrights \lpc \mem_1(R_2) 
 }{
   \attKnowledge; \refrights \grants \devA{\mem_1}{\thr{z}{\acceptCCert{c_1}{\Chan({S}_{R_1})_{R_2}}{k_c}{P_s}; ~C_1}}
   \\ | \devB{\mem_2}{\thr{z'}{\connectCCert{c_2}{\Chan({S}_{R'_1})_{R'_2}}{k_s}{P_c} ; ~C_2}}
 \\ \ra 
   \attKnowledge; \refrights \grants \devA{\mem_1}{\thr{z}{C_1 \{\chanlh{c}/c_1\}}}
\\ | \devB{\mem_2}{\thr{z'}{C_2 \{\chanlh{c}/c_2\}}}
 } \tag{low thread/high value open\_priv} \label{red:lh:open-for}
\\
 \inferrule{
 \evaluates{\mem_1}{e}{v} \\
 \fresh(y)
 }{
 \attKnowledge; \refrights \grants \devA{\mem_1}{\thr{z}{\outputChan{\chanxh{c}}{E}; ~C_1}}
\\ | \devB{\mem_2}{\thr{z'}{\inputChan{\chanxh{c}}{x}; ~C_2}}
 \\ \ra
 \attKnowledge; \refrights \grants \devA{\mem_1}{\thr{z}{C_1}}
\\ | \devB{\mem_2 \addm{\loch{y}{\tagged{v}}}}{\thr{z'}{C_2 \{ y/x\}}}
 }\tag{high i/o} \label{red:h:i/o}
 \\
 \inferrule{
 \evaluates{\mem_1}{e}{v}\\
 \fresh(y)
 }{
 \attKnowledge; \refrights \grants \devA{\mem_1}{\thr{z}{\outputChan{\chanll{c}}{e}; C_1}}
\\ | \devB{\mem_2}{\thr{z'}{\inputChan{\chanll{c}}{X}; C_2}}
\\ \ra
 \attKnowledge; \refrights \grants \devA{\mem_1}{\thr{z}{C_1}}
\\ | \devB{\mem_2 \addm{\locl{y}{v}}}{\thr{z'}{C_2 \{y/x\}}}
 }\tag{low i/o} \label{red:l:i/o}
\end{gather}
%
%
\begin{gather}
   \inferrule{
   \evaluates{\mem}{P}{\prinv{k}{R}} \\ R \neq \{\} \\ \fresh(n)
 }{ 
\evaluates{\mem}{\release{P}}{\untagged{\enc{\prinv{k}{R}}{n}{R}}}
 }\tag{\anne{release}}\label{rede:a:release}
\\
\inferrule{
 \evaluates{\mem}{e}{\tagx{v}{t}}
 \\ \fresh(n)
 }{ 
\evaluates{\mem}{\encE{e}{RS}}{\untagged{\enc{\tagx{v}{t}}{n}{RS})}} 
 } \tag{\anne{enc}}\label{rede:a:enc}
\\
 \inferrule{
\evaluates{\mem}{e_1 }{\tagged{v_1}}
\\ \evaluates{\mem}{e_2 }{\tagx{v_2}{t}}
}{
 \evaluates{\mem}{e_1 + e_2}{\tagged{v_1+v_2}} 
 }\tag{\higha{sum}}\label{rede:h:sum}
 \\
 \inferrule{
\evaluates{\mem}{e_1 }{\untagged{v_1}}
\\ \evaluates{\mem}{e_2 }{\untagged{v_2}}
}{
 \evaluates{\mem}{e_1 + e_2}{\untagged{v_1+v_2}} 
 }\tag{\lowa{sum}}\label{rede:l:sum}
\end{gather}
%
%
\begin{gather}
\inferrule{
\fresh(c)
}{ \attKnowledge; \refrights \grants \devA{\mem_1}{\thr{x}{\connectPub{c_1}{\Chan(S_\bot)_\bot} ; ~C_1}}  \lra{\LabIn{c,S}}
\\   \attKnowledge; \refrights \grants \devA{\mem_1}{\thr{x}{C_1 \{\ca{c}/c_1\}}}
 }
 \tag{\atta{open\_public}}\label{redatt:pub-chan}
 \\
 \inferrule{ 
 \evaluates{\attKnowledge}{f}{\{ \secof{k_s},\pubof{k_c},S,r,r' \}}
 \\ \evaluates{\mem_2}{R}{r}
 \\ \evaluates{\mem_2}{R'}{r'} 
\evaluates{\mem_2}{k}{\pubof{k_s}} 
 \\ \evaluates{\mem_2}{P}{\prinv{k_c}{r'}} 
 \\ \fresh(c)
 }{
   \attKnowledge; \refrights \grants \devA{\mem_2}{\thr{x}{\connectCCert{c_2}{\Chan(S_{R_1})_{R_2}}{k}{P}; ~C_2}}
 \\ \lra{\LabIn{c,f} }
  \attKnowledge; \refrights \grants \devA{\mem_2}{\thr{x}{C_2 \{\ca{c}/c_2\}}}
 } \tag{\atta{open\_priv\_client}} \label{redatt:auth-client-chan}
\\
 \inferrule{ 
  \evaluates{\attKnowledge}{f}{\{ \pubof{k_s},\secof{k_c},S,r,r' \}}
 \\ \evaluates{\mem_1}{k}{\pubof{k_c}} 
 \\ \evaluates{\mem_1}{P}{\prinv{k_s}{r'}} 
 \\ \evaluates{\mem_1}{R}{r}
 \\ \evaluates{\mem_1}{R'}{r'} 
 \\ \fresh(c)
 }{
   \attKnowledge; \refrights \grants \devA{\mem_1}{\thr{x}{\acceptCCert{c_1}{\Chan(S_{R_1})_{R_2}}{k}{P}; ~C_1}}
 \\ \lra{\LabIn{c,f}}
   \attKnowledge; \refrights \grants \devA{\mem_1}{\thr{x}{C_1 \{\ca{c}/c_1\}}}
 } \tag{\atta{open\_server}} \label{redatt:auth-server-chan-for}
 \\
 \inferrule{
 \evaluates{\mem_1}{e}{\tagx{v}{t}}
  }{
 \attKnowledge; \refrights \grants \devA{\mem_1}{\thr{z}{\outputChan{\ca{c}}{e}; C_1}}
\\ \lra{\LabOut{c,m}} 
 \attKnowledgeAdd{(\locl{m}{\untagged{v}})}; \refrights \grants \devA{\mem_1}{\thr{z}{C_1}}
 } \tag{\atta{output}} \label{redatt:output}
 \\
 \inferrule{
 \evaluates{\attKnowledge}{\attExpr}{\untagged{v}}
 \fresh(y)
 }{
 \attKnowledge; \refrights \grants \devA{\mem_2}{\thr{z}{\inputChan{\ca{c}}{x}; ~C}}
\\ \lra{\LabIn{c,\attExpr}}
 \attKnowledge; \refrights \grants \devA{\mem_2\addm{\locl{y}{v}}}{\thr{z}{C \{ y / x\}}}
 } \tag{\atta{input}} \label{redatt:input}
 \end{gather}


\end{document}